\documentclass[journal]{elsarticle}

\preprintfalse

\journal{Advances in Applied Mathematics}

\usepackage{lipsum}
\makeatletter
\def\ps@pprintTitle{%
 \let\@oddhead\@empty
 \let\@evenhead\@empty
 \def\@oddfoot{}%
 \let\@evenfoot\@oddfoot}
\makeatother

\newcommand\blfootnote[1]{%
  \begingroup
  \renewcommand\thefootnote{}\footnote{#1}%
  \addtocounter{footnote}{-1}%
  \endgroup
}

\newif\ifcomplete
\completetrue

\ifcomplete
\onecolumn
\fi

\usepackage{graphicx}
\usepackage{caption}

\usepackage{rotating}
\usepackage{amsmath,amsthm, amssymb,accents,mathalfa}
\usepackage{fancyhdr}
\usepackage{subfigure}
\usepackage{mathdots}
\usepackage{algorithm}
\usepackage{algorithmic}
\usepackage{comment}
\usepackage[final,notref,notcite]{showkeys}

\usepackage{tikz}
\usetikzlibrary{arrows,snakes,,fit,calc,shapes,backgrounds} 
\tikzstyle{box} = [draw, rectangle, rounded corners, thick, node distance=7em, text width=6em, text centered, minimum height=3.5em]
\tikzstyle{container} = [draw, rectangle, dashed, inner sep=2em]\tikzstyle{line} = [draw, thick, -latex']
\usepackage{graphics}
\usepackage{graphicx}
\usepackage[latin1]{inputenc} 
\usepackage{graphics}
\usepackage{enumerate,mathrsfs}


\usepackage{amsmath, amssymb}

\relpenalty=\maxdimen

\def\isodag{\mathrm{ISODAG}}
\def\edge{\mathrm{edge}}
\def\X{Y}
\def\Bl{\mathrm{Bl}}
\def\El{\mathrm{El}}

\newtheorem{definition}{Definition}

\newtheorem{example}{Example}
\newtheorem{problem}{Problem}
\newtheorem{remark}{Remark}

\newtheorem{theorem}{\bf Theorem}

\newtheorem{proposition}{\bf Proposition}

\newcommand{\dm}[0]{\textup{dim}}

\newcommand{\mpf}[0]{\mathfrak{p}}
\newcommand{\mq}[0]{\mathfrak{q}}

\def\ci{\perp\!\!\!\perp}

\mdseries 

 \usepackage{cancel}


\def\mdef{\stackrel{\vartriangle}{=}}
\def\mLoc{\mathrm{Loc}}
\def\mloc{\mathrm{loc}}
\def\mFac{\mathrm{Fac}}
\def\Imf{\mathrm{Im}f}
\def\Imfp{\mathrm{Im}f}

\def\dperp{\perp\!\!\!\perp}

\def\CC{\mathbb{C}}
\def\ZZ{\mathbb{Z}}
\def\nd{\mathbf{nd}}
\def\p{\mathbf{pa}}
\def\gb{Gr$\ddot{\textup{o}}$bner\,}
\def\dag{\mathrm{dag}}

\newcommand*{\mdot}{%
  \accentset{\mbox{\Large\bfseries\hspace{-.1cm} .}}\,}

\def\SN{\mathcal{N}}
\def\SNP{\mathcal{N}^{+}}

\def\EE{\mathbb{E}}

\def\eqdef{\triangleq}

\hyphenation{op-tical net-works semi-conduc-tor}

\begin{document}

\begin{frontmatter}

\title{Algebraic Methods of Classifying Directed Graphical Models}

 \author[label1]{Hajir~Roozbehani}
 \address[label1]{Department of Aeronautics and Astronautics at MIT. \mbox{e-mail:~{\ttfamily hajir@mit.edu}}}
 \author[label2]{Yury~Polyanskiy}
 \address[label2]{Department
	of Electrical Engineering and Computer Science,MIT, Cambridge, MA 02139 USA. \mbox{e-mail:~{\ttfamily yp@mit.edu}}}


\address{}

\begin{abstract} 
Directed acyclic graphical models (DAGs) are often used to describe common structural properties in a family of probability distributions. This paper addresses the question of classifying DAGs up to an isomorphism. By considering Gaussian densities, the question reduces to verifying equality of certain algebraic varieties.   
A question of computing equations for these varieties has been previously raised in the literature. Here it is shown
that the most natural method adds spurious components with singular principal minors, proving a conjecture of Sullivant.
This characterization is used to establish an algebraic criterion for isomorphism, and to provide a randomized algorithm for
checking that criterion. Results are applied to produce a list of the isomorphism classes of tree models 
on 4,5, and 6 nodes. Finally, some evidence is provided to show that projectivized DAG varieties contain useful information in the sense that their relative embedding is closely related to efficient inference. 
\end{abstract}


\end{frontmatter}
\blfootnote{The research was supported by the Center for Science of Information (CSoI),
an NSF Science and Technology Center, under grant agreement CCF-09-39370 and the NSF 
CAREER award under grant agreement CCF-12-53205.}

%

%

\section{Introduction}

Consider two directed graphical models (or directed acyclic graphs, DAGs) on random variables $(A,B,C)$:
\begin{equation}\label{eq:thr}
		A \to B \to C \qquad B \leftarrow A \to C   
\end{equation}
(See~\cite{lau96} for background on graphical models.) In this paper, we will say that these two models
are \textit{isomorphic} (as graphical models). Roughly, this means that after relabeling ($A\leftrightarrow B$), the two resulting models describe the same collection of joint distributions
$P_{A,B,C}$. Note that the so defined isomorphism notion is weaker than the (directed) graph isomorphism: the
graphs in~\eqref{eq:thr} are not isomorphic.  

On
the other hand, there does not exist any relabeling making~\eqref{eq:thr} equivalent to 
\begin{equation}\label{eq:thr2}
		B \to C \leftarrow A 
\end{equation}
In fact, a simple
exercise in $d$-separation criterion shows that~\eqref{eq:thr} and~\eqref{eq:thr2} list all possible isomorphism classes
of directed \textit{tree} models
on three variables.  However, note that the above DAGs are all isomorphic as undirected graphs. 

The goal of this paper is to provide (computational) answer to:
\textit{What are the isomorphism classes of directed graphical models on $n$ nodes?}

Note that when variables $(A,B,C)$ are jointly Gaussian and zero-mean, then conditions such as~\eqref{eq:thr} can be
stated as algebraic constraints on the covariance matrix:
\begin{equation}\label{eq:thr3}
		\EE[AB] \EE[BC] = \EE[AC] \EE[B^2]. 
\end{equation}
This suggests that checking isomorphism of models can be carried out via algebraic methods. Indeed, one needs to recall (see \cite{lev01}) that graphical models equality can be tested by restricting to Gaussian random variables.

In this paper, we associate with every DAG two subsets of covariance matrices:
\begin{itemize}
	\item all non-singular covariance matrices satisfying DAG constraints (denoted $\mloc(G)\cap\Sigma^{++}$ below)
	\item all covariance matrices satisfying DAG constraints (denoted $\mloc(G)$ below)
\end{itemize}
We give an \textit{analytic} result: while $\mloc(G)$ is not necessarily (Euclidean) closed, closures of both sets
coincide. 

Next, we switch to the \textit{algebraic} part. Due to the analytic fact above, much simpler equations for non-singular
matrices can be used to completely characterize the Zariski closure of
$\mloc(G)$ (denoted
$X_G$ below). 
Aesthetically pleasing is the fact that $X_G$ is always an irreducible complex variety (affine and rational).
Furthermore, two graphical models $G$ and $G'$ define the same set of conditional independence constraints if and only
if $X_G = X_{G'}$.

For large graphs it is important to reduce the number of equations needed to describe $X_G$. 
The natural set of
equations (denoted $I_G$ below) turns out to be too small: its solution set $V(I_G)$ contains $X_G$ and a
number of spurious components. We show how to get rid of these spurious components, proving that
$$ X_G = V((I_G:\theta_0^m))\,, $$
where $\theta_0$ is an explicit polynomial (and establishing Conjecture 3.3 of Sullivant~\cite{sul08}). This provides a convenient method for computing $X_G$. After these preparations, we
give our main result: isomorphism question $G\stackrel{?}{\sim} G'$ is equivalent to comparing intersections of $X_G$
and $X_{G'}$ with a certain invariant variety. We give a randomized algorithm for this and apply it to provide a list
of isomorphism classes on 4,5, and 6 nodes. 

The question of DAG isomorphism does not seem to have appeared elsewhere, though the closely related question of DAG equivalence (or Markov equivalence \cite{and97}) is well-studied. As mentioned in \cite{chi02}, the natural space to work with when doing model selection or averaging over DAGs is that of their equivalence classes. In practice, the number of DAGs in an equivalence class encountered during model selection can be large. Some examples are presented in Table 4 of \cite{chi02} where, for instance, the learning algorithm discovers equivalence classes on 402 nodes with more than $7\times 10^{21}$ members. However, one should keep in mind that not all equivalence classes are as large. In fact, Steinsky \cite{ste03} showed, by a recursive method, that roughly 7 percent of equivalence classes of graphs with 500 nodes or less consist only of one element. Recent results \cite{wag13} indicate that this ratio is valid asymptotically as well. Such classes appear to be the most common type of equivalences classes of DAGs \cite{gil01}. In general, it is expected, based on observations on small graphs, that the ratio of equivalence classes to DAGs be around 0.27 \cite{gil01}\cite{pen13}. Nevertheless, some equivalence classes (such as those that appear in \cite{chi02}) are quite large. This has motivated the need to represent DAGs, and among the representatives that are relevant in this regard are the essential graphs{\footnote{The essential graphs were originally known as {completed patterns} and were introduced in \cite{ver90} as the maximal invariants associated to equivalence classes of DAGs. They were also studied in \cite{mee95} under the name {maximally oriented graphs}.}} \cite{and97} and the characteristic imsets \cite{std10}. Both these methods have a combinatorial flavor and this work provides an algebraic alternative. The word algebraic here means commutative-algebraic, unlike in~\cite{std10}. We remark that the two mentioned methods can also be applied to solve the isomorphism problem. For instance, using the results in \cite{and97} one can reduce DAG isomorphism to the isomorphism of certain directed multi-graphs. 
In fact, this gives a sense of the inherent computational difficulty involved in working with the isomorphism class of a DAG.

While the notion of Markov equivalence makes sense in the setting of \cite{chi02}, there are situations where it is natural to want to work with the isomorphism class of a DAG-- the Markov equivalence class modulo permutations of variables. For instance, the recent results in \cite{kwi10} imply that there is a precise sense in which the isomorphism classes of all large graphs that admit efficient inference are related to graphs that look like the (unlabeled) trees listed in Figure~\ref{fig_classes}, but are far from the complete DAGs. 
An exact description of such models, however, is problematic by the subsequent results in \cite{kwi10}. It thus appears reasonable to find good ways to approximate them, and for that we resort to the family of projective varieties. 

It is also important to mention that the idea of associating an algebraic variety to a conditional independence (CI)
model has been previously explored in a number of publications, among which we will
discuss~\cite{lnv07,mat95,mat95p,mat99,sim06p,sim06,drt08,drt09,mat97,gar05}. Some of our preparatory propositions can
be found in the literature in slightly weaker forms and we attempt to give references.  The main novelties are:
\begin{itemize} 
\item We essentially leverage the directed-graph structure of the model (as opposed to general CI models) to infer
stronger algebraic claims. In particular, our treatment is base independent -- although for readability we present
results for the varieties over $\CC$.
\item We present a computational procedure for answering the isomorphism question.
\end{itemize}


\subsection{Preliminaries}
Directed acyclic graphical models are constraints imposed on a set of probability distributions:
\begin{definition}
A directed acyclic graphical model (DAG) ${G}/k$ is the data:
\begin{itemize}
\item A set of indices $[n]:=\{1,\cdots,n\}$ that are nodes of a directed acyclic graph. We frequently assume the nodes
to be topologically sorted, i.e., $i<j$ whenever there
is a path in the graph from $i$ to $j$.
\item A list $\mathcal{M}_{{G}}$ of {\em imposed} (a.k.a local Markov) relations
\begin{equation}\label{eq:topoall}
		i\ci \nd(i)| \p(i) \quad  
\end{equation}
where $\p(i)$ denotes the set of parents of $i\in [n]$ and $\nd(i)$ is the set of non-descendants of $i$ in the directed graph. 
\item A subset $\mathcal{M}_{{G}}^{topo}$ of topologically sorted local Markov relations:
\begin{equation}\label{eq:topomark}
i\ci \nd(i) \cap \{j: j<i\}| \p(i) \quad 
\end{equation}
\item A set of {\em ${G}$-compatible} joint probability distributions
\[
\mLoc({G}):=\{P_X | \,I\ci J|K\in \mathcal{M}_{G}\Rightarrow X_I\ci X_J|X_K \},
\]
where $X$ is a $k^n$-valued random variable\footnote{We mostly work with $k=\mathbb{R}$ or $\CC$. Since $\mathbb{R}$ and $\CC$ are measurably isomorphic, it does not matter which one we pick. We write ${G}/k$ if we need to emphasize the base field $k$. }. 
\item A set of {\em implied} relations
\[
\mathcal{C}_{{G}}:=\cap_{P_X\in \mLoc({G})}\{I\ci J|K \,\,\mathrm{s.t}\,\, X_I\ci X_J|X_K\}.
\]
\end{itemize}
\end{definition}

Given a collection of such models, it is often of interest to find representatives for their isomorphism classes (see also \cite{lnv07,mat95})-- these are models that have the same compatible distributions modulo labelings of variables: 
\begin{definition} Let $\mathcal{Q}$ be a permutation invariant family of distributions. Two DAGs ${G},{G}'$ are called {$\mathcal{Q}$-equivalent}
if 
\[
\mLoc({{G}})\cap \mathcal{Q}=\mLoc({{G}'})\cap \mathcal{Q}.
\]
When $\mathcal{Q}$ is the set of all distributions, we call such models equivalent. Likewise, two DAGs ${G},{G}'$ are called {$\mathcal{Q}$-isomorphic} 
if 
\[
p_{X_1\cdots X_n}\in \mLoc({{G}})\cap \mathcal{Q}\iff p_{X_{\pi(1)}\cdots X_{\pi(n)}}\in \mLoc({{G}'})\cap \mathcal{Q}
\]
for some permutation $\pi$ of indices. When $\mathcal{Q}$ is the set of all distributions, we call such models isomorphic and denote the isomorphism class of $G$ by $[G]$. 
\label{def_equiv}
\end{definition}

We shall mainly focus on characterizing isomorphism classes of DAGs. A related question is that of understanding the
structure of conditional independence constraints -- see for the case of discrete random variables \cite{mat95p,mat99,mat95}, positive discrete random variables \cite{sim06p}, non-singular Gaussians \cite{lnv07},  and general Gaussians \cite{sim06}. 

Let $H=H_1\times\cdots\times H_n$ be a product measure space endowed with the $\sigma$-algebra $\mathcal{H}=\mathcal{H}_1\otimes\cdots\otimes\mathcal{H}_n$. We assume that $H_i$ is measurably isomorphic to $\mathbb{R}$ and that $\mathcal{H}_i$ is a Borel $\sigma$-algebra for all $i$. The next property, factorization, 
relies on a digraph structure and pertains only to DAGs:
\begin{definition}
A probability measure $P$ defined on $(H,\mathcal{H})$ is said to factorize w.r.t a DAG if it can be written as
\[
P(\mathcal{A})=\int_{\mathcal{A}}\prod_{i}^n K_{i|\mathrm{pa}(i)}(dx_i|x_{\mathrm{pa}(i)}) \quad \forall \mathcal{A}\in \mathcal{H},
\]
where $K_{i|\mathrm{pa}(i)}$'s are conditional probability kernels (which exist by \cite[Theorem 2.7]{cin11}) 
and $K_{i|\mathrm{pa}(i)}(dx_i|x_{\mathrm{pa}(i)})=\mu_i(dx_i)$ if $i$ has no parents in $G$.
\label{def_fac}
\end{definition}

Given a DAG $G$, we denote by $\mFac(G)$ the set of distributions that factorize w.r.t $G$. It is known (c.f. Section II.A) that 
\[
\mFac(G)=\mLoc(G).
\]
This means that two DAGs are equal (isomorphic) in the above sense if and only if they factorize the same set of distributions (modulo the labeling of the variables). 

\subsection{Notation}
\begin{itemize}
\item $\SN$ is the set of real valued Gaussians
\item $\SNP$ is the non-singular subset of $\SN$. 
\item $\Sigma=[\sigma_{ij}]$ is the affine space $\CC^{{n+1 \choose 2}}$ of Hermitian $n\times n$ matrices.
\item $\Sigma^+$ is the positive semi-definite (PSD) subset of $\Sigma$.
\item $\Sigma^{++}$ is the positive definite (PD) subset of $\Sigma$.
\item $\Sigma^{\mdot{}\,}$ is the subset of matrices in $\Sigma$ with non-zero principal minors\footnote{Note that $\Sigma^{\mdot{}\,}$ is Zariski open, while $\Sigma^{+},\Sigma^{++}$ are described by inequalities. }. 
\item $\hat{\Sigma}$ is the subset of $\Sigma$ consisting of matrices with ones along the diagonal. We also set $\hat{\Sigma}^+\mdef\hat{\Sigma}\cap \Sigma^+$, $\hat{\Sigma}^{++}\mdef\hat{\Sigma}\cap \Sigma^{++}$, and $\hat{\Sigma}^{\mdot{}\,}\mdef\hat{\Sigma}\cap \Sigma^{\mdot{}\,}$. 
\item $\mloc(G)$ is the set of covariance matrices in $\mLoc(G)\cap \SN$. 
\item $f_G$ is the rational parametrization defined in II.B.
\item Given $S\subset \Sigma$, $[S]_{}$ and $[S]_Z$ are its standard and Zariski closures\footnote{The closure is always taken inside the affine complex space.}, respectively.
\item Given $S\subset \Sigma$, $I(S)$ is the ideal of polynomials that vanish on $[S]_Z$. 
\item Given an ideal $I$, the associated algebraic set is given by
\[
{V}(I)=\{ x\in \CC^n| f(x)=0 \quad \forall f\in I\}.
\]
\item $X_G\mdef [\mloc(G)]_Z$, $\mpf_G\mdef I(X_G)$, $\hat{X}_G\mdef [\mloc(G)\cap \hat{\Sigma}]_Z$.
\item $\X_G$ is the closure of $\hat{X}_G$ inside $\mathbb{P}^{n \choose 2}$. 
\end{itemize}

\subsection{Overview of main results} 

Our main purpose is to show that the computational tools in algebra are relevant for addressing the following problem: 

\begin{problem}
Given two DAGs, determine if they are isomorphic.
\end{problem}

Our starting point is to show that isomorphism and $\mathcal{N}^{+}$-isomorphism are equivalent for DAGs (see Section II.C). It is well known that checking $\mathcal{N}^{+}$-equivalence reduces to checking equality of algebraic subsets inside the positive definite cone (see for instance \cite{drt08,drt09,mat97,sul08}). 
This follows from the next proposition: 
\begin{proposition}[Lemma 2.8 in \cite{std05}]
Let $X \sim N(\mu,\sigma)$ be an $m$-dimensional Gaussian vector and $A,B,C \subset [m]$ be pairwise disjoint index sets. Then $X_A\ci X_B | X_C$ if and only if the submatrix $\sigma_{AC,BC}$ has rank equal to the rank of $\sigma_{CC}$. Moreover, $X_A\ci X_B | X_C$ if and only if $X_a\ci X_b | X_C$ for all $a \in A$ and $b \in B$.
\label{pro_minors}
\end{proposition}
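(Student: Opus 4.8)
The plan is to assemble three standard facts about jointly Gaussian vectors and to isolate the only genuine subtlety, which is the possible singularity of $\sigma_{CC}$. First, I would reduce to $\mu=0$ (means are irrelevant for conditional independence of Gaussians) and record the two elementary Gaussian principles: (i) conditionally on $X_C$, the vector $(X_A,X_B)$ is again Gaussian with a conditional covariance matrix that does not depend on the value of $X_C$; and (ii) jointly Gaussian coordinates are independent iff uncorrelated. Together these give $X_A\ci X_B\mid X_C$ iff the $A$--$B$ block $\Sigma_{AB\mid C}$ of that conditional covariance vanishes. This already settles the ``moreover'' clause: the $(a,b)$ entry of $\Sigma_{AB\mid C}$ is exactly $\Cov(X_a,X_b\mid X_C)$, so $\Sigma_{AB\mid C}=0$ iff all these entries vanish iff $X_a\ci X_b\mid X_C$ for every $a\in A$, $b\in B$ (using (i)--(ii) applied to singletons); the forward direction is the usual marginalization property of conditional independence and holds for arbitrary distributions.

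Next I would reduce the rank statement to the case where $\sigma_{CC}$ is nonsingular. Writing $r=\rank(\sigma_{CC})$, a PSD matrix always has a principal $r\times r$ submatrix of full rank, so choose $C'\subseteq C$ with $\sigma_{C'C'}$ invertible and $|C'|=r$. For each $j\in C\setminus C'$ the principal submatrix of $\sigma_{CC}$ on $C'\cup\{j\}$ has rank $r$, so its Schur complement $\sigma_{jj}-\sigma_{jC'}\sigma_{C'C'}^{-1}\sigma_{C'j}=\Var(X_j\mid X_{C'})$ is zero, i.e. $X_j$ is a.s. an affine function of $X_{C'}$. Hence conditioning on $X_C$ is the same as conditioning on $X_{C'}$, so $X_A\ci X_B\mid X_C\iff X_A\ci X_B\mid X_{C'}$; and since $\sigma\succeq 0$ forces $\mathrm{range}(\sigma_{C,A})\subseteq\mathrm{range}(\sigma_{CC})$ and $\mathrm{range}(\sigma_{C,B})\subseteq\mathrm{range}(\sigma_{CC})$, the rows and columns of $\sigma_{AC,BC}$ indexed by $C\setminus C'$ are linear combinations of those indexed by $C'$, giving $\rank(\sigma_{AC,BC})=\rank(\sigma_{AC',BC'})$. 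Of course $\rank(\sigma_{CC})=r=|C'|$.

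Finally, in the nonsingular case I would use the classical Schur-complement identities. On one hand, $\Sigma_{AB\mid C'}=\sigma_{AB}-\sigma_{AC'}\sigma_{C'C'}^{-1}\sigma_{C'B}=:M$, so $X_A\ci X_B\mid X_{C'}\iff M=0$. On the other hand, left- and right-multiplying $\sigma_{AC',BC'}=\bigl(\begin{smallmatrix}\sigma_{AB} & \sigma_{AC'}\\ \sigma_{C'B} & \sigma_{C'C'}\end{smallmatrix}\bigr)$ by the unipotent block matrices $\bigl(\begin{smallmatrix}I & -\sigma_{AC'}\sigma_{C'C'}^{-1}\\ 0 & I\end{smallmatrix}\bigr)$ and $\bigl(\begin{smallmatrix}I & 0\\ -\sigma_{C'C'}^{-1}\sigma_{C'B} & I\end{smallmatrix}\bigr)$ reduces it to $\mathrm{diag}(M,\sigma_{C'C'})$, whence $\rank(\sigma_{AC',BC'})=\rank(M)+r$. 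Chaining the equivalences: $\rank(\sigma_{AC,BC})=\rank(\sigma_{CC})\iff \rank(\sigma_{AC',BC'})=r\iff M=0\iff X_A\ci X_B\mid X_{C'}\iff X_A\ci X_B\mid X_C$, which is the claim.

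The statement is essentially classical (it is attributed to \cite{std05}), so the work is in the bookkeeping rather than in any single hard idea; the one place that genuinely needs care — and the step I would expect a careless proof to get wrong — is the singular-$\sigma_{CC}$ case, where both ``conditioning on $X_C$'' and the rank formula must be handled correctly. The reduction to a full-rank principal submatrix $C'$ above is the clean way around it; equivalently one could work throughout with the Moore--Penrose inverse $\sigma_{CC}^{+}$ and the generalized Schur complement, invoking the rank-additivity formula under the range conditions that $\sigma\succeq 0$ supplies, but the reduction keeps everything within elementary linear algebra and avoids pseudo-inverse subtleties.
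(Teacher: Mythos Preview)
Your argument is correct and follows essentially the same route as the paper's own (commented-out) proof sketch: reduce the singular case to the nonsingular one by passing to a maximal subset $C'\subseteq C$ with $\sigma_{C'C'}$ invertible, observe that conditioning on $X_C$ and on $X_{C'}$ coincide, and then invoke the classical Schur-complement identity for the nonsingular case. The paper in fact cites this result from the literature and does not give a self-contained proof; your version is more explicit (carrying out the rank computation via the block unipotent reduction rather than deferring to~\cite{drt08}), but the underlying strategy is identical.
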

\begin{remark}
Note that the rank constraint is equivalent to vanishing of the minor $|\sigma_{AC',BC'}|$ for a maximal $C'\subset C$
such that $X_{C'}$ is non-singular  \footnote{A vector random variable is said to be non-singular if its distribution
admits a density w.r.t. product Lebesgue measure.}.
\end{remark}

Proposition \ref{pro_minors} enables us to think algebraically and/or geometrically when deciding Gaussian equivalence. Indeed, 
it states that $\mloc(G)\cap \Sigma^{++}$ can be identified with the positive definite subset of the real solutions to the polynomial equations generated by the implied relations in $G$. Working with such subsets, however, is not convenient from a computational point of view. This motivates the next problem:
\begin{problem}
Give an algebraic description of $\mloc(G)$.
\end{problem}  

Let ${J}_{G}$ be the ideal generated by the minors $|\sigma_{iK,jK}|$ of the implied relations $i\ci j|K\in \mathcal{C}_G$ inside $\CC[\Sigma]$. Similarly, the minors of imposed relations of $G$ generate an ideal $I_G\subset J_G$ in $\CC[\Sigma]$. Note that this ideal coincides with that generated by the toposorted imposed relations. The corresponding ideals generated inside $\CC[\hat{\Sigma}]$ are denoted by $\hat{I}_G,\hat{J}_G$.
With the established notation, for example, Proposition~\ref{pro_minors} implies
\begin{equation}\label{eq:vcmloc}
		{V}(I_G) \cap \Sigma^{++}\cap \mathbb{R}^{n+1\choose 2} = \mloc(G) \cap \Sigma^{++}. 
\end{equation}
We address the above problem by identifying $X_G$ with an irreducible component of ${V}(I_G)$. 
It is a curious fact that the points in ${V}(I_G)\cap \hat{\Sigma}^{++}$
correspond to covariances of circularly symmetric Gaussians that satisfy the CI constraints of $G/\CC$. Thus if we work
with complex Gaussians, we may avoid intersecting with the reals in (\ref{eq:vcmloc}).

In Section II, we first prove some geometric results, which can be summarized in the following diagram
\begin{center}
\tikzset{node distance=1.5cm, auto}
\begin{tikzpicture}
  \node (D0) {$\Imfp_G\cap \Sigma^{+}$};
  \node (D1) [right of=D0,node distance=1.75cm]{$=\mloc(G)\subset$};
  \node (D2) [right of=D1,node distance=2.25cm]{$[\mloc(G)\cap \Sigma^{++}]_{}\subset$};
  \node (D3) [right of=D2, node distance=2.2cm]{$X_G$};
  \node (C0) [below of =D0, node distance=.5cm] {\rotatebox{90}{$\subsetneq$}};
  \node (C1) [below of =D1, node distance=.5cm] {\rotatebox{90}{$\supsetneq$}};
  \node (C3) [below of =D3, node distance=.5cm] {\rotatebox{90}{=}};
  \node (B0) [below of =C0, node distance=.5cm] {$\mloc(G)\cap \Sigma^{++}$};
  \node (B1) [below of =C1, node distance=.5cm] {$V(I_G)$};
  \node (B3) [below of =C3, node distance=.5cm] {$[V(I_G)\cap \Sigma^{\mdot}\,]_Z$};
\end{tikzpicture}
\end{center}
The same inclusions hold if we replace $(I_G,\Sigma)$ with $(\hat{I}_G,\hat{\Sigma})$, $\Sigma^{\mdot}\,$ with
$\Sigma^{++}$, or $I_G$ with $J_G$.  

It is known that $[\mloc(G)\cap \Sigma^{++}]_Z$ is a complex irreducible rational algebraic variety, cf.~\cite{sul08}. Here we further show that it coincides with $X_G$ and characterize $\mathfrak{p}_G=I(X_G)$ in two different ways: as the saturated ideal of $I_G$ at $\theta_0=\prod_{A\subset [n]}(|{\Sigma}_{AA}|)$ (Conjecture 3.3 in \cite{sul08}), and as the unique minimal prime of $I_G$ contained in the maximal ideal $\mathfrak{m}_I$ at the identity. We thus have the following relations inside $\CC[{\Sigma}]$:
\begin{align*}
&I_G\subset J_G\subset S^{-1}J_G\cap {\CC}[\Sigma]= S^{-1}I_G\cap \CC[\Sigma]\\
&=I(\mloc(G)\cap\Sigma^{++})=\mathfrak{p}_G\subset \mathfrak{m}_I,
\end{align*}
where $S=\{\theta_0^n|n>0\}$. One can replace $(\mloc(G),{\Sigma})$ with $(\mloc(G)\cap \hat{\Sigma},\hat{\Sigma})$ in the above. We note that the above relations hold verbatim over $\ZZ[\Sigma]$ and other base rings. 
Our main statement, shown in \ref{sec:algrep}, is that two DAGs $G,G'$ are isomorphic if and only if
\[
S^{-1}I_G\cap \CC[\Sigma]^\Pi=S^{-1}I_{G'}\cap \CC[\Sigma]^\Pi.  
\]

We use the above results to provide a randomized algorithm for testing DAG isomorphism in \ref{sec:rand}. 
\ifcomplete Section III concerns the special case of directed tree models. In particular, we show that 
$\hat{I}_T$ is a prime ideal for a tree model $T$ and hence $\hat{I}_T=I(\mloc(T)\cap \hat{\Sigma})$. This is analogous to primality of $J_T$, the ideal of implied relations, shown in \cite{sul08} (see Corollary 2.4 and Theorem 5.8). 
We thus have that two directed tree models $T$ and $T'$ are equal if and only if $\hat{I}_T=\hat{I}_{T'}$. Moreover, we use our randomized algorithm to list the isomorphism classes of directed tree models for $n=4,5,$ and $6$ nodes. \else We then use the algorithm to list the isomorphism classes of directed tree models for $n=4$ and $5$ nodes. We also include the list for $n=6$ in the extended version of the paper. There, we further discuss some special properties of directed tree models. In particular, we show that 
$\hat{I}_T$ is a prime ideal for a tree model $T$ and hence $\hat{I}_T=I(\mloc(T)\cap \hat{\Sigma})$. This is analogous to primality of $J_T$, the ideal of implied relations, shown in \cite{sul08} (see Corollary 2.4 and Theorem 5.8). 

\fi  
The number of isomorphism classes of directed tree models  found by our procedure is $1,1,2,5,14,42,142,...$ for $n\ge 1$. Curiously, the first 6 numbers are Catalan but the 7th is not. 

\section{Main results}
\subsection{Factorization and local Markov properties}
In this section we show that isomorphic DAGs factorize the same set of probability distributions modulo the labeling of the variables. A theorem of Lauritzon (see \cite[Theorem 3.27]{lau96}) says that a non-singular measure satisfies the local Markov property if and only if its density factorizes. Let $(H,\mathcal{H})$ be as in Definition \ref{def_fac}. One can further state:
\begin{proposition} Let ${G}$ be a DAG and $P$ a probability measure defined on $(H,\mathcal{H})$. The following are
	equivalent:
	\begin{enumerate}
		\item $P$ factorizes w.r.t to ${G}$.
		\item $P$ satisfies all imposed constraints~\eqref{eq:topoall} w.r.t ${G}$.
		\item $P$ satisfies topologically sorted constraints~\eqref{eq:topomark} w.r.t. $G$.
	\end{enumerate}  
	In particular, $\mFac(G)=\mLoc(G)$.
\label{pro_locfac}
\end{proposition}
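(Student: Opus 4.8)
I would run the cycle $(1)\Rightarrow(2)\Rightarrow(3)\Rightarrow(1)$, so that all three are equivalent and the final sentence $\mFac(G)=\mLoc(G)$ is just $(1)\Leftrightarrow(2)$ unwound through the definitions. Since each $H_i$ is measurably isomorphic to $\mathbb{R}$, regular conditional probabilities exist (\cite[Theorem 2.7]{cin11}), and the one technical fact used repeatedly is the standard characterization on Borel spaces: $X_I\ci X_J\mid X_K$ holds if and only if some regular conditional distribution of $X_I$ given $(X_J,X_K)$ can be chosen to depend on $x_K$ alone; this, together with uniqueness of disintegrations up to null sets, is where the measure-theoretic care goes. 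The implication $(2)\Rightarrow(3)$ is then immediate: the topological sort forces every $j<i$ to be a non-descendant of $i$, so $\nd(i)\cap\{j:j<i\}\subseteq\nd(i)$, and the decomposition property of conditional independence deduces $X_i\ci X_{(\nd(i)\cap\{j<i\})\setminus\p(i)}\mid X_{\p(i)}$ from $X_i\ci X_{\nd(i)\setminus\p(i)}\mid X_{\p(i)}$.

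For $(3)\Rightarrow(1)$ I would use the chain rule. Order the nodes topologically and let $L_i(dx_i\mid x_1,\dots,x_{i-1})$ be a regular conditional distribution of $X_i$ given $(X_1,\dots,X_{i-1})$, so that $P(dx_1,\dots,dx_n)=\prod_{i=1}^n L_i(dx_i\mid x_1,\dots,x_{i-1})$. Because $\p(i)\subseteq\{1,\dots,i-1\}$ and $\{1,\dots,i-1\}\setminus\p(i)=\big(\nd(i)\cap\{j<i\}\big)\setminus\p(i)$, hypothesis (3) is precisely the statement that $L_i$ may be chosen to depend only on $x_{\p(i)}$; calling that version $K_{i\mid\p(i)}(dx_i\mid x_{\p(i)})$ (a fixed probability measure $\mu_i$ when $\p(i)=\emptyset$) turns the chain-rule identity into the factorization of Definition \ref{def_fac}.

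For $(1)\Rightarrow(2)$ I would argue via ancestral marginals. Fix $i$ and put $A=\nd(i)\cup\{i\}$. A parent of a non-descendant of $i$ is again a non-descendant of $i$, and it cannot equal $i$ (otherwise a child of $i$ would be a non-descendant of $i$), so $A$ and $\nd(i)$ are both ancestral, $\p(i)\subseteq\nd(i)$, and $i$ has no children inside $A$. Integrating the factorization of $P$ over the coordinates in $[n]\setminus A$ in decreasing topological order — at each step the removed node is a sink among those remaining, so its kernel integrates to $1$ — shows that the marginals $P_A$ and $P_{\nd(i)}$ factorize over the induced subgraphs with the same kernels. Placing $K_{i\mid\p(i)}$ last then gives $P_A(dx_A)=P_{\nd(i)}(dx_{\nd(i)})\,K_{i\mid\p(i)}(dx_i\mid x_{\p(i)})$, so by uniqueness of disintegration a regular conditional distribution of $X_i$ given $X_{\nd(i)}$ depends only on $x_{\p(i)}$; hence $X_i\ci X_{\nd(i)\setminus\p(i)}\mid X_{\p(i)}$, which is (2).

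The main obstacle is the bookkeeping in $(1)\Rightarrow(2)$: one must verify carefully that marginalizing a factorized measure over the complement of an ancestral set reproduces the product form with the \emph{original} kernels, and that the resulting statement about conditional laws is genuinely the measure-theoretic $X_i\ci X_{\nd(i)\setminus\p(i)}\mid X_{\p(i)}$ — both routine with disintegrations but requiring the null-set qualifiers to be tracked uniformly across nodes. The density case is already covered by \cite[Theorem 3.27]{lau96}; the point here is only that standard-Borel regular conditional probabilities let the same argument run without any absolute-continuity hypothesis.
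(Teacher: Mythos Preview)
Your proof is correct and follows the same cycle $(1)\Rightarrow(2)\Rightarrow(3)\Rightarrow(1)$ as the paper, with the same mechanism for $(3)\Rightarrow(1)$: pass to the chain-rule disintegration and use the conditional-independence hypothesis to replace each $L_i(dx_i\mid x_1,\dots,x_{i-1})$ by a kernel depending only on $x_{\p(i)}$. The only substantive difference is one of emphasis. You invoke as ``standard'' the Borel-space characterization that $X_I\ci X_J\mid X_K$ iff a regular conditional law of $X_I$ given $(X_J,X_K)$ can be chosen to factor through $x_K$; the paper, by contrast, devotes essentially its entire argument to proving precisely this fact in the needed form (defining $K_{3|2}(\mathcal{A}\mid x_2)=\int K_{3|12}(\mathcal{A}\mid x_1,x_2)\,K_{1|2}(dx_1\mid x_2)$ and showing $K_{3|12}=K_{3|2}$ a.s.\ by a contradiction on $\{K_{3|12}(\mathcal{L}\mid x_1,x_2)>K_{3|2}(\mathcal{L}\mid x_2)+\epsilon\}$). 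Conversely, you supply a careful ancestral-set argument for $(1)\Rightarrow(2)$, whereas the paper simply declares $1\Rightarrow 2\Rightarrow 3$ ``obvious.'' So the two write-ups are complementary: yours is tighter if the kernel characterization is granted, while the paper is self-contained on exactly that point.
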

\begin{proof}
\ifcomplete
$1\implies 2 \implies 3$ are obvious.  We show $3\implies 1$. The main step is to show that if $1\ci 3| 2$, then there exists a conditional probability kernel $K_{3|2}$ such that for all $\mathcal{A}\in \mathcal{H}_3$ we have
\[
\int K_{3|12}(\mathcal{A}|x_1,x_2)dP_{X_1X_2}=\int K_{3|2}(\mathcal{A}|x_1)dP_{X_1X_2}.
\]
The general result then follows from this by induction. Note that the inductive step requires $P$ to satisfy only the toposorted constraints of $G$. Without loss of generality, let $K_{3|12}$ and $K_{1|2}$ be regular branches of conditional probabilities. 
Set
\[
K_{3|2}(\mathcal{A}|x_2):=\int K_{3|12}(\mathcal{A}|x_1,x_2)K_{1|2}(dx_1|x_2).
\]
We want to prove that $K_{3|2}$ is a regular branch of conditional probabilities $P_{X_3|X_2}$. Clearly,  $K_{3|2}(\cdot|x_2)$ is a probability measure for all $x_2$. Now fix $\mathcal{A}\in \mathcal{H}_3$. By \cite[Theorem I.6.3]{cin11},  $K_{3|2}(\mathcal{A}|\cdot)$ is measurable as well, hence,  it is a regular branch of conditional probabilities.

Claim: $K_{3|2}=K_{3|12}$. Suppose this is not the case. Then there exists $\epsilon>0$ such that
\[
\mathcal{A}=\{K_{3|12}(\mathcal{L}|x_1,x_2)>K_{3|2}(\mathcal{L}|x_2)+\epsilon\}
\]
has non zero probability for some $\mathcal{L}\in \mathcal{H}_3$. Let $\mathcal{F}_I$ be the $\sigma$-algebra generated by $X_I$. By the local Markov property
\begin{align*}
\mathbb{E}[1_\mathcal{L}1_\mathcal{A}]&=\mathbb{E}[\mathbb{E}_{\mathcal{F}_{1,2}}[1_\mathcal{L}1_\mathcal{A}]]=\mathbb{E}[1_\mathcal{A}\mathbb{E}_{\mathcal{F}_{2}}[1_\mathcal{L}]]\\
&=\int_{ \mathcal{A}} K_{3|2}(\mathcal{L}|x_2)dP_{X_1X_2}.
\end{align*}
Now by direct computation
\begin{align*}
\mathbb{E}[1_\mathcal{L}1_\mathcal{A}]&=\int_{\mathcal{A}}K_{3|12}(\mathcal{L}|x_1,x_2)dP_{X_1X_2} \\
&>\int_{\mathcal{A}} K_{3|2}(\mathcal{L}|x_2)dP_{X_1X_2}+\epsilon \mathbb{P}[\mathcal{A}]\\
&=\mathbb{E}[1_\mathcal{L}1_\mathcal{A}]+\epsilon \mathbb{P}[\mathcal{A}],
\end{align*}
which is a contradiction. This completes the proof.
\else
see~\cite{RP14-arxiv}.
\fi
\end{proof}

\subsection{Weak limits of factorable Gaussians}

This section provides a characterization of the singular distributions in $\mloc(G)$ as the weak limit of sequences in $\mloc(G)\cap \Sigma^{++}$. Note that since $(H,\mathcal{H})$ is a topological space, weak convergence $P_{X_n}\stackrel{w}\rightarrow P_X$ is well-defined .

\ifcomplete We note that in the case of Gaussians $X_n\sim N(0,\sigma_n), X\sim N(0,\sigma)$, $P_{X_n}\stackrel{w}{\rightarrow} P_X$ is equivalent to $\sigma_n\rightarrow \sigma$ in the standard metric.\else \fi
To characterize $\mloc(G)\cap \partial \mathcal{N}^+$, we shall find it useful to work with the parametrization 
\begin{equation}
X_i=\sum_{j<i} \alpha_{ij} X_j+\omega_iZ_i,
\label{eq_rational}
\end{equation} 
where $Z_i$'s are independent standard Gaussians. Suppose that  $\alpha_{ij}= 0$ for all $(i,j)\notin E$, where $E$ denotes the set of (directed) edges of $G$.
Then this parametrization gives a polynomial map $f_G:\mathbb{R}^{|E|+n}\mapsto \mathbb{R}^{n+1 \choose 2}$, sending $\{\alpha_{ij},\omega_i\}$ to $\mathrm{cov}(X)$. Indeed, starting from (\ref{eq_rational}), one can write
\[
\sigma_{ik}=\sum_{j<i}\alpha_{ij} \sigma_{jk}+\omega_i\gamma_{ik}
\]
where $\gamma_{ik}=\mathrm{Cov}(Z_i,X_k)$, $\sigma_{ik}=\mathrm{Cov}(X_i,X_k)$. Note that $\gamma_{ik}=0$ for $k<i$. With this notation, we can write
\begin{align*}
\gamma_{ki}&=\sum_{j>i}\alpha_{ij}\gamma_{kj}+\omega_i\delta_{ik}=\sum_{j>i}\gamma_{kj}\alpha^*_{ji}+\omega_i\delta_{ik}.
\end{align*}
Set $\Gamma:=[\gamma_{ij}],A:=[\alpha_{ij}],\Omega:=[\omega_{ii}],\Sigma:=[\sigma_{ij}]$. We can write the above equations in matrix form:
\[
\Sigma=A\Sigma+\Omega\Gamma, \quad \Gamma=\Gamma A^*+\Omega.
\]
Hence,
\[
\Sigma=(I-A)^{-1}\Omega^2(I-A^*)^{-1}.
\]
The image of $f_G$ is Zariski dense in $[\mloc(G)\cap \Sigma^{++}]_Z$:
\begin{proposition}[\it Proposition 2.5 in \cite{sul08}]\label{prop:irred}
Let $G$ be a DAG and $E$ be its set of edges. Then $[\mloc(G)\cap \Sigma^{++}]_Z$ is a rational affine irreducible variety of dimension $n+|E|$.
\end{proposition}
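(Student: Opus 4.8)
The plan is to prove both assertions of Proposition~\ref{prop:irred} — rationality/irreducibility and the dimension count — by directly analyzing the parametrization $f_G$ introduced above, since by construction the image of $f_G$ is contained in $\mloc(G)\cap\Sigma^{++}$ (whenever $\omega_i\neq 0$), and conversely every non-singular factorizing Gaussian arises this way via the recursive regression coefficients $\alpha_{ij}$ and residual variances $\omega_i^2$. First I would argue that $f_G$ restricted to the open set $\{\prod_i\omega_i\neq 0\}$ surjects onto $\mloc(G)\cap\Sigma^{++}$: given a positive-definite $\Sigma$ satisfying the local Markov relations of $G$, solve for $A$ and $\Omega$ from the Cholesky-type recursion $X_i=\sum_{j<i}\alpha_{ij}X_j+\omega_i Z_i$, and observe that the Markov relation $i\ci\nd(i)\mid\p(i)$ forces $\alpha_{ij}=0$ for $j\in\nd(i)\setminus\p(i)$, i.e. for $(i,j)\notin E$. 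Hence $\mloc(G)\cap\Sigma^{++}$ equals the image of $f_G$ on this open set, so its Zariski closure equals the Zariski closure of the \emph{full} image of $f_G$ on $\mathbb{A}^{|E|+n}$.

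Next, irreducibility: $\mathbb{A}^{|E|+n}$ is irreducible, $f_G$ is a polynomial (hence continuous in the Zariski topology) morphism, so the closure of its image is irreducible. For rationality, I would exhibit a rational inverse on a dense open subset: the map sending $\Sigma$ back to $(A,\Omega)$ is given by rational formulas in the entries of $\Sigma$ (these are exactly the formulas for recursive regression coefficients — ratios of principal minors, which are nonzero on $\Sigma^{\mdot{}\,}$), so $f_G$ is birational onto its image, giving a rational parametrization of $X_G':=[\mloc(G)\cap\Sigma^{++}]_Z$ by affine space; equivalently the function field is purely transcendental of transcendence degree $|E|+n$.

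For the dimension statement, the birationality already gives $\dim = |E|+n$ once we know $f_G$ is generically injective (which the rational inverse provides) — the domain has dimension $|E|+n$ and a dominant generically-finite morphism preserves dimension. Alternatively one can compute the rank of the Jacobian of $f_G$ at a generic point, or note that on the open dense set the $n$ parameters $\omega_i$ and the $|E|$ parameters $\alpha_{ij}$ are independent coordinates recoverable rationally from $\Sigma$, so they are algebraically independent in the coordinate ring of the image.

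The main obstacle I anticipate is the surjectivity/recovery step: one must verify carefully that the vanishing $\alpha_{ij}=0$ for $(i,j)\notin E$ is \emph{equivalent} to the local Markov relations holding (not just implied by them), which requires invoking Proposition~\ref{pro_minors} to translate conditional independence into rank/minor conditions on $\Sigma$ and then checking that these are precisely the conditions that make the relevant regression coefficients vanish. This is where the topological sort and the use of the \emph{topologically sorted} local Markov relations~\eqref{eq:topomark} matter, since the recursion~\eqref{eq_rational} only conditions on predecessors $j<i$. Since this is cited as Proposition~2.5 of~\cite{sul08}, a short proof referencing that source plus the translation via Proposition~\ref{pro_minors} should suffice; I would not belabor the minor computations.
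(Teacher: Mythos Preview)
The paper does not prove this proposition; it simply cites it as Proposition~2.5 of~\cite{sul08}. Your outline is essentially the standard argument behind that reference: dominate the variety by the polynomial map $f_G$ from affine space, conclude irreducibility from continuity, and read off rationality and dimension from an explicit rational inverse coming from the regression coefficients.

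One small correction is worth flagging. You assert that $f_G$ is birational onto its image, but as written in the paper $f_G$ depends on the $\omega_i$ only through $\omega_i^2$ (since $\Sigma=(I-A)^{-1}\Omega^2(I-A^*)^{-1}$), so the map is generically $2^n$-to-one, not birational, and recovering $\omega_i$ from $\Sigma$ requires a square root rather than a rational function. This does not threaten either conclusion: for the dimension count a generically finite dominant morphism still preserves dimension, and for rationality you simply reparametrize by $v_i=\omega_i^2$, after which the map $(\alpha_{ij},v_i)\mapsto\Sigma$ is genuinely birational onto its image and the function field is visibly purely transcendental of degree $|E|+n$. With that adjustment your proposal is correct, and indeed the paper's own proof of Theorem~\ref{thm:thm1} constructs exactly such a birational isomorphism $\varphi$ between an open subset of $\Sigma_{\edge}$ and $X_G\cap\Sigma^{\mdot{}\,}$, which is the rigorous version of the rational inverse you describe.
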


The next Proposition shows that $$X_G=[\mloc(G)\cap \Sigma^{++}]_Z.$$

\begin{proposition} Let $G$ be a DAG. Then
\begin{enumerate}[(a)]
\item $\mloc({G})\cap \Sigma^{++}$ is dense in $\mloc({G})$.
\item $\mloc(G)\cap \hat{\Sigma}^{++}$ is dense in $\mloc(G)\cap \hat{\Sigma}$. 
 \end{enumerate}
\label{pro_dense}
\end{proposition}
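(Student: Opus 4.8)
The plan is to prove (a) directly by producing, for any singular covariance matrix $\sigma \in \mloc(G)$, an explicit perturbation that stays in $\mloc(G)$ but is nonsingular and converges to $\sigma$. The natural tool is the rational parametrization $f_G$ from \eqref{eq_rational}: one first shows that every $\sigma \in \mloc(G)$ arises as $f_G(\{\alpha_{ij},\omega_i\})$ for some parameter values (this is the content of $\mFac(G)=\mLoc(G)$, Proposition~\ref{pro_locfac}, restricted to Gaussians — the factorization $P = \prod_i K_{i|\mathrm{pa}(i)}$ becomes exactly the structural-equation form \eqref{eq_rational} with possibly some $\omega_i = 0$). Then, given such a representation with some $\omega_i$ equal to zero, I would perturb those vanishing $\omega_i$'s to small positive $\varepsilon$, keeping the $\alpha_{ij}$ fixed. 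Since $\Sigma = (I-A)^{-1}\Omega^2 (I-A^*)^{-1}$ and $(I-A)$ is always invertible (as $A$ is strictly lower triangular after topological sorting), making all diagonal entries of $\Omega$ nonzero makes $\Sigma$ nonsingular (it is then a congruence of the positive definite matrix $\Omega^2$), and as $\varepsilon \to 0$ we recover $\sigma$ by continuity of $f_G$. The perturbed matrices lie in $\mathrm{Im}\,f_G \cap \Sigma^{++} \subseteq \mloc(G)\cap\Sigma^{++}$, giving density.

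For part (b) the obstacle is that rescaling to put ones on the diagonal does not obviously commute with the perturbation: I would instead work directly in the $\hat\Sigma$ chart. Given $\sigma \in \mloc(G)\cap\hat\Sigma$, write $\sigma = f_G(\{\alpha_{ij},\omega_i\})$ as above; the diagonal-ones condition forces a normalization on the parameters, but the key point is that in the perturbed family $\sigma_\varepsilon = (I-A_\varepsilon)^{-1}\Omega_\varepsilon^2(I-A_\varepsilon^*)^{-1}$ one can \emph{simultaneously} adjust the $\alpha_{ij}$ and $\omega_i$ so that the diagonal stays identically $1$ while the off-diagonal entries vary smoothly and the matrix becomes positive definite. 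Concretely, after perturbing $\omega_i$ away from $0$, divide each $X_i$ by its standard deviation; this replaces $\sigma_\varepsilon$ by $D_\varepsilon^{-1/2}\sigma_\varepsilon D_\varepsilon^{-1/2}$, which still satisfies the CI constraints of $G$ (conditional independence is invariant under coordinatewise scaling), lies in $\hat\Sigma^{++}$, and converges to $D_0^{-1/2}\sigma D_0^{-1/2} = \sigma$ since $\sigma$ already has unit diagonal so $D_0 = I$. Thus (b) reduces to (a) together with scaling-invariance of $\mloc(G)$, which itself follows from Proposition~\ref{pro_minors} since the rank conditions on $\sigma_{AC,BC}$ are unchanged by conjugation with an invertible diagonal matrix.

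The main obstacle I anticipate is the first reduction: carefully justifying that \emph{every} (possibly singular) Gaussian in $\mloc(G)$ is the pushforward under $f_G$ of some finite parameter vector, i.e. that the structural-equation representation persists in the singular limit with real (finite) coefficients and some $\omega_i = 0$, rather than requiring coefficients that blow up. This is where the factorization property \eqref{def_fac} is essential: a Gaussian that factorizes w.r.t.\ $G$ has, for each $i$, a Gaussian conditional $K_{i|\mathrm{pa}(i)}$, which is an affine-linear function of $x_{\mathrm{pa}(i)}$ plus independent Gaussian noise — so the coefficients $\alpha_{ij}$ are precisely the (finite) regression coefficients and $\omega_i^2$ the (possibly zero) conditional variance. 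One subtlety: if $X_{\mathrm{pa}(i)}$ is itself singular the regression coefficients are not unique, but any choice supported on the edge set $E$ works, and such a choice exists because the relevant linear system is consistent. Once this representation is in hand, the perturbation argument and the continuity of $f_G$ finish both parts quickly; I would present (a) in full and then derive (b) from it via the diagonal-scaling remark, noting along the way the parallel claims with $(\hat I_G,\hat\Sigma)$ and with $\Sigma^{\mdot}$ that are asserted after the diagram.
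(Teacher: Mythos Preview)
Your proposal is correct and follows essentially the same route as the paper. Both arguments establish first that every (possibly singular) $\sigma\in\mloc(G)$ lies in $\mathrm{Im}\,f_G$, i.e.\ admits a $G$-compatible structural representation $X_i=\sum_{j\in\mathbf{pa}(i)}\alpha_{ij}X_j+\omega_iZ_i$, and then perturb the vanishing $\omega_i$'s to obtain a nonsingular approximant in $\mloc(G)\cap\Sigma^{++}$; part~(b) is derived from~(a) in both by diagonal rescaling and continuity at the unit-diagonal limit. The only substantive difference is in how the key inclusion $\mloc(G)\subseteq\mathrm{Im}\,f_G$ is justified: the paper gives a direct inductive argument, starting from an arbitrary (not $G$-compatible) linear representation and using the local Markov relation $X_n\ci X_{[n-1]\setminus\mathbf{pa}(n)}\,|\,X_{\mathbf{pa}(n)}$ to show that the non-parent part $X_{\mathbf{npa}}$ is a.s.\ a linear function of $X_{\mathbf{pa}(n)}$ and can be absorbed; you instead invoke Proposition~\ref{pro_locfac} as a black box and read off the structural equations from the Gaussian conditional kernels. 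Your route is slightly cleaner in that it reuses an already-proved statement, while the paper's hands-on argument makes the Gaussian mechanism explicit and avoids the small bookkeeping (which you correctly flag) of verifying that the noises $Z_i$ obtained from the factorization are jointly independent.
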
  

\begin{proof}
\ifcomplete
For part (a), we want to show $\mloc(G)\subset [\Imf_G\cap \Sigma^{++}]$. Given a Gaussian $X\in \mLoc(G)$, start with a representation
\[
X_i=\sum_{j<i} \alpha_{ij} X_j+\omega_iZ_i,
\]
where $Z_{i}$'s are i.i.d. Gaussians. We need to show that there exists a $G$-compatible representation $\{\alpha_{ij}',\omega_i\}$ for $X$, i.e., $\alpha'_{ij}= 0$ for all adjacent nodes $i,j$ in $G$. We may assume by induction that the $\alpha_{ij}$'s are $G$-compatible for $i,j<n$. Now write
\[
X_n=X_{\mathbf{pa}}+X_{\mathbf{npa}}+\omega_nZ_n,
\]
where $$X_{\mathbf{pa}}=\sum_{i\in \mathbf{pa}(n)} \alpha_{in} X_i, \,\,X_{\mathbf{npa}}=\sum_{i\in ([n-1]\backslash\mathbf{pa}(n)} \alpha_{in} X_i.$$
 Note that, for general random variables $A,B,C$, we have
\[
A\ci (B+C) |C\iff A\ci B|C.
\]
It thus follows that
\[
X_{\mathbf{n}\p(n)}\ci X_{\mathbf{n}\p(n)}+\omega_nZ_n|X_{\p(n)}.
\]
Now observe that for independent random variables $A,Z$
\begin{align*}
A\ci A+Z \implies \mathbb{E}[(A+Z)A]=\mathbb{E}[A+Z]\mathbb{E}[A],
\end{align*}
which implies $\mathbb{E}[A^2]=\mathbb{E}[A]^2$. In particular, if $A$ is a Gaussian then it must be a constant. It follows from this observation that $X_{\mathbf{n}\p(n)}$ is a linear function of $X_{\p(n)}$, say $X_{\mathbf{n}\p(n)}=cX_{\p(n)}$. The $G$-compatible $\alpha'_{in}$'s are then obtained by setting $\alpha'_{in}=(1+c)\alpha_{in}$ if $i\in \p(n)$ and $\alpha'_{in}=0$ otherwise.

For part (b), given $\sigma\in \mloc(G)\cap \hat{\Sigma}^{++}$, we can find a sequence $\sigma_n$ in $\mloc(G)\cap \Sigma^{++}$ that converges to $\sigma$. Pass to a subsequence with $\sigma_{ii}\neq 0$ for all $i$. Normalize the coordinates (using the defining equations) along $\sigma_{ii}$'s to obtain a sequence in $\mloc(G)\cap \hat{\Sigma}^{++}$. 
Normalization is continuous around $\sigma$. Thus, the new sequence convergences to $\sigma$ as well.
\else
see~\cite{RP14-arxiv}.
\fi
\end{proof}

\ifcomplete
\begin{remark} The above proof also shows that \mbox{$\mloc(G)=\Imfp_G\cap \Sigma^{+}$}.
\end{remark}
\fi
This proposition shows that $X_G$ contains all $G$-factorable Gaussians. There are, however, (singular) covariances on $X_G$ that are not $G$-compatible. In other words, unlike independence, conditional independence is not preserved under weak limits as shown in the following example.
\begin{example}[$\mloc(G)$ is not closed] 
Let $X_n \sim N(0,1)$, $W_n\sim N(0,1)$ be independent Gaussians. Set $Z_n=X_n$ and $Y_n=\frac{1}{n} X_n+\frac{\sqrt{n^2-1}}{n}W_n$. Then $X_n\ci Z_n|Y_n,W_n$ for all $n$ and $P_{X_n,Y_n,Z_n,W_n}\stackrel{w}{\rightarrow} P_{X,Y,X,W}$ with $X\sim N(0,1),W\sim N(0,1)$ and $Y=W$. However, $$X\not\ci X|W.$$ 
Thus the closure of $\mloc(G)\cap \hat{\Sigma}^{++}$ strictly contains \mbox{$\mloc(G)\cap \hat{\Sigma}$}. 
\label{ex_cicounter}
\end{example}

\begin{remark}
\indent
 In general, the weak convergence of the joint $P_{X^{(n)}}\stackrel{w}{\rightarrow} P_{X}$ does not imply that of the conditional kernels $P_{X^{(n)}_i|X^{(n)}_j}\stackrel{w}{\rightarrow} P_{X_i|X_j}$. If the latter conditions are also satisfied, then conditional independence is preserved at the (weak) limit\footnote{This follows directly from the lower semi-continuity of divergence.}. 
\end{remark}

\ifcomplete
\subsection{DAG isomorphism}
The next result states that isomorphism of DAGs can be decided inside $\mathcal{N}^+$\footnote{This statement generalizes to the class of chain graphs. See \cite{lev01} for details.}:
\begin{proposition}[Theorem 5.1 in \cite{lev01}]
Let $G,G'$ be DAGs. Then the following are equivalent:
\begin{enumerate}[(a)]
\item $G$ and $G'$ are equal.
\item $G$ and $G'$ are $\mathcal{N}$-equal.
\item $G$ and $G'$ are $\mathcal{N}^+$-equal.
\end{enumerate}
\label{pro_dagiso}
\end{proposition}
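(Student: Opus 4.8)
The plan is to prove the chain of implications (a) $\Rightarrow$ (b) $\Rightarrow$ (c) $\Rightarrow$ (a), where only the last is substantive. The implications (a) $\Rightarrow$ (b) and (b) $\Rightarrow$ (c) are immediate: since $\mathcal{N}^+\subset\mathcal{N}$, equality of $\mLoc(G)$ and $\mLoc(G')$ forces equality of their intersections with $\mathcal{N}$, and likewise equality of the $\mathcal{N}$-intersections forces equality of the $\mathcal{N}^+$-intersections. So the real content is (c) $\Rightarrow$ (a): if $G$ and $G'$ are compatible with exactly the same non-singular Gaussians, then they are compatible with exactly the same probability distributions.

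The key reduction is to pass from distributions to conditional independence structure. By Proposition~\ref{pro_locfac} we have $\mFac(G)=\mLoc(G)$, and $\mLoc(G)$ is determined entirely by the set of implied relations $\mathcal{C}_G$ (a distribution lies in $\mLoc(G)$ iff it satisfies every CI statement in $\mathcal{M}_G$, equivalently every statement in $\mathcal{C}_G$). Hence two DAGs are equal iff $\mathcal{C}_G=\mathcal{C}_{G'}$. So it suffices to show that $\mathcal{C}_G$ is recovered from $\mloc(G)\cap\Sigma^{++}$: that is, a CI statement $I\ci J\mid K$ belongs to $\mathcal{C}_G$ if and only if it holds for every non-singular Gaussian in $\mLoc(G)$. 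One direction is trivial. For the other, I would argue contrapositively: if $I\ci J\mid K\notin\mathcal{C}_G$, then by definition there exists \emph{some} $P\in\mLoc(G)$ violating it; using $\mFac(G)=\mLoc(G)$ I can take $P$ to be a factorizing Gaussian, and moreover — via the parametrization \eqref{eq_rational} with all $\omega_i\neq 0$ and generic edge weights — I can perturb it to a non-singular factorizing Gaussian that still violates $I\ci J\mid K$ (the violation is the non-vanishing of a fixed minor, a Zariski-open and hence generic condition on the parameters, while non-singularity of the covariance is also Zariski-open and satisfied for generic parameters). This produces a non-singular Gaussian in $\mLoc(G)$ witnessing that the statement fails, so it cannot hold for all of $\mloc(G)\cap\Sigma^{++}$. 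Running this for both $G$ and $G'$ gives $\mathcal{C}_G=\mathcal{C}_{G'}$, hence $G=G'$.

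The step I expect to be the main obstacle is making the perturbation argument airtight: given a factorizing Gaussian whose covariance is singular but which violates a particular CI minor, I must produce a \emph{non-singular} factorizing Gaussian still violating that minor. This is where the explicit form $\Sigma=(I-A)^{-1}\Omega^2(I-A^*)^{-1}$ from Section~II.B is essential — it shows the covariance is non-singular precisely when all $\omega_i\neq 0$, and that the set of parameters giving a violation of the fixed minor is Zariski-open; one then only needs this open set to be non-empty (which is what the original violating distribution supplies, after first lifting it to a factorizing representation via Proposition~\ref{pro_locfac}). A clean way to package this is to invoke Proposition~\ref{prop:irred}: $[\mloc(G)\cap\Sigma^{++}]_Z$ is irreducible of dimension $n+|E|$, and it contains all factorizing Gaussians by the remark following Proposition~\ref{pro_dense}, so a polynomial (the minor) vanishing on $\mloc(G)\cap\Sigma^{++}$ must vanish on the whole image of $f_G$, i.e.\ on every factorizing Gaussian — contradicting the existence of a violating $P\in\mLoc(G)=\mFac(G)$. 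Alternatively one can simply cite~\cite{lev01} for this well-known fact; I would include the short argument for completeness but flag it as standard.
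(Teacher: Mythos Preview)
Your reduction to recovering $\mathcal{C}_G$ from $\mloc(G)\cap\Sigma^{++}$ is correct, and the trivial implications are fine. But the contrapositive step for (c)$\Rightarrow$(a) has a real gap, and it is not the perturbation issue you anticipated. From $I\ci J\mid K\notin\mathcal{C}_G$ you get \emph{some} $P\in\mLoc(G)=\mFac(G)$ violating the statement; you then assert you ``can take $P$ to be a factorizing Gaussian,'' and in the cleaner version argue that if the minor vanished on all of $\mloc(G)\cap\Sigma^{++}$ it would vanish on every point of $\Imfp_G$, ``contradicting the existence of a violating $P$.'' But $P$ is an arbitrary distribution, and for non-Gaussian $P$ the relation $X_I\ci X_J\mid X_K$ is \emph{not} a covariance condition --- Proposition~\ref{pro_minors} is specific to Gaussians. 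So even if the minor vanishes on every covariance in $\Imfp_G$, that says nothing about whether $I\ci J\mid K$ holds for the non-Gaussian $P$, and there is no contradiction. You are, in effect, assuming what must be proved: that a non-implied CI statement already admits a Gaussian witness.

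The paper simply cites~\cite{lev01} for this result, and the argument there avoids the circularity by not starting from an abstract violator at all. One uses $d$-separation: by soundness, $I\ci J\mid K\notin\mathcal{C}_G$ forces $i$ and $j$ to be $d$-connected given $K$; from a $d$-connecting path one writes an explicit polynomial $g(\lambda)$ in the structural parameters of~\eqref{eq_rational} whose non-vanishing is equivalent to $X_i\not\ci X_j\mid X_K$, checks that $g\not\equiv 0$ (distinct $d$-paths contribute distinct monomials), and then chooses generic $\lambda$ with all $\omega_i\neq 0$. This manufactures the required non-singular Gaussian witness directly from the combinatorics of $G$, and as a byproduct shows that the $d$-separation relations, $\mathcal{C}_G$, and the CI relations holding on $\mloc(G)\cap\Sigma^{++}$ all coincide.
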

This property is also known as the faithfulness of Gaussians in the statistics literature (cf.~ \cite{uhl13}). 
Let us point out that, in general, $\SNP$-isomorphic models are not $\SN$-isomorphic as shown in the next example.

\begin{example}
Consider the models
\begin{align*}
{G}_1&: 1\ci 3|2\quad\&\quad 1\ci 2|3\quad\&\quad 2\ci3|1\\
{G}_2&: 1\ci 2\ci 3
\end{align*}
A non-singular Gaussian belongs to the first model if and only if it belongs to the second model. However, a Gaussian $X_1=X_2=X_3$ is only compatible with the first model.
\label{ex_nplus}
\end{example}

\fi

\subsection{DAG varieties and ideals}
Here we provide some algebraic and geometric descriptions for $\mloc(G)$:
\begin{theorem} Let $G$ be a DAG and let $\theta_0=\prod_{A\subset [n]}(|{\Sigma}_{AA}|)$.
\noindent
\begin{enumerate}[(a)]
\item There is a Zariski closed subset $B_G$ so that 
$$ V(I_G) = X_G \cup B_G $$
where $B_G \subseteq V(\theta_0) = \{\theta_0=0\}$.
\item Let $\mpf_G=I(\mloc(G))$ so that $X_G=V(\mpf_G)$. Then $\mpf_G$ is a prime ideal obtained by saturating $I_G$
\begin{equation}\label{eq:mpfg}
		\mpf_G = S^{-1} I_G \cap \CC[\Sigma] 
\end{equation}
at the multiplicatively closed set $S=\{\theta_0^n, n=1,\ldots\}$.
\item $X_G$ is smooth inside $\Sigma^{\mdot}$.
\end{enumerate}
\label{pro_real2}
\label{pro_embcomp}
\label{pro_pdecom}
\label{cor_conj}
\label{thm:thm1}
\end{theorem}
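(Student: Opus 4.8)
The plan is to prove part (b) first since (a) and (c) follow from it by geometric arguments. The heart of the matter is the identity $\mpf_G = S^{-1}I_G \cap \CC[\Sigma]$. Recall from Proposition~\ref{prop:irred} and Proposition~\ref{pro_dense} that $X_G = [\mloc(G)]_Z = [\mloc(G)\cap\Sigma^{++}]_Z$ is irreducible, so $\mpf_G = I(X_G)$ is indeed prime. To identify it with the saturation, I would argue on the level of varieties that $V(S^{-1}I_G\cap\CC[\Sigma])$, which equals the Zariski closure of $V(I_G)\setminus V(\theta_0)$, coincides with $X_G$. For the inclusion $X_G \subseteq \overline{V(I_G)\setminus V(\theta_0)}$: by \eqref{eq:vcmloc} we have $\mloc(G)\cap\Sigma^{++}\subseteq V(I_G)$, and every such matrix has all principal minors positive, hence lies off $V(\theta_0)$; taking Zariski closures gives $X_G\subseteq\overline{V(I_G)\setminus V(\theta_0)}$. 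For the reverse inclusion, I would use the rational parametrization $f_G$: a point in $V(I_G)$ with $\theta_0\neq 0$ has all its diagonal submatrices invertible, so one can solve for the regression coefficients $\alpha_{ij}$ via Schur complements/Cramer's rule and recover a point in the image of $f_G$ over $\CC$; since that image is Zariski dense in $X_G$ by Proposition~\ref{prop:irred}, every point of $V(I_G)\setminus V(\theta_0)$ lies in $X_G$, and $X_G$ being closed absorbs the closure. Here the key technical point — and the step I expect to be the main obstacle — is verifying that the imposed minors $|\Sigma_{iK,jK}|$ vanishing, \emph{together with} invertibility of all principal minors, really does force the existence of a valid $G$-compatible parametrization over $\CC$; this is essentially the statement that on $\Sigma^{\mdot}$ the imposed (local Markov) relations already imply all the implied relations $\mathcal C_G$, which in the non-singular setting is Proposition~\ref{pro_minors} but must be pushed through algebraically over $\CC$ rather than just for positive-definite reals.

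Granting the varietal identity $X_G = \overline{V(I_G)\setminus V(\theta_0)}$, the ideal statement \eqref{eq:mpfg} follows from the standard fact that for the multiplicative set $S=\{\theta_0^n\}$, the saturation $S^{-1}I_G\cap\CC[\Sigma]$ is a radical ideal iff $I_G$ has no embedded components supported on $V(\theta_0)$ contributing nilpotents, and more usefully from: $\sqrt{S^{-1}I_G\cap\CC[\Sigma]} = I(\overline{V(I_G)\setminus V(\theta_0)}) = I(X_G) = \mpf_G$. To upgrade from radical to equality $S^{-1}I_G\cap\CC[\Sigma] = \mpf_G$ I would invoke that $\mpf_G$ is the \emph{unique} minimal prime of $I_G$ not containing $\theta_0$ (uniqueness because $V(I_G)\setminus V(\theta_0)$ is irreducible with closure $X_G$), and then the localization $S^{-1}I_G$ is $\mpf_G$-primary in $S^{-1}\CC[\Sigma]$; combined with the fact (which one checks by a Jacobian/tangent-space computation at the identity matrix, cf. part (c)) that $I_G$ is already radical at the generic point of $X_G$, the contraction back to $\CC[\Sigma]$ is exactly $\mpf_G$. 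This is where Sullivant's Conjecture 3.3 gets established.

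For part (a): set $B_G = V(I_G)\cap V(\theta_0)$, the union of all irreducible components of $V(I_G)$ on which $\theta_0$ vanishes identically, together with the intersection of $X_G$ with $\{\theta_0=0\}$ — actually cleaner to take $B_G$ to be the union of components of $V(I_G)$ other than $X_G$. By the primary decomposition of $I_G$ and the uniqueness statement above, every minimal prime of $I_G$ except $\mpf_G$ contains $\theta_0$, so each corresponding component lies in $V(\theta_0)$; hence $V(I_G) = X_G\cup B_G$ with $B_G\subseteq V(\theta_0)$, and $B_G$ is Zariski closed as a finite union of closed sets. Part (c): on the open set $\Sigma^{\mdot}$, the map $f_G$ (restricted appropriately) gives an explicit parametrization, and one shows $f_G$ is an isomorphism onto its image there by exhibiting the inverse via Schur complements — the coefficients $\alpha_{ij}$ and variances $\omega_i^2$ are rational functions of the $\sigma_{ij}$ with denominators dividing powers of $\theta_0$. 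Thus $X_G\cap\Sigma^{\mdot}$ is the image of an open subset of affine space under a map with a regular inverse, hence smooth; equivalently the Jacobian of the defining equations has full expected rank everywhere on $\Sigma^{\mdot}$, which one can also verify directly at the identity and propagate by the transitive action coming from rescaling and the lower-triangular structure. I expect (c) to be routine once the parametrization-inverse over $\Sigma^{\mdot}$ is written down carefully, and the same inverse is exactly what powers the main obstacle in part (b).
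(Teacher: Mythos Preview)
Your overall strategy—establish the varietal identity $X_G=\overline{V(I_G)\setminus V(\theta_0)}$ geometrically, then lift to ideals—is sound, and your identification of the main obstacle (showing that every point of $V(I_G)\cap\Sigma^{\mdot}$ admits a $G$-compatible parametrization over $\CC$) is accurate. The paper pursues the same underlying idea but packages it purely algebraically: rather than separating the variety-level argument from a radicality argument, it works directly in $S^{-1}\CC[\Sigma]$ and performs a triangular elimination on the generators $g_{ij}=\sigma_{ij}\,|\sigma_{KK}|-h_{ij}$ (iteratively substituting lower-indexed non-edge variables, using that each $|\sigma_{KK}|$ is a unit after localization) to construct an explicit surjection $S^{-1}\CC[\Sigma]\to\CC[\Sigma_{\edge}][1/gh]$ whose kernel is verified to be exactly $S^{-1}I_G$. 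This single ring isomorphism simultaneously delivers primality of $S^{-1}I_G$, the equality $S^{-1}I_G=S^{-1}\mpf_G$ (by a codimension match with Proposition~\ref{prop:irred}), and smoothness on $\Sigma^{\mdot}$; no separate radicality step is needed.

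The genuine gap in your proposal is the radicality upgrade. Knowing that $\mpf_G$ is the unique minimal prime of $S^{-1}I_G$ does not give $S^{-1}I_G=S^{-1}\mpf_G$ until you rule out embedded components and nilpotents. Your proposed fix—Jacobian rank at the identity plus ``propagate by the transitive action coming from rescaling and the lower-triangular structure''—does not work as stated: there is no transitive group action on $X_G\cap\Sigma^{\mdot}$ (diagonal rescaling is far from transitive), so a local computation at the identity cannot be transported. What \emph{would} work is showing that the Jacobian of the generators $\{g_{ij}\}$ with respect to the non-edge variables has full rank at \emph{every} point of $\Sigma^{\mdot}$; this is true because, ordering non-edge pairs $(i,j)$ by $j$, that Jacobian is block lower-triangular with diagonal entries $|\sigma_{\p(j)\p(j)}|$. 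But recognizing and exploiting that triangular structure is precisely the paper's elimination argument—so your ``main obstacle'' and your radicality gap are really the same difficulty, and the cleanest way to dispatch both at once is the paper's direct ring-isomorphism construction rather than two separate geometric and Jacobian arguments.
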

\begin{remark}
\noindent
\begin{enumerate} [(a)]
\item In Theorem \ref{pro_pdecom}b, we can replace $I_G$ with $J_G$. \ifcomplete \else Analogous statements hold over $\mathbb{C}[\hat{\Sigma}]$ as shown in ~\cite{RP14-arxiv}.\fi
\item It follows that $V(I_G)$ and $V(J_G)$ do not miss a single $G$-compatible Gaussian, but can add some bad components to the boundary $\partial \Sigma^{++}$. Theorem \ref{cor_conj}b states this in algebraic terms and provides a proof of Conjecture 3.3 in \cite{sul08}. Theorem 8 in \cite{gar05} gives an analogous result for the implied ideals of discrete random variables. 
\item In Theorem \ref{cor_conj}b, one can replace $\theta_0$ with the product of principal minors $|\sigma_{KK}|$ where $K$ appears as a conditional set in some imposed relation $i\ci j|K$.
\item There are many equivalent ways to recover $\mpf_G$ from $I_G$ besides~\eqref{eq:mpfg}. Indeed, (e.g.~\cite[Chapter
4]{ati69}) we have
$$ \mpf_G = (I_G:\theta_0^m)$$
for all $m$ sufficiently large. Another characterization is from primary decomposition of $I_G$:
$$ I_G = \mpf_G \cap \mq_1 \cdots \mq_r\,,$$
where $\mpf_G$ is the unique component that is contained in maximal ideal $\mathfrak{m}_x$ corresponding to covariance
matrix $x$ with non-singular principal minors (e.g. identity).
\ifcomplete
\item One can ask if Theorem 1 generalizes, i.e., if $[V(I_{G})\cap \Sigma^{++}]_Z=[V(I_{G})\cap \Sigma^{\mdot\,}]_Z$ for any conditional independence model ${G}$. This is equivalent to asking if  some component of $V(I_{G})$ can intersect $\Sigma^{\mdot\,}$ but avoid $\Sigma^{++}$. This question appears in \cite{mat05}, and remains open so far as we know. 
\fi
\end{enumerate}
\end{remark}

\begin{proof}
\ifcomplete
Let us consider the ring $S^{-1} \CC[\Sigma]$ and 
if $ij$ is an edge in $G$ or if $i=j$, call $\sigma_{ij}$ an edge variable and the rest are non-edge variables. Denote by $\Sigma_{\edge}$ the
subspace of $\Sigma=[\sigma_{ij}]$ corresponding to the edge variables. Note that the ideal $S^{-1}I_G$ has one generator
$g_{ij}$ 
for every non-edge variable $\sigma_{ij}$. This generator corresponds to the constraint $i\dperp j | K$ where $i<j$ and
$K \subset\{k: k<j\}$ are parents of $j$ and we have
$$ g_{ij} = \sigma_{ij} |\sigma_{KK}| - h_{ij}\,. $$
Here $|\sigma_{KK}|$ and $h_{ij}$ are polynomials in $\{\sigma_{a,b}, a\le b<j\}$. Now introduce a lexicographic ordering
on pairs $(i,j)$\footnote{For instance, take $(i,j)<(i',j')$ if $i<i'$ or $i=i'$ and $j<j'$ in the topological sort.} and among all non-edge variables entering into polynomial on the
right consider the maximal one -- denote it $\sigma_{i',j'}$. This variable has its corresponding generator:
$$ g_{i'j'} = \sigma_{i'j'} |\sigma_{K'K'}| - h_{i'j'}. $$
Thus multiplying by a suitable power of $|\sigma_{K' K'}|$ we can write
$$ |\sigma_{K' K'}|^r g_{ij} = |\sigma_{K' K'}|^r (\sigma_{ij} |\sigma_{KK}| - h_{ij}) $$
and now every occurrence of $|\sigma_{K' K'}| \sigma_{i' j'}$ we replace with $g_{i'j'}+h_{i'j'}$. In the end we obtain
$$ |\sigma_{K' K'}|^r g_{ij} = \sigma_{ij}  u'_{ij} - h'_{ij}\,,$$
where the expression on the right no longer contains $\sigma_{i',j'}$ or any larger (w.r.t. ordering of pairs) variable.
Thus, repeating similar steps in the end we obtain expression:
$$ \gamma_{ij} g_{ij} = \sigma_{ij} \tilde u_{ij} - \tilde h_{ij} $$
where we have:
\begin{enumerate}
	\item $\gamma_{ij}=\gamma_{ij}(\sigma)$ is a polynomial with $\ZZ$-coefficients.
	\item $\gamma_{ij}$ and $\tilde u_{ij}$ are units in $S^{-1}\CC[\Sigma]$ (equivalently, they divide $\theta_0^m$ for
	some large enough $m$).
	\item $\tilde u_{ij}$ and $\tilde h_{ij}$ are both polynomials with $\ZZ$-coefficients in \textit{edge} variables $\sigma_{a,b}$ with
	$a\le b < j$.
\end{enumerate}

Consequently, since $\gamma_{ij}$'s are units we have
\begin{equation}\label{eq:sgen}
		S^{-1}I_G = (\sigma_{ij} \tilde u_{ij} - \tilde h_{ij}, (i,j)\mbox{-- non-edge}) 
\end{equation}
Note that on one hand $\mpf_G = I(X_G)$ contains $I_G$. On the other hand, by Proposition~\ref{prop:irred}
and~\eqref{eq:sgen} any minimal prime above $S^{-1}I_G$ has codimension equal to $S^{-1} \mpf_G$. Thus, if we show that
$S^{-1} I_G$ is prime we must have
$$ S^{-1} I_G = S^{-1}\mpf_G $$
and after intersecting with~$\CC[\Sigma]$ conclusion~\eqref{eq:mpfg} and the rest of the theorem follow. 

To that end let $g = g(\sigma_{\edge})$ be the product of all $\tilde{u}_{ij}$. Let $\mathring{\sigma}$ be the identity matrix 
\[
\mathring{\sigma}_{ij}=\left\{\begin{array}{cc} 1& i=j\\0 &i\neq j \end{array}\right.
\]
and let
\[
\varphi^*:\CC[\Sigma]\rightarrow \CC[\Sigma_{\edge}][1/g]
\]
be the ring map associated to the rational map $\Sigma_{\edge}\stackrel{\varphi}{\to} \Sigma$ given by
\begin{equation}
\varphi:\sigma_{\edge}\mapsto \left(\sigma_{\edge}, {\tilde h_{ij}(\sigma_{\edge})\over \tilde u_{ij}(\sigma_{\edge})}, (i,j)\mbox{--
non-edge}\right)\label{eq:ltt}.
\end{equation}
First note that $\mathring{\sigma}$ is a $\CC$-point of $\CC[\Sigma_{\edge}][1/g]$ since
\[
\theta_0(\mathring{\sigma})=1, \,\,g|\theta_0^m\implies g(\mathring{\sigma})\neq 0.
\]
Thus we can form a commutative diagram
\begin{center}
\tikzset{node distance=1.5cm, auto}
\begin{tikzpicture}[auto]
  \node (A0){$\CC[\Sigma]$};
  \node (A1)[right of=A0,node distance=2cm]{$\CC[\Sigma_{\edge}][1/g]$};
  \node (B0)[below of=A0,node distance=0.5cm,right of=A0,node distance=1cm] {$\CC$};
  \path [line] (A0) --node[above]{$\varphi^*$} (A1);
  \path [line] (A0) --node[below,left]{$\mathrm{ev}_{\mathring{\sigma}}$} (B0);
  \path [line] (A1) --node[below,right]{$\mathrm{ev}_{\mathring{\sigma}_{\edge}}$} (B0);
\end{tikzpicture}
\end{center}
Chasing $\theta_0$ from $\CC[\Sigma]$ to $\CC$ in two different ways gives $\theta_0\notin\ker \varphi^*$, and thus $\varphi^*(\theta_0)=h/g^m$ for $h\neq 0$. Localizing $\CC[\Sigma_{\edge}][1/g]$ at $h$ gives a diagram
\begin{center}
\tikzset{node distance=1.5cm, auto}
\begin{tikzpicture}[auto]
  \node (Am1){$\CC[\Sigma]$};
  \node (A0)[right of=Am1,node distance=2cm]{$\CC[\Sigma_{\edge}][1/g]$};
  \node (A1)[right of=A0,node distance=3cm] {$\CC[\Sigma_{\edge}][1/gh]$};
  \path [line] (Am1) --node[above]{$\varphi^*$} (A0);
  \path [line] (A0) --node[above]{$\varphi^*_h$} (A1);
  \path [dashed] (Am1) edge [out=45,in=135,->]node[below]{$\varphi^*_{gh}$} (A1);  
\end{tikzpicture}
\end{center}

Note that $\varphi^*_{gh}$ sends $\theta_0$ to a unit. Hence, by the universal property  of localization (see~\cite[Proposition 3.1]{ati69}), it extends to a map
\begin{align}
	\varphi_{gh}^{*,e}:S^{-1}\CC[\Sigma] \to \CC[\Sigma_{\edge}][1/gh]\label{eq:morp}\\
	\sigma_{ij}\mapsto \varphi^*(\sigma_{ij}),1/\theta_0\rightarrow g^m/h
\end{align}
that is onto and has $S^{-1}I_G$ as kernel. To verify the latter claim, take $s$ in the kernel of (\ref{eq:morp}) and write it as $s=p_\edge+\sum_{ij}q_{ij}.(\sigma_{ij}- \tilde h_{ij}/\tilde u_{ij})$ where $q_{ij}\in S^{-1}\CC[\Sigma]$ with $(i,j)\mbox{--non-edge}$, and $p_\edge$ is a polynomial in $\sigma_{\edge}$. This can be done by virtue of the binomial theorem:
\[
\sigma_{ij}^n=(\frac{\tilde{h}_{ij}}{\tilde{u}_{ij}}-\frac{\tilde{h}_{ij}}{\tilde{u}_{ij}}+\sigma_{ij})^n=(\frac{\tilde{h}_{ij}}{\tilde{u}_{ij}})^n+n(\frac{\tilde{h}_{ij}}{\tilde{u}_{ij}})^{n-1}(\sigma_{ij}-\frac{\tilde{h}_{ij}}{\tilde{u}_{ij}})+\cdots
\]
Then $\varphi^*(s)=p_\edge$, and thus $\varphi^*(s)=0$ gives $p_\edge=0$, that is $s\in S^{-1}I_G$. The reverse inclusion is obvious. 

This establishes isomorphism of rings
$$ S^{-1} \CC[\Sigma]/S^{-1}I_G = \CC[\Sigma_{\edge}][1/gh], $$
which implies that $S^{-1} I_G$ is prime, and that each local ring of $X_G \cap \Sigma^{\mdot}$ is regular (since
all local rings of  $\CC[\Sigma_{\edge}][1/gh]$ are regular).

Geometrically our proof corresponds to constructing a birational isomorphism:
\begin{center}
\tikzset{node distance=1.5cm, auto}
\begin{tikzpicture}[auto]
  \node (Am1){$\Sigma_{\edge}\supset$};
  \node (A0)[right of=Am1,node distance=0.8cm]{$\mathcal{U}$};
  \node (A1)[right of=A0,node distance=2cm] {$V(I_G)\cap\Sigma^{\mdot{}}$};
  \path [line] (A0) --node[above]{$\varphi{|_\mathcal{U}}$} (A1);
  \path [dashed] (A1) edge [out=-135,in=-45,->]node[below]{$\pi$} (A0);  
\end{tikzpicture}
\end{center}
where $\mathcal{U}=D(gh)$ is a distinguished open, $\varphi|_\mathcal{U}$ is obtained by restriction of the map given in~\eqref{eq:ltt}($\varphi$ is regular on $\mathcal{U}$), and $\pi$ is the projection from
$\Sigma$ to $\Sigma_{\edge}$.
\else 
see~\cite{RP14-arxiv} for details. 
\fi
\end{proof}

\ifcomplete

Alternatively, we can choose to prove our results over $\CC[\hat{\Sigma}]$. Recall that $\hat X_G$ is defined as $[\mloc(G) \cap
  \hat{\Sigma}^{++}]_Z$. 

\begin{theorem} Let $G$ be a DAG and let $\hat{\theta}_0=\prod_{A\subset [n]}(|{\hat{\Sigma}}_{AA}|)$. With the notation of Theorem~\ref{thm:thm1},
\noindent
\begin{enumerate}[(a)]
\item  $\hat {X}_G$ is an affine rational (irreducible) variety of dimension $|E|$ and $\hat X_G = X_G \cap \hat \Sigma$.

\item We have $$ V(\hat{I}_G) = \hat{X}_G \cup \hat{B}_G, $$
where $\hat B_G \subset\{\hat{\theta}_0=0 \}$.
\item The prime ideal $\hat p_G \eqdef I(\hat X_G)$ is obtained by saturating $\hat{I}_G$
\begin{equation}\label{eq:hmpfg}
		\hat{\mpf}_G = \hat{S}^{-1} \hat{I}_G \cap \CC[\hat{\Sigma}] 
\end{equation}
at the multiplicatively closed set $\hat{S}=\{\hat{\theta}_0^n, n=1,\ldots\}$.
\item $\hat{X}_G$ is smooth inside $\hat{\Sigma}^{\mdot}$.
\end{enumerate}
\label{thm:xhat}
\end{theorem}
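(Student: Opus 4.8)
The plan is to transcribe the proof of Theorem~\ref{thm:thm1} into the ring $\CC[\hat{\Sigma}]$, where now the diagonal coordinates are identically $1$ and the \emph{edge variables} are exactly the off-diagonal $\sigma_{ij}$ with $ij\in E$, so that $\Sigma_{\edge}\cong\CC^{|E|}$. First I would check that $\hat{S}^{-1}\hat{I}_G$ still has one generator $g_{ij}=\sigma_{ij}\,|\hat{\Sigma}_{KK}|-h_{ij}$ per non-edge variable (with $K=\p(j)$), and that the triangular elimination of Theorem~\ref{thm:thm1} goes through verbatim: lexicographically eliminating the largest non-edge variable occurring on the right and clearing denominators by units of $\hat{S}^{-1}\CC[\hat{\Sigma}]$ brings each generator to a normal form $\sigma_{ij}\tilde{u}_{ij}-\tilde{h}_{ij}$ with $\tilde{u}_{ij},\tilde{h}_{ij}$ polynomials in the edge variables $\sigma_{ab}$, $a\le b<j$, and $\tilde{u}_{ij}$ a unit. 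The identity matrix $\mathring{\sigma}$ still lies in $\hat{\Sigma}$ with $\hat{\theta}_0(\mathring{\sigma})=1$, so it remains a $\CC$-point of $\CC[\Sigma_{\edge}][1/\hat{g}\hat{h}]$; the two commuting triangles and the appeal to the universal property of localization then produce, exactly as before, a ring isomorphism $\hat{S}^{-1}\CC[\hat{\Sigma}]/\hat{S}^{-1}\hat{I}_G\cong\CC[\Sigma_{\edge}][1/\hat{g}\hat{h}]$ together with a birational isomorphism between a distinguished open $\mathcal{U}=D(\hat{g}\hat{h})\subseteq\CC^{|E|}$ and $V(\hat{I}_G)\cap\hat{\Sigma}^{\mdot}$.

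From this isomorphism the bulk of the statement follows at once. Primeness of $\hat{S}^{-1}\hat{I}_G$ gives primeness of its contraction $\hat{\mpf}_G\eqdef\hat{S}^{-1}\hat{I}_G\cap\CC[\hat{\Sigma}]$; regularity of all local rings of a localized polynomial ring gives part~(d) (smoothness in $\hat{\Sigma}^{\mdot}$); and $\CC[\Sigma_{\edge}][1/\hat{g}\hat{h}]$ being a localization of $\CC[x_1,\dots,x_{|E|}]$ gives rationality of $\hat{X}_G$ and $\dim\hat{X}_G=|E|$. To see that $V(\hat{\mpf}_G)$ really is $\hat{X}_G$, note that the real points of $V(\hat{I}_G)$ lying in $\hat{\Sigma}^{++}$ coincide, by the $\hat{\Sigma}$-version of~\eqref{eq:vcmloc} (a consequence of Proposition~\ref{pro_minors}), with $\mloc(G)\cap\hat{\Sigma}^{++}$; since $V(\hat{I}_G)\cap\hat{\Sigma}^{\mdot}$ is irreducible (being isomorphic to $D(\hat{g}\hat{h})$) and $\hat{\Sigma}^{++}\subseteq\hat{\Sigma}^{\mdot}$ contains a Euclidean neighbourhood of $\mathring{\sigma}$, that real set is Zariski-dense in $V(\hat{I}_G)\cap\hat{\Sigma}^{\mdot}$, so $\hat{X}_G=[\mloc(G)\cap\hat{\Sigma}^{++}]_Z=[V(\hat{I}_G)\cap\hat{\Sigma}^{\mdot}]_Z=V(\hat{\mpf}_G)$; as $\hat{\mpf}_G$ is prime, $\hat{\mpf}_G=I(\hat{X}_G)$, which is~\eqref{eq:hmpfg} (part~(c)). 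Finally, in a primary decomposition $\hat{I}_G=\bigcap_i\mq_i$, the components with $\hat{\theta}_0\in\sqrt{\mq_i}$ are exactly those removed by the saturation, so — since $V(\hat{\mpf}_G)=\hat{X}_G$ is irreducible — $\hat{X}_G$ is the single surviving component, and $\hat{B}_G\eqdef\bigcup_{\hat{\theta}_0\in\sqrt{\mq_i}}V(\mq_i)\subseteq V(\hat{\theta}_0)$ gives $V(\hat{I}_G)=\hat{X}_G\cup\hat{B}_G$ (part~(b)).

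It remains to establish $\hat{X}_G=X_G\cap\hat{\Sigma}$ in part~(a). The inclusion $\hat{X}_G\subseteq X_G\cap\hat{\Sigma}$ is free, since $\mloc(G)\cap\hat{\Sigma}^{++}\subseteq X_G\cap\hat{\Sigma}$ and the right side is Zariski-closed. For the reverse, observe first that, because $\theta_0=\prod_{A}|\Sigma_{AA}|$ and hence $\Sigma^{\mdot}=\Sigma\setminus V(\theta_0)$, Theorem~\ref{thm:thm1}(a) gives $V(I_G)\cap\Sigma^{\mdot}=X_G\cap\Sigma^{\mdot}$, whence $V(\hat{I}_G)\cap\hat{\Sigma}^{\mdot}=V(I_G)\cap\hat{\Sigma}\cap\Sigma^{\mdot}=X_G\cap\hat{\Sigma}^{\mdot}$; combined with $\hat{X}_G=[V(\hat{I}_G)\cap\hat{\Sigma}^{\mdot}]_Z$ from above, this already shows $\hat{X}_G=[X_G\cap\hat{\Sigma}^{\mdot}]_Z$. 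So it is enough to rule out a component of $X_G\cap\hat{\Sigma}$ lying inside $V(\theta_0)$. If $W$ were such a component, then since $\hat{\Sigma}$ is defined in $\Sigma$ by $n$ equations, Krull's height theorem gives $\dim W\ge\dim X_G-n=|E|$ (recall $\dim X_G=n+|E|$ by Proposition~\ref{prop:irred}). On the other hand, the rescaling torus $T=(\CC^{*})^n$ acting by $D\cdot\sigma=D\sigma D$ preserves both $X_G$ (it rescales the entries of a $G$-compatible covariance, preserving all conditional independences) and $V(\theta_0)$, and the orbit of any $\sigma\in W\subseteq\hat{\Sigma}$ meets $\hat{\Sigma}$ only in its finitely many sign-conjugates; hence the dominant morphism $T\times W\to\overline{T\cdot W}$ has finite generic fibres, so $\dim\overline{T\cdot W}=n+\dim W$. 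But $\overline{T\cdot W}\subseteq X_G\cap V(\theta_0)$, a proper closed subset of the irreducible variety $X_G$, so $n+\dim W\le\dim X_G-1=n+|E|-1$, i.e.\ $\dim W\le|E|-1$, contradicting $\dim W\ge|E|$. Therefore $X_G\cap\hat{\Sigma}^{\mdot}$ is dense in $X_G\cap\hat{\Sigma}$ and $\hat{X}_G=X_G\cap\hat{\Sigma}$.

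Almost all of this is mechanical once Theorem~\ref{thm:thm1} is in hand; I expect the only genuinely delicate point to be the very last step — that intersecting the irreducible variety $X_G$ with the non-generic linear slice $\hat{\Sigma}$ introduces no spurious component hiding in $V(\theta_0)$ — for which the torus/dimension argument above is the intended mechanism. A secondary, purely bookkeeping, nuisance is keeping the dictionary between $\CC[\hat{\Sigma}]$ and $\CC[\Sigma]/(\sigma_{ii}-1:i\in[n])$ consistent — e.g.\ that $\hat{\theta}_0=\theta_0|_{\hat{\Sigma}}$, $\hat{\Sigma}^{\mdot}=\hat{\Sigma}\cap\Sigma^{\mdot}$, $\hat{I}_G$ is the image of $I_G$, and $V(\hat{I}_G)=V(I_G)\cap\hat{\Sigma}$ — so that Theorem~\ref{thm:thm1} can be invoked directly rather than re-run.
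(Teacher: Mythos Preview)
Your argument is correct, and for parts (b), (c), (d) it is essentially the paper's proof: both obtain the isomorphism $\hat S^{-1}\CC[\hat\Sigma]/\hat S^{-1}\hat I_G\cong\CC[\hat\Sigma_{\edge}][1/\hat g\hat h]$, the paper by composing the map $\varphi^{*,e}_{gh}$ from Theorem~\ref{thm:thm1} with the evaluation $\sigma_{ii}\mapsto 1$ and identifying the kernel, you by rerunning the triangular elimination directly in $\CC[\hat\Sigma]$. Your density argument for $\hat X_G=V(\hat\mpf_G)$ (a Euclidean ball of real edge parameters near the identity is Zariski dense in $D(\hat g\hat h)\subset\CC^{|E|}$) is a streamlined version of what the paper does; the paper instead shows $\dim_\CC\hat X_G=|E|$ by first arguing $\mloc(G)\cap\hat\Sigma^{++}$ is a real manifold of dimension $|E|$, invoking a result from real algebraic geometry (\cite[Prop.~2.8.14]{boc98}) for the real Zariski closure, and then using going-up for the complexification.

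The genuine divergence is in proving $\hat X_G=X_G\cap\hat\Sigma$. The paper establishes directly, via the composed ring map, that $X_G\cap\hat\Sigma=V(\hat I_G:\hat\theta_0^m)$ is irreducible of dimension $|E|$; since $\hat X_G\subseteq X_G\cap\hat\Sigma$ and both are irreducible of the same dimension, equality follows. You instead use the diagonal torus $T=(\CC^{*})^n$ acting by $D\cdot\sigma=D\sigma D$: Krull's height theorem forces any component $W$ of $X_G\cap\hat\Sigma$ to have $\dim W\ge |E|$, while $T$-invariance of $X_G$ and $V(\theta_0)$ together with the fact that $T$-orbits meet $\hat\Sigma$ in finitely many points give $\dim\overline{T\cdot W}=n+\dim W\le\dim X_G-1$, a contradiction if $W\subset V(\theta_0)$. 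This torus/dimension argument is a clean, self-contained alternative that avoids both the explicit kernel computation for the composed map and the external reference to real algebraic geometry; the paper's route, on the other hand, yields the stronger intermediate statement that $X_G\cap\hat\Sigma$ is itself irreducible (not merely equal to the closure of its $\hat\Sigma^{\mdot}$-part), and makes the relation $X_G\cap\hat\Sigma=V(\hat I_G:\hat\theta_0^m)$ explicit.
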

\begin{proof}
We first claim that $X_G\cap \hat{\Sigma}$ is an affine rational irreducible variety that is smooth inside $\hat{\Sigma}^{\mdot}$. Consider the sequence
\begin{equation}
S^{-1}\CC[\Sigma]\stackrel{\varphi^{*,e}_{gh}}{\to} \CC[\Sigma_{\edge}][1/gh]\stackrel{\mathrm{ev}}{\to} \CC[\hat{\Sigma}_{\edge}][1/\hat{g}\hat{h}],
\label{eq:hatseq}
\end{equation}
where $\varphi^{*,e}_{gh}$ is the extended map in (\ref{eq:morp}) and $\hat{g}=g|_{\hat{\Sigma}},\hat{h}=h|_{\hat{\Sigma}}$, and $\mathrm{ev}$ is the  evaluation map $\sigma_{ii}=1$. Observe that  $$X_G\cap \hat{\Sigma}=V((I_G:\theta_0^m))\cap V(\ker (\mathrm{ev})^c)=V((I_G:\theta_0^m)+\ker (\mathrm{ev})^c)=V(\hat{I}_G:\hat{\theta}_0^m).$$
Also note that $$(\hat{I}_G:\hat{\theta}_0^m)=\hat{S}^{-1}\hat{I}_G\cap \CC[\hat{\Sigma}].$$ Let us show that $\hat{S}^{-1}\hat{I}_G$ is prime of codimension $|E|$.   
Note that $\varphi_{gh}^{*,e}$ is identity on the generators of $\ker(\mathrm{ev})^c$, i.e., $\varphi_{gh}^{*,e}(\ker(\mathrm{ev})^c)=\ker(\mathrm{ev})$. 
This implies that $\ker(\mathrm{ev} \varphi^{*,e}_{gh})\supset \ker(\varphi^{*,e}_{gh})+\ker (\mathrm{ev})^c$. The reverse inclusion $\ker (\mathrm{ev} \varphi^{*,e}_{gh})\subset \ker(\varphi^{*,e}_{gh})+\ker( \mathrm{ev})^c$ is obvious. 
Since both maps in (\ref{eq:hatseq}) are onto, we have an isomorphism of integral domains
\begin{equation}
S^{-1}\CC[\Sigma]/(S^{-1}I_G+\ker({\mathrm{ev}})^c)= \CC[\hat{{\Sigma}}]/\hat{S}^{-1}\hat{I}_G= \CC[\hat{\Sigma}_{\edge}][1/\hat{g}\hat{h}],
\label{eq:hatiso}
\end{equation}
where we used the fact that $\ker(\varphi^{*,e}_{gh})=S^{-1}I_G$, shown in the proof of Theorem~\ref{thm:thm1}. This proves the primality of $\hat{S}^{-1}\hat{I}_G$, hence, $X_G\cap \hat{\Sigma}$ is irreducible of dimension $|E|$. This further implies that the map (\ref{eq:hatiso}) is induced by the restriction of the rational map $\varphi|_{\hat{\Sigma}}$. In particular, this restriction is an isomorphism inside $\hat{\Sigma}^{\mdot}$, proving the above claim. 

Now the proof reduces to shwoing that $\hat{X}_G=X_G\cap \hat{\Sigma}$. We know that $X_G\cap \hat{\Sigma}$ is irreducible by the above discussion, and that it contains $\hat{X}_G$ by Proposition \ref{pro_dense}. Thus the theorem follows if we show that $\hat{X}_G$ has dimension $|E|$.  Recall from Proposition \ref{pro_minors} and construction of $\varphi$ that $\mloc(G)\cap \hat{\Sigma}^{++}$ can be obtained as the intersection of the image of $\varphi$ with the real subset of $\hat{\Sigma}^{++}$, i.e., it has the structure of a real differentiable manifold of dimension $|E|$. Together with Proposition \ref{pro_dense} and  \cite[Proposition 2.8.14]{boc98}, this implies that
\[
\dm (\mathbb{R}[\hat{\Sigma}]/I_\mathbb{R}(\hat{X}_G))=|E|,
\]
where $I_\mathbb{R}(\hat{X}_G)$ is the ideal of polynomials with real coefficients that vanish on $\hat{X}_G$.
Since the set $\mloc(G)\cap \hat{\Sigma}$ is stable under conjugation, we have $I(\hat{X}_G)=I_\mathbb{R}(\hat{X}_G)^e$. Then the going up theorem implies that $\hat{X}_G$ has dimension $|E|$ as desired.
\end{proof}
\fi

The next example shows how Theorem~\ref{cor_conj} can be used to construct $\mathfrak{p}_G$ from $I_G$:
 
\begin{example}
Consider the DAG
\begin{center}
\tikzset{node distance=1.5cm, auto}
\begin{tikzpicture}[auto]
  \node (A0){$$};
  \node (A1)[right of=A0,node distance=1.5cm] {$$};
  \node (A2) [right of=A1,node distance=1.5cm]{$4$};
  \node (A3) [below of=A0,node distance=0.5cm]{$G:1$};
  \node (A4) [right of=A3,node distance=1.5cm]{$2$};
  \node (A5) [below of=A2,node distance=1cm]{$3$};
  \path [line] (A3) --node[above]{$$} (A4);  
  \path [line] (A4) --node[above]{$$} (A5);  
  \path [line] (A4) --node[above]{$$} (A2);  
  \path [line] (A5) --node[above]{$$} (A2);  
\end{tikzpicture}
\end{center}
The ideal of imposed relations is generated by relations $1\ci 3|2$ and $4\ci 1|(2,3)$:
\[
I_G=\langle |\sigma_{12,23}|,|\sigma_{123,423}|\rangle.
\]
It has primary components
\begin{align*}
I_{G,1}&=\langle \sigma_{12}\sigma_{23} - \sigma_{13}\sigma_{22},
        \sigma_{12}\sigma_{24} - \sigma_{14}\sigma_{22},
        \sigma_{13}\sigma_{24} - \sigma_{14}\sigma_{23}\rangle
\end{align*}
and
\[
I_{G,2}=\langle \sigma_{12}\sigma_{33} - \sigma_{13}\sigma_{23},
        \sigma_{12}\sigma_{33} - \sigma_{13}\sigma_{23},
        \sigma_{22}\sigma_{33} - \sigma_{23}^2\rangle.
\]
It can be seen that only $I_{G,1}$ intersects $\Sigma^{\mdot}$\,. We thus have $\mathfrak{p}_G=I_{G,1}$. Furthermore, $I_{G,1}$ is the unique ideal contained in the maximal ideal at the identity of $\Sigma$, and is also equal to the saturation of $I_{G}$ at $f=\sigma_{22}(\sigma_{22}\sigma_{33}-\sigma_{23}^2)$. 
\ifcomplete
The ideal of implied relations is generated by relations $1\ci 3|2,1\ci 4|2,1\ci 4|(2,3)$, and $1\ci 3|(2,4)$:
\[
J_G=\langle |\sigma_{12,23}|,|\sigma_{12,42}|,|\sigma_{123,423}|,|\sigma_{124,324}|\rangle.
\]
It has primary components
\[
J_{G,1}=\langle \sigma_{13}\sigma_{22} - \sigma_{12}\sigma_{23},\sigma_{14}\sigma_{22} - \sigma_{12}\sigma_{24},\sigma_{14}\sigma_{23} - \sigma_{13}\sigma_{24} \rangle
\]
and
\[
J_{G,2}=\langle \sigma_{12}, \sigma_{22},\sigma_{24},\sigma_{23}^2\rangle.
\]
Again one can check that $J_{G,1}=I_{G,1}=S^{-1}J_G\cap \CC[\Sigma]$ is the unique component that is contained in the maximal ideal at the identity. Finally, we can see that $$S^{-1}\hat{I}_G\cap \CC[\hat{\Sigma}]=\langle\hat{\sigma}_{12}\hat{\sigma}_{23} - \hat{\sigma}_{13}, \hat{\sigma}_{12}\hat{\sigma}_{24} - \hat{\sigma}_{14},\hat{\sigma}_{13}\hat{\sigma}_{24} -\hat{\sigma}_{14}\hat{\sigma}_{23}\rangle$$ is the unique irreducible component of $V(\hat{I}_G)$ that contains the origin of $\hat{\Sigma}$. \\
\fi
\label{ex:thm1}
\end{example}

\ifcomplete
Let us point out that $X_G$ need not be smooth outside of $\Sigma^{\mdot}$ as the next example shows:
\begin{example} Let $G$ be the Markov chain $1\rightarrow 2\rightarrow 3$. Then $X_G=V(\sigma_{22}\sigma_{13}-\sigma_{12}\sigma_{23})$ is a cone and has a singularity at the origin. 
\end{example}

\else

\fi

\ifcomplete

\fi
\subsection{Algebraic representation} 
\label{sec:algrep}
Here, we put together the results of the previous sections to give an algebraic
criteria for testing isomorphism of graphical models. We start by a result on equivalence of DAGs:

\begin{proposition}
Let $G,G'$ be DAGs. Then $G$ is equal to $G'$ if and only if $X_G=X_{G'}$, or equivalently, if and only if $\hat{X}_G=\hat{X}_{G'}$.
\label{pro_dageq}
\end{proposition}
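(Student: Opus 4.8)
The plan is to prove the two equivalences in turn, using the structural results established above. First I would show that $G = G'$ (i.e.\ the two DAGs impose the same set of conditional independence constraints, $\mathcal{C}_G = \mathcal{C}_{G'}$) is equivalent to $\mloc(G) = \mloc(G')$. The nontrivial direction is that equality of the Gaussian covariance sets forces equality of the CI constraints. Here I would invoke Proposition~\ref{pro_dagiso}: equality of DAGs is equivalent to $\mathcal{N}^+$-equality, and by Proposition~\ref{pro_minors} a relation $i \ci j \mid K$ holds for all non-singular Gaussians in $\mloc(G)\cap\Sigma^{++}$ precisely when the minor $|\sigma_{iK,jK}|$ vanishes identically on that set. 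So $\mloc(G)\cap\Sigma^{++} = \mloc(G')\cap\Sigma^{++}$ implies both DAGs satisfy the same minor-vanishing conditions on a Euclidean-dense (hence Zariski-dense) subset of their varieties, and a $d$-separation argument upgrades this to equality of $\mathcal{C}_G$ and $\mathcal{C}_{G'}$.

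Next I would connect $\mloc(G) = \mloc(G')$ to $X_G = X_{G'}$. One direction is immediate: equal sets have equal Zariski closures, so $\mloc(G) = \mloc(G')$ gives $X_G = [\mloc(G)]_Z = [\mloc(G')]_Z = X_{G'}$. For the converse, suppose $X_G = X_{G'}$. By Theorem~\ref{thm:thm1}(a) and Proposition~\ref{pro_dense}, $\mloc(G) = \Imf_G \cap \Sigma^+ \subseteq X_G$, and by~\eqref{eq:vcmloc} we have $\mloc(G) \cap \Sigma^{++} = V(I_G)\cap\Sigma^{++}\cap\mathbb{R}^{n+1\choose 2} = X_G \cap \Sigma^{++}\cap\mathbb{R}^{n+1\choose 2}$, where the last equality uses $V(I_G) = X_G \cup B_G$ with $B_G\subseteq V(\theta_0)$ disjoint from $\Sigma^{++}$. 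So $X_G = X_{G'}$ forces $\mloc(G)\cap\Sigma^{++} = \mloc(G')\cap\Sigma^{++}$, and taking Euclidean closures and using Proposition~\ref{pro_dense} (density of the PD part in $\mloc$) together with the remark that $\mloc(G) = \Imf_G\cap\Sigma^+$ is already closed, we recover $\mloc(G) = \mloc(G')$. Then by the first step $G = G'$.

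Finally, the $\hat{X}$ version follows the same template but works projectively with the diagonal-normalized matrices: $G = G'$ iff $\mloc(G)\cap\hat\Sigma = \mloc(G')\cap\hat\Sigma$, and by Proposition~\ref{pro_dense}(b) the PD part $\mloc(G)\cap\hat\Sigma^{++}$ is dense in $\mloc(G)\cap\hat\Sigma$, so equality of the closures $\hat X_G = \hat X_{G'}$ is equivalent to equality of these sets. The only subtlety is that normalizing to unit diagonal does not lose information, since any Gaussian in $\mloc(G)\cap\Sigma^{++}$ scales coordinate-wise to one in $\mloc(G)\cap\hat\Sigma^{++}$ and the CI constraints are scale-invariant; this is exactly the content used in the proof of Proposition~\ref{pro_dense}(b).

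I expect the main obstacle to be the careful bookkeeping in the converse direction: extracting equality of the \emph{real} PD slices $\mloc(G)\cap\Sigma^{++}$ from equality of the \emph{complex} Zariski closures $X_G$. The issue is that two distinct real semialgebraic sets could in principle have the same complex Zariski closure, so one must genuinely use that $\mloc(G)\cap\Sigma^{++}$ is a full-dimensional (open in its manifold) real subset of $X_G$ — via Proposition~\ref{prop:irred} and the dimension count $n + |E|$ — so that it is Zariski-dense in $X_G$ and hence determined by it. Everything else is a chain of applications of the already-established Theorem~\ref{thm:thm1}, Propositions~\ref{pro_dense}, \ref{pro_minors}, and \ref{pro_dagiso}.
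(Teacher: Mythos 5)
Your overall route is the paper's own: the forward direction is Proposition~\ref{pro_dagiso} plus taking Zariski closures, and for the converse you recover the positive definite slice from the variety via Theorem~\ref{thm:thm1}(a) and \eqref{eq:vcmloc}, i.e.\ the chain $\mloc(G)\cap\Sigma^{++}=V(I_G)\cap\Sigma^{++}\cap\mathbb{R}^{\binom{n+1}{2}}=X_G\cap\Sigma^{++}\cap\mathbb{R}^{\binom{n+1}{2}}$, exactly as in the paper's proof.

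There is, however, one step that is false as written, and your closing diagnosis points at the wrong mechanism. You assert that $\mloc(G)=\Imf_G\cap\Sigma^{+}$ is ``already closed'' and use this, after taking Euclidean closures, to upgrade $\mloc(G)\cap\Sigma^{++}=\mloc(G')\cap\Sigma^{++}$ to $\mloc(G)=\mloc(G')$. But $\mloc(G)$ is in general \emph{not} Euclidean closed: Example~\ref{ex_cicounter} exhibits a limit of matrices in $\mloc(G)\cap\hat{\Sigma}^{++}$ lying outside $\mloc(G)\cap\hat\Sigma$, and the remark you cite only identifies $\mloc(G)$ with $\Imf_G\cap\Sigma^{+}$, not with a closed set; taking closures therefore yields only equality of closures, not of the $\mloc$ sets. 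Fortunately this detour is unnecessary: Proposition~\ref{pro_dagiso} ((a)$\iff$(c)) says that $\mathcal{N}^{+}$-equality already forces $G=G'$, so you may stop as soon as the PD slices agree --- which is precisely what the paper does. Relatedly, the converse is not rescued by Zariski-density of the real PD slice in $X_G$, as your last paragraph suggests: two distinct real semialgebraic sets can share the same complex Zariski closure even when both are dense in it. What actually makes the converse work is the identity $\mloc(G)\cap\Sigma^{++}=X_G\cap\Sigma^{++}\cap\mathbb{R}^{\binom{n+1}{2}}$ that you had already written down. The same comment applies to the hatted statement: density (Proposition~\ref{pro_dense}(b)) gives only the easy direction of your claimed equivalence, and the nontrivial direction should again pass through the PD slice using the $\hat{\Sigma}$-analogue (Theorem~\ref{thm:xhat}), after which your scale-invariance remark correctly closes the loop.
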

\begin{proof}
\ifcomplete
If $G$ is equal to $G'$, then $\mloc(G)=\mloc(G')$ by Proposition~\ref{pro_dagiso}, and thus $X_G=X_{G'}$. Conversely, if $X_G=X_{G'}$, then $V(I_G)\cap\Sigma^{\mdot}= V(I_{G'})\cap\Sigma^{\mdot}$ by Theorem~\ref{cor_conj}a. This implies that $V(I_G)\cap \Sigma^{++}= V(I_{G'})\cap\Sigma^{++}$ and by (\ref{eq:vcmloc}) that $\mloc(G)\cap\Sigma^{++}=\mloc(G')\cap\Sigma^{++}$. Then by Proposition ~\ref{pro_dagiso} we have $G=G'$. The last part of the assertion follows by a similar reasoning.
\else
see~\cite{RP14-arxiv}.
\fi
\end{proof}

In what follows, $\Pi=\{\pi_s\}_{s\in S_n}$ is the permutation
group with induced action on $\CC[{\Sigma}]$:  $\pi_s(f(\sigma_{ij}))=f(\sigma_{s(i)s(j))})$ where $s\in S_n$ is a
permutation of indices. The invariant subring $\{f\in {\CC}[{\Sigma}]\,|\, f\circ \pi_s=f\,\, \forall s\}$ is denoted by
${\CC}[{\Sigma}]^\Pi$. We can now state our main result:

 \begin{theorem} Let $G,G'$ be DAGs and $S$ be as in Theorem~\ref{cor_conj}. Then $G\sim G'$ if and only if
	\begin{align}\label{eq:tpr}
S^{-1}I_G\cap {\CC}[\Sigma]^\Pi=S^{-1}I_{G'}\cap {\CC}[{\Sigma}]^\Pi.
\end{align}
\label{thm:iso}
\end{theorem}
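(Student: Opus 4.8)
The plan is to reduce the isomorphism question to the equivalence question (Proposition~\ref{pro_dageq}) after passing to the quotient by the permutation action, and then to invoke Theorem~\ref{cor_conj}b which identifies $S^{-1}I_G\cap\CC[\Sigma]$ with $\mpf_G=I(X_G)$. First I would record the translation: $G\sim G'$ means there is $\pi\in S_n$ with $\pi(G)$ equal (as a DAG, i.e.\ same CI model) to $G'$, where $\pi(G)$ is the relabeled graph. The action of $\pi_s$ on $\CC[\Sigma]$ permutes the minors defining $I_G$, so $\pi_s(I_G)=I_{s(G)}$ and, since $\pi_s$ fixes $\theta_0=\prod_{A}|\Sigma_{AA}|$ (permuting the factors), we get $\pi_s(S^{-1}I_G\cap\CC[\Sigma])=S^{-1}I_{s(G)}\cap\CC[\Sigma]$, i.e.\ $\pi_s(\mpf_G)=\mpf_{s(G)}$, equivalently $\pi_s(X_G)=X_{s(G)}$ as varieties.

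For the forward direction, suppose $G\sim G'$ via $\pi=\pi_s$. Then $X_{G'}=X_{s(G)}=\pi_s(X_G)$, so $\mpf_{G'}=\pi_s(\mpf_G)$. Now intersect with the invariant subring: for any $f\in\CC[\Sigma]^\Pi$ we have $f\in\mpf_{G'}\iff f\in\pi_s(\mpf_G)\iff \pi_s^{-1}(f)\in\mpf_G\iff f\in\mpf_G$ (using $\pi_s^{-1}(f)=f$). Hence $\mpf_G\cap\CC[\Sigma]^\Pi=\mpf_{G'}\cap\CC[\Sigma]^\Pi$, which is exactly \eqref{eq:tpr} by Theorem~\ref{cor_conj}b. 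For the converse, assume \eqref{eq:tpr}, i.e.\ $\mpf_G\cap\CC[\Sigma]^\Pi=\mpf_{G'}\cap\CC[\Sigma]^\Pi$. The key algebraic observation is that $\mpf_G$ is recoverable from its invariant part together with the finite orbit of permutations: I claim $\mpf_G=\bigcap_{s\in S_n}\pi_s(\mpf_G\cap\CC[\Sigma]^\Pi)\cdot(\text{generated ideal})$ is not quite right, so instead I would argue geometrically. Consider the $S_n$-orbit of components $\{X_{s(G)}\}_{s}$ and the (reducible) variety $W_G=\bigcup_s X_{s(G)}$; its ideal $I(W_G)=\bigcap_s\pi_s(\mpf_G)$ is $\Pi$-invariant, so $I(W_G)\subseteq\CC[\Sigma]^\Pi$-generated, and one checks $I(W_G)\cap\CC[\Sigma]^\Pi$ determines $W_G$ because a $\Pi$-invariant radical ideal is generated by its invariants after extending scalars — more carefully, since $\Pi$ is finite and $\CC$ has characteristic $0$, the Reynolds operator $R(f)=\frac{1}{|S_n|}\sum_s\pi_s(f)$ shows $\mpf_G\cap\CC[\Sigma]^\Pi = R(\mpf_G)\cdot(\dots)$ generates $\bigcap_s\pi_s(\mpf_G)$ up to radical. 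Thus \eqref{eq:tpr} gives $\bigcap_s\pi_s(\mpf_G)=\bigcap_s\pi_s(\mpf_{G'})$, i.e.\ $W_G=W_{G'}$ as varieties. Since each $X_{s(G)}$ is irreducible of the same dimension $n+|E(G)|$ and these are precisely the irreducible components of $W_G$, comparing components forces $X_{G'}=X_{s(G)}$ for some $s$; by Proposition~\ref{pro_dageq} this means $G'$ equals $s(G)$, hence $G\sim G'$.

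The main obstacle is the converse direction, specifically justifying rigorously that a $\Pi$-invariant ideal (here $\bigcap_s\pi_s(\mpf_G)$) is determined by its intersection with $\CC[\Sigma]^\Pi$ — i.e.\ that $\CC[\Sigma]$ is integral over $\CC[\Sigma]^\Pi$ (true, since $S_n$ is finite) and that contraction/extension along this integral extension is well-behaved enough (lying over, going up, and the fact that $(J\cap\CC[\Sigma]^\Pi)\CC[\Sigma]$ has the same radical as a $\Pi$-invariant radical $J$). I would handle this by citing the standard fact that for a finite group $\Pi$ acting on a ring $R$ in characteristic $0$, the extension $R^\Pi\hookrightarrow R$ is finite and the map on prime spectra is the quotient by $\Pi$; invariant radical ideals of $R$ correspond bijectively to radical ideals of $R^\Pi$ via contraction. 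This converts the equality of invariant parts into equality of the $\Pi$-orbits of $\{X_{s(G)}\}$, after which the irreducibility and dimension count from Proposition~\ref{prop:irred} finish the argument. A secondary point to be careful about is that distinct $s$ may give $X_{s(G)}=X_{s'(G)}$ (the stabilizer of $[G]$ acts), but this causes no problem: we only need \emph{some} $s$ with $X_{s(G)}=X_{G'}$.
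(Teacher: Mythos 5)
Your proof is correct, and its skeleton (forward direction via $\Pi$-invariance of $\theta_0$ and Theorem~\ref{cor_conj}b, converse via the structure of the finite group action on $\CC[\Sigma]$, finishing with Proposition~\ref{pro_dageq}) matches the paper's; the difference is in how the converse extracts a single permutation. The paper applies the standard fact (Atiyah--Macdonald, Ex.~5.13) directly to the two \emph{prime} ideals $\mpf_G,\mpf_{G'}$: since they contract to the same prime of $\CC[\Sigma]^\Pi$ and $\Pi$ acts transitively on the primes of $\CC[\Sigma]$ lying over a given prime of the invariant ring, $\mpf_{G'}=\pi(\mpf_G)$ for some $\pi$, and one is done in one line. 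You instead pass to the $\Pi$-orbit union $W_G=\bigcup_s X_{s(G)}$, use the correspondence between $\Pi$-invariant closed sets of $\mathrm{Spec}\,\CC[\Sigma]$ and closed sets of the quotient $\mathrm{Spec}\,\CC[\Sigma]^\Pi$ to get $W_G=W_{G'}$, and then match irreducible components using the equidimensionality $\dim X_{s(G)}=n+|E(G)|$ from Proposition~\ref{prop:irred}. This is the same underlying fact about finite quotients (your ``fibers are orbits'' is exactly the transitivity statement), so your route is a mild geometric repackaging: it costs you the extra equidimensionality input and a component-comparison step, but it makes the geometry of the quotient explicit and avoids quoting the exercise verbatim. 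Two small blemishes, neither fatal: the sentence invoking the Reynolds operator (``$R(\mpf_G)\cdot(\dots)$ generates $\bigcap_s\pi_s(\mpf_G)$ up to radical'') is garbled and should simply be deleted, since the finite-quotient correspondence you cite afterwards is what actually carries the argument; and you leave implicit the (trivial but needed) identity $\bigl(\bigcap_s\pi_s(\mpf_G)\bigr)\cap\CC[\Sigma]^\Pi=\mpf_G\cap\CC[\Sigma]^\Pi$, which is what lets you feed the hypothesis into that correspondence.
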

\begin{proof}\ifcomplete By Proposition~\ref{pro_dageq}, DAGs $G$ and $G'$ are isomorphic iff there
	exists $\pi \in \Pi$ such that
	$$ X_{\pi(G)}= X_{G'}. $$
	By Theorem~\ref{cor_conj}b this in turn is equivalent to 
\begin{equation}\label{eq:tp1}
	\pi(X_G) = X_{G'}. 
\end{equation}
To see this note that 
	$$X_{\pi(G)}=V( I_{\pi(G)}:\theta_0^m)=V(\pi(I_G:\theta_0^m)))=\pi(X_G).$$
Furthermore, since $S$ in Theorem~\ref{cor_conj} is $\Pi$-invariant,~\eqref{eq:tp1} is equivalent to
\begin{equation}\label{eq:tp3}
		\pi(S^{-1} I_G)\cap \CC[\Sigma] = S^{-1} I_{G'}\cap \CC[\Sigma]\,. 
\end{equation}
Finally, because of primality of ideals in~\eqref{eq:tp3}, the existence of $\pi$ satisfying~\eqref{eq:tp3} is equivalent to~\eqref{eq:tpr} by~\cite[Exercise 5.13]{ati69}.
	\else
	see~\cite{RP14-arxiv}.
	\fi
\end{proof}

\ifcomplete
As before, we can work over $\mathbb{C}[\hat{\Sigma}]$.

\begin{theorem}: Let $G,G'$ be DAGs and $\hat{S}$ be as in Theorem~\ref{thm:xhat}. Then $G$ and $G'$ are isomorphic if and only if
	\begin{align}\label{eq:tpr}
\hat{S}^{-1}\hat{I}_G\cap {\CC}[\hat{\Sigma}]^\Pi=\hat{S}^{-1}\hat{I}_{G'}\cap {\CC}[\hat{\Sigma}]^\Pi.
\end{align}
\label{thm:hiso}
\end{theorem}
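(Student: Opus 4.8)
The plan is to transcribe the proof of Theorem~\ref{thm:iso}, replacing the objects $(\Sigma, I_G, \theta_0, S, X_G, \mpf_G)$ by their hatted counterparts $(\hat{\Sigma}, \hat{I}_G, \hat{\theta}_0, \hat{S}, \hat{X}_G, \hat{\mpf}_G)$ and invoking Theorem~\ref{thm:xhat} wherever Theorem~\ref{cor_conj} was used before. First I would record the compatibility facts that make this substitution legitimate. A permutation $s\in S_n$ acts on $\CC[\hat{\Sigma}]$ by $\pi_s(\hat{\sigma}_{ij})=\hat{\sigma}_{s(i)s(j)}$; this is well defined because $s(i)\ne s(j)$ when $i\ne j$, so $\pi_s$ preserves the subspace $\hat{\Sigma}\subset\Sigma$ and agrees with the restriction of the unhatted action. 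Relabelling the nodes of $G$ by $s$ carries each imposed constraint $i\ci\nd(i)\,|\,\p(i)$ to the corresponding constraint of the relabelled DAG $\pi_s(G)$, so on the level of ideals $\pi_s(\hat{I}_G)=\hat{I}_{\pi_s(G)}$; moreover $\hat{\theta}_0=\prod_{A\subset[n]}|\hat{\Sigma}_{AA}|$ is plainly $\Pi$-invariant, so $\hat{S}$ is a $\Pi$-stable multiplicative set and saturation commutes with the action. Combining these with Theorem~\ref{thm:xhat}(a),(c) yields $\hat{\mpf}_{\pi_s(G)}=\pi_s(\hat{\mpf}_G)$ and $\hat{X}_{\pi_s(G)}=\pi_s(\hat{X}_G)$.

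With these in hand the argument runs exactly as in Theorem~\ref{thm:iso}. By Proposition~\ref{pro_dageq}, stated there in its $\hat{X}$ form, two DAGs are isomorphic iff some relabelling $\pi_s(G)$ equals $G'$, iff $\hat{X}_{\pi_s(G)}=\hat{X}_{G'}$ for some $s$, iff $\pi_s(\hat{X}_G)=\hat{X}_{G'}$ for some $s$. Since $\hat{X}_G=V(\hat{\mpf}_G)$ with $\hat{\mpf}_G=\hat{S}^{-1}\hat{I}_G\cap\CC[\hat{\Sigma}]$ a prime ideal (Theorem~\ref{thm:xhat}(a),(c)), this last condition is equivalent to the existence of $s\in S_n$ with
\[
\pi_s\bigl(\hat{S}^{-1}\hat{I}_G\cap\CC[\hat{\Sigma}]\bigr)=\hat{S}^{-1}\hat{I}_{G'}\cap\CC[\hat{\Sigma}],
\]
that is, $\pi_s(\hat{\mpf}_G)=\hat{\mpf}_{G'}$ as primes of $\CC[\hat{\Sigma}]$.

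To finish, I would use that $\CC[\hat{\Sigma}]$ is integral over the invariant ring $\CC[\hat{\Sigma}]^\Pi$ and that the finite group $\Pi$ permutes transitively the primes of $\CC[\hat{\Sigma}]$ lying over a fixed prime of $\CC[\hat{\Sigma}]^\Pi$ (\cite[Exercise 5.13]{ati69}); since both $\hat{\mpf}_G$ and $\hat{\mpf}_{G'}$ are prime, such an $s$ exists iff they contract to the same prime of $\CC[\hat{\Sigma}]^\Pi$. Contracting the identity $\hat{\mpf}_G=\hat{S}^{-1}\hat{I}_G\cap\CC[\hat{\Sigma}]$ to $\CC[\hat{\Sigma}]^\Pi$ rewrites this as $\hat{S}^{-1}\hat{I}_G\cap\CC[\hat{\Sigma}]^\Pi=\hat{S}^{-1}\hat{I}_{G'}\cap\CC[\hat{\Sigma}]^\Pi$, which is the claimed criterion. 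The only point not already present in Theorem~\ref{thm:iso} is verifying that relabelling the nodes of $G$ corresponds, on the level of ideals, to the coordinate permutation $\pi_s$ of $\CC[\hat{\Sigma}]$ and that $\pi_s$ intertwines the $\hat{S}$-saturation; this is a routine definition chase, and it is the part I would write out carefully, the rest being a transcription of the proof of Theorem~\ref{thm:iso}.
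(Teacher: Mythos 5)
Your proposal is correct and follows essentially the same route as the paper: the paper's own proof reduces to the hatted analogue of Theorem~\ref{thm:iso} via $X_{\pi(G)}=\pi(X_G)$ and Proposition~\ref{pro_dageq}, then repeats the argument of Theorem~\ref{thm:iso} (primality of the saturated ideal from Theorem~\ref{thm:xhat}, $\Pi$-invariance of $\hat{S}$, and \cite[Exercise 5.13]{ati69}), exactly as you do. Your explicit verification of the compatibility facts ($\pi_s(\hat{I}_G)=\hat{I}_{\pi_s(G)}$, $\Pi$-invariance of $\hat{\theta}_0$, saturation commuting with the action) fills in details the paper leaves as "analogous and omitted."
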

\begin{proof}
Recall that $X_{\pi(G)}=\pi(X_G)$. Thus by Proposition~\ref{pro_dageq} we have
\[
\pi(X_G)=X_{G'} \iff \pi(\hat{X}_G)=\hat{X}_{G'}.
\]
The rest of the proof is analogous to that of Theorem~\ref{thm:iso} and is omitted.
\end{proof}
We remark that $\hat{S}^{-1}\hat{I}_G$ can be replaced with $\hat{J}_G$ in the above. The presentation of the results in this form is a matter of convenience. Indeed, there is a simple way to generate $\hat{I}_G$: 

\begin{itemize}
\item Traverse the graph in the order of the topological sort and set $\hat{I}_i:= \sum_{{j<i}, j\notin K}\langle |\hat{\sigma}_{iK,jK}|\rangle$ with $K:=\mathbf{pa}(i)$ for all $i$.
\item Output $\hat{I}_{G}=\sum_i \hat{I}_i$.
\end{itemize}
However, extracting the implied relations of a DAG requires more work.

We also point out that the extra components of $V(\hat{I}_G)$ (or $V(I_G)$) are not invariant across the isomorphism class of $G$:
\begin{example}
The DAG
\begin{center}
\tikzset{node distance=1.5cm, auto}
\begin{tikzpicture}[auto]
  \node (A0){$$};
  \node (A1)[right of=A0,node distance=0cm] {$$};
  \node (A2) [above of=A1,node distance=0.75cm]{$2$};
  \node (A3) [below of=A0,node distance=0.5cm]{$1$};
  \node (A4) [right of=A3,node distance=1.5cm]{$$};
  \node (A6) [left of=A3,node distance=0.5cm,above of=A3,node distance=0.7cm]{${G}':$};
  \node (A7)[right of=A6,node distance=2cm] {$3$};
  \node (A5) [right of=A7,node distance=1.25cm]{$4$};
  \path [line] (A3) --node[above]{$$} (A2);  
  \path [line] (A3) --node[above]{$$} (A7);  
  \path [line] (A7) --node[above]{$$} (A5);  
  \path [line] (A2) --node[above]{$$} (A7);  
\end{tikzpicture}
\end{center}
is isomorphic to the DAG $G$ in Example \ref{ex:thm1}, but $\hat{I}_{G'}$ is a prime ideal. Note however that $J_{G'}=J_{\pi(G)}$ where $\pi$ is the permutation $(14)(23)$. Thus 
it is necessary to compute the saturation ideal in (\ref{eq:tpr}). We shall see, however, in section \ref{sec:rand} that one can avoid computing the saturation ideal, and more importantly, the subring intersection by a probabilistic procedure. 
\label{ex:inv}
\end{example}
\fi
\ifcomplete
\subsection{Changing the ground ring}

We briefly note that none of our algebraic arguments depend on the choice of $\CC$ as ground ring. In particular,
instead of $X_G$ we could have considered 
$$ X'_G = \mathrm{Spec}(\ZZ[\Sigma]/\mq_G) $$
where $\mq_G = S^{-1} I_G \cap \ZZ[\Sigma]$. Then we still have that $X'_G$ is an integral, rational 
scheme over $\ZZ$,
smooth at every point in $D(\theta_0) = \{\theta_0\neq 0\}$. Two graphs $G\sim G'$ are isomorphic if and only if 
$$ S^{-1} I_G \cap \ZZ[\Sigma]^\Pi = S^{-1} I_{G'} \cap \ZZ[\Sigma]^\Pi. $$
Furthermore, we can identify $X_G$ as the base-change of $X'_G$ to $\mathbb{C}$: $X_G=X'_G\times_{\mathrm{Spec}(\mathbb{Z})} \mathrm{Spec}(\mathbb{C})$. 
\ifcomplete
To verify this, one merely needs to check that $\mq_G^e=\mpf_G$. Indeed $I_G$ has a \gb basis (w.r.t any term order) consisting of generators with $\mathbb{Z}$-coefficients (run Buchberger's algorithm and clear denominators at the end). One can thus see, by variable elimination, that $\mpf_G=S^{-1}I_G\cap \CC[\Sigma]$ can be generated by polynomials in $\ZZ[\Sigma]$. 
\fi
Other convenient choices of rings are $\mathbb{Q}$ and $\mathbb{R}$.  
\fi
\subsection{Randomized algorithm}
\label{sec:rand}
\ifcomplete
\begin{algorithm}[t]
\caption{$\isodag_{\textup{m}}$}
\label{ISODAG}
\begin{algorithmic}[1]
\STATE {\bf function} {$\isodag_\mathrm{m}$}($G$, $G'$)
\STATE Sort $G$ and $G'$ topologically
\STATE Initialize $\mathrm{ISO}\gets \TRUE,r\gets 1$
\WHILE{$\mathrm{ISO}\,\,\AND \,\,r\le m$}
\STATE Sample $z_G^r,z_{G'}^r$ respectively from $\hat{\varphi}_*P_G$,$\hat{\varphi}_*P_{G'}$ as follows:\\
\begin{enumerate}[(i)]
\item Sample edge variables $\hat{\sigma}^r_{\edge}$ of $G$ from $P_G$
\FOR{$i:=2$ to $n$}
\item[] \quad Solve the (linear) toposorted imposed relations
\[		
|\hat{\sigma}^r_{iK,jK}|=0, \quad K:=\p(i)
\]
\quad for each non-edge variable $\hat{\sigma}^r_{ij}$, $j<i$
\item[] \ENDFOR

\item $z_G^r\gets (\hat{\sigma}_{\edge}^r,\hat{\sigma}_{\mathrm{non-edge}}^r)$
\item Repeat for $G'$
\end{enumerate}
\IF{\,\,$\Pi(z^r_G)\cap V(\hat{I}_{G'})=\emptyset\,\OR\, \Pi(z^r_{G'})\cap V(\hat{I}_G)=\emptyset$} 
	\STATE $\mathrm{ISO}\gets \FALSE$
\ENDIF
\STATE $r\gets r+1$
\ENDWHILE
\RETURN ISO\\
\end{algorithmic}
\end{algorithm}

\else
\begin{algorithm}[t]
\caption{$\isodag_{\textup{m}}$}
\label{ISODAG}
\begin{algorithmic}[1]
{\small
\STATE {\bf function} {$\isodag_\mathrm{m}$}($G$, $G'$)
\STATE Sort $G$ and $G'$ topologically
\STATE Initialize $\mathrm{ISO}\gets \TRUE,r\gets 1$
\WHILE{$\mathrm{ISO}\,\,\AND \,\,r\le m$}
\STATE Sample $z_G^r,z_{G'}^r$ respectively from $\hat{\varphi}_*P_G$,$\hat{\varphi}_*P_{G'}$ as follows:\\
\begin{enumerate}[(i)]
\item Sample edge variables $\hat{\sigma}^r_{\edge}$ of $G$ from $P_G$
\FOR{$i:=2$ to $n$}
\item[]\quad Solve the (linear) toposorted imposed relations
\[		
|\hat{\sigma}^r_{iK,jK}|=0, \quad K:=\p(i)
\]
\quad for each non-edge variable $\hat{\sigma}^r_{ij}$, $j<i$
\item[]\ENDFOR

\item $z_G^r\gets (\hat{\sigma}_{\edge}^r,\hat{\sigma}_{\mathrm{non-edge}}^r)$
\item Repeat for $G'$
\end{enumerate}
\IF{\,\,$\Pi(z^r_G)\cap V(\hat{I}_{G'})=\emptyset\,\OR\, \Pi(z^r_{G'})\cap V(\hat{I}_G)=\emptyset$} 
	\STATE $\mathrm{ISO}\gets \FALSE$
\ENDIF
\STATE $r\gets r+1$
\ENDWHILE
\RETURN ISO\\
}
\end{algorithmic}
\end{algorithm}
\fi

\ifcomplete Theorem~\ref{thm:hiso} shows that testing isomorphism amounts to subring intersection. Computing this intersection is difficult since there is no easy description available for generating invariants of $\CC[\hat{\Sigma}]^\Pi$. Another computational difficulty is that of computing the saturation ideal. This operation does not scale well with the number of nodes in the model. Here we give a randomized algorithm that avoids computing both the intersection and saturation ideals. 

Let $\varphi$ be the rational map in (\ref{eq:ltt}) and denote by $\hat{\varphi}$ its restriction $\varphi|_{\hat{\Sigma}}$. Let $\mathcal{Y}:=\mathbb{F}_q^{|E|}$ and define
$$\mathcal{U}:=\{y\in \mathcal{Y}: \hat{g}(y)\neq0,\hat{h}(y)\neq 0\},$$ 
where  $\hat{g},\hat{h}$ are as in  (\ref{eq:hatseq}). Note that
\[
|\mathcal{U}|\ge (q-d) q^{|E|-1}
\]
with $d:=\deg(\hat{g})+\deg(\hat{h})$.

\else We use the preceding results to give a randomized algorithm for testing DAG isomorphism.  
In~\cite{RP14-arxiv}, we associate with every DAG $G$ a rational map (see proof of Theorem~\ref{thm:thm1}) ${\varphi}:\Sigma_{edge}\to \Sigma$ where $\varphi$ is regular on a distringuished open $D(gh)$ and its image is dense in $X_G$. Here, $g$ and $h$ are certain polynomials in $\CC[\Sigma_{\edge}]$. Denote by $\hat{\varphi}:\hat{\Sigma}_{\edge}\to \hat{X}_G$ the map $\varphi|_{\hat{\Sigma}}$.  Let $\mathcal{Y}:=\mathbb{F}_q^{|E|}$ and define
$$\mathcal{U}:=\{y\in \mathcal{Y}: \hat{g}(y)\neq0,\hat{h}(y)\neq 0\},$$ 
where $\hat{g}=g|_{\hat{\Sigma}},\hat{h}=h|_{\hat{\Sigma}}$. \fi 
Now construct a random matrix with uniform distribution $P_G$ on the finite set $\mathcal{U}$. \ifcomplete This can be realized, for instance, by transforming a uniformly distributed matrix on $\mathcal{Y}$ through a kernel $P_{Y|X}:\mathcal{Y}\to\mathcal{Y}\cup \{\emptyset\}$ such that  $P_{Y|X}(x|x)=1$ if $x\in \mathcal{U}$ and $P_{Y|X}(\emptyset |x)=1$ otherwise.\fi Let $\hat{\varphi}_{*}P_G$ be the push-forward of $P_G$ under $\hat{\varphi}$ and $Z_G^i$'s be independent random variables with common distribution $\hat{\varphi}_*P_G$. 

Given DAGs $G$, $G$', the algorithm $\isodag_\mathrm{m}$ described above constructs $m$ realizations $z^i_G, z^i_{G'}$ from $Z^i_G,Z^i_{G'}$. It then declares $G$ and $G'$ to be isomorphic if and only if for each $i\le m$, there is some permutation $\pi$ such that both $\pi(z^i_G)\in V(\hat{I}_{G'})$ and $z^i_{G'}\in V(\pi(\hat{I}_G))$ hold. \ifcomplete The latter conditions amount, respectively, to checking $f'_j(\pi(z^i_G))=0$ and $\pi(f_k)(z^i_{G'})=0$ for all generators $f'_j$ of $\hat{I}_{G'}$ and $f_k$ of $\hat{I}_G$. The next theorem shows that the probability of failure of the algorithm can be made arbitrarily small:\else We have:\fi



\begin{theorem}
Let $G$ be a DAG on $n$ nodes and $E$ be its set edges. Let $Z^i_G$ be as in above and set $d:=\deg(\hat{g})+\deg(\hat{h})$. If $G\sim G'$, then
\[
\mathbb{P}[\isodag_\mathrm{m}(G,G')=\mathrm{yes}]=1.
\]
If $G\not\sim G'$, then
\[
\mathbb{P}[\isodag_\mathrm{m}(G,G')=\mathrm{yes}]\le  (n!\frac{n+2d-1}{q-d})^m.
\]
\label{thm:rand}
\end{theorem}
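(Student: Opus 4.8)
\emph{Plan.} I would treat the two cases separately. For the \textbf{completeness} direction ($G\sim G'$) the key observation is that the algorithm only ever samples genuine points of the DAG varieties: by the construction of $\hat{\varphi}$ in the proof of Theorem~\ref{thm:thm1} (and its $\hat{\Sigma}$-analogue, Theorem~\ref{thm:xhat}) the map $\hat{\varphi}$ is regular on $D(\hat{g}\hat{h})\supseteq\mathcal{U}$ and carries it into $\hat{X}_G\cap\hat{\Sigma}^{\mdot}$, so every realization $z_G^r$ lies in $\hat{X}_G(\mathbb{F}_q)\subseteq V(\hat{I}_G)$, and likewise $z_{G'}^r\in\hat{X}_{G'}$. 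If $G\sim G'$ there is $\pi_0\in\Pi$ with $X_{\pi_0(G)}=X_{G'}$; by Proposition~\ref{pro_dageq} together with the identity $X_{\pi(G)}=\pi(X_G)$ this gives $\pi_0(\hat{X}_G)=\hat{X}_{G'}$. Hence $\pi_0(z_G^r)\in\hat{X}_{G'}\subseteq V(\hat{I}_{G'})$ and $\pi_0^{-1}(z_{G'}^r)\in\hat{X}_G\subseteq V(\hat{I}_G)$ for every round $r$, so neither emptiness test in the loop ever fires and $\isodag_m$ returns \textsc{yes} deterministically.

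For the \textbf{soundness} direction ($G\not\sim G'$), I would first reduce to a single permutation. Since $\isodag_m$ outputs \textsc{yes} only if in every round some $\pi$ satisfies $\pi(z_G^r)\in V(\hat{I}_{G'})$, independence of the rounds gives $\mathbb{P}[\isodag_m(G,G')=\textsc{yes}]\le\big(\mathbb{P}[\exists\,\pi\colon\pi(Z_G)\in V(\hat{I}_{G'})]\big)^m$, and symmetrically with $G,G'$ swapped; I keep whichever of the two is smaller. A union bound over $\Pi\cong S_n$ then reduces matters to $\mathbb{P}[\pi(Z_G)\in V(\hat{I}_{G'})]$ for a fixed $\pi$. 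The geometric input is a dichotomy: for each $\pi$ one cannot have both $\pi(\hat{X}_G)\subseteq V(\hat{I}_{G'})$ and $\pi^{-1}(\hat{X}_{G'})\subseteq V(\hat{I}_G)$. Indeed, since $\hat{X}_G,\hat{X}_{G'}$ are irreducible while the extra components of $V(\hat{I}_{G'})$ and $V(\hat{I}_G)$ lie in $\{\hat{\theta}_0=0\}$, and the identity matrix (at which $\hat{\theta}_0=1$) lies in both $\hat{X}_G$ and $\hat{X}_{G'}$, each such containment forces $\pi(\hat{X}_G)\subseteq\hat{X}_{G'}$, resp.\ $\pi^{-1}(\hat{X}_{G'})\subseteq\hat{X}_G$; together they would give $\pi(\hat{X}_G)=\hat{X}_{G'}$, i.e.\ $G\sim G'$ by Proposition~\ref{pro_dageq}, a contradiction. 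So for every $\pi$ at least one generator $f$ of $\hat{I}_{G'}$ (resp.\ of $\hat{I}_G$) does not vanish identically on $\pi(\hat{X}_G)$ (resp.\ on $\pi^{-1}(\hat{X}_{G'})$).

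Next I would invoke Schwartz--Zippel. Because $\hat{\varphi}(D(\hat{g}\hat{h}))$ is dense in $\hat{X}_G$, non-vanishing of $f$ on $\pi(\hat{X}_G)$ means the rational function $f\circ\pi\circ\hat{\varphi}$ on $\hat{\Sigma}_{\edge}$ is not identically zero; clearing its denominator (a power of $\hat{g}\hat{h}$) gives a nonzero polynomial $p$ over $\mathbb{Z}$, whose reduction mod $q$ remains nonzero for all but finitely many $q$. On $\mathcal{U}$ we have $\hat{g}\hat{h}\neq0$, so $\pi(Z_G)\in V(\hat{I}_{G'})$ forces $p(Y)=0$; with $|\{p=0\}|\le(\deg p)\,q^{|E|-1}$ and $|\mathcal{U}|\ge(q-d)\,q^{|E|-1}$ this yields $\mathbb{P}[\pi(Z_G)\in V(\hat{I}_{G'})]\le\deg p/(q-d)$. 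A direct degree count from the shape of $\hat{\varphi}$ — each non-edge coordinate equals $\tilde{h}_{ij}/\tilde{u}_{ij}$ with $\tilde{u}_{ij}\mid\hat{g}$, each pulled-back principal minor is a fraction whose numerator divides $\hat{h}$, and the toposorted minor $f$ is linear in its solved variable with a principal-minor coefficient — bounds $\deg p\le n+2d-1$. Summing over the $n!$ permutations and raising to the $m$-th power gives $\mathbb{P}[\isodag_m(G,G')=\textsc{yes}]\le\big(n!\,\tfrac{n+2d-1}{q-d}\big)^m$ (here $d$ is to be read as a bound on $\deg\hat{g}+\deg\hat{h}$ valid for both $G$ and $G'$; equivalently one may relabel so that $|E|\ge|E'|$, in which case only $G$'s parametrization is needed).

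The independence/union-bound bookkeeping and the Schwartz--Zippel estimate are routine. I expect the \textbf{main obstacles} to be two. First, pinning down the degree bound $\deg p\le n+2d-1$: this requires carefully tracing how the rational parametrization $\hat{\varphi}$ built in the proof of Theorem~\ref{thm:thm1} interacts with the determinantal generators of the \emph{other} DAG, and in particular keeping the numerators of pulled-back principal minors controlled by $\hat{h}$ rather than by an uncontrolled power of $\theta_0$. Second, the base change to $\mathbb{F}_q$: making precise which finitely many bad primes must be excluded so that $\hat{\varphi},\hat{g},\hat{h}$ and the numerators $p$ behave as over $\mathbb{C}$, which is also where the implicit hypothesis $q>d$ enters.
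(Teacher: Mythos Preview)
Your proposal is correct and follows essentially the same route as the paper: completeness via Theorem~\ref{thm:hiso}, and soundness by pulling back a nonvanishing determinantal generator of $\hat{I}_{G'}$ through $\hat{\varphi}$, applying a Schwartz--Zippel count on $\mathcal{U}\subset\mathbb{F}_q^{|E|}$ against $|\mathcal{U}|\ge(q-d)q^{|E|-1}$, and then union-bounding over the $n!$ permutations and taking the $m$-th power over independent rounds. The only notable difference is in justifying the degree bound: the paper argues directly that the minor $|\hat{\sigma}_{iK,jK}|$ has total degree at most $n-1$ with each variable occurring to power at most $2$, so clearing denominators adds at most $2d$, whereas you trace the structure of $\hat{\varphi}$ through $\hat g,\hat h$; both arrive at $n+2d-1$, and your caveats about this step and about the passage to $\mathbb{F}_q$ are well placed.
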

\begin{proof}
\ifcomplete
By Theorem~\ref{thm:hiso}, the algorithm outputs yes with probability 1 if $G$ is isomorphic to $G'$. Now suppose $\mloc(G)\not\subset \mloc(G')$. We need to upper bound the probability that a realization of $Z^i_G$ lands on $\hat{X}_{G'}$. Note that $Z^i_G$ takes values on $\hat{X}_{G}\cap \hat{\Sigma}^{\mdot}$. It follows from  Proposition \ref{pro_dageq} that a point on $\hat{X}_{G'}\cap \hat{\Sigma}^{\mdot}$ satisfies at least one relation of the form $|\hat{\sigma}_{iK,jK}|$ where $i\ci j|K \not\in \mathcal{M}^{topo}_G$. We can pullback the intersection $\hat{X}_G\cap \hat{\Sigma}^{\mdot}\cap \{|\hat{\sigma}_{iK,jK}|=0\}$ to $\mathcal{U}$ via the embedding $\hat{\varphi}$. This corresponds to replacing the non-edge variables (w.r.t $G$) of $|\sigma_{iK,jK}|$ with rational functions of edge variables as in (\ref{eq:ltt}). Let $f_{ijK}$ be the resulting rational function. 
Then the number of points on the intersection is upper bounded by the number of solutions in $\mathcal{Y}$ of $\{f_{ijK}=0\}$. Note that the minor has total degree at most $n-1$ and the largest power of any monomial appearing in the expansion of such minor is at most $2$. Thus clearing the denominators in $f_{ijK}$ gives a non-zero polynomial of degree at most $(n+2d-1)$ in at most $|E|$ variables. Such a polynomial has at most $(n+2d-1)q^{|E|-1}$ roots in $\mathcal{Y}$. 
On the other hand, $\mathcal{U}$ has at least $(q-d)q^{|E|-1}$ points. The union bound proves the result. 
\else
see~\cite{RP14-arxiv}.
\fi
\end{proof}
If $G$ and $G'$ are two DAGs on $n$ nodes and $E$ edges, it is easy to verify that the algorithm can decide equivalence of $G$ and $G'$ in time $O(n^3|E^c|)$. To test if a sampled point lies on $\hat{X}_G$, one needs to solve a sequence of $|E^c|$ linear equations. Each equation involves one missing edge and one minor of size at most $(n-1)\times (n-1)$, which requires $O(n^3)$ operations to compute (note that we can do the arithmetics over rationals). This is comparable, in terms of complexity, with  $O(n^4|E|)$ operations needed in the essential graph method (see \cite{and97}).

\ifcomplete
\section{Directed tree models}
We study the case of directed tree models in this section. The main property is the following:
\begin{proposition}
Let $T$ be a directed tree model. Then $\hat{I}_T$ is prime and $V(I_T)\cap\hat{\Sigma}=[\mloc(T)\cap\hat{\Sigma}]_Z$. Furthermore, $\hat{X}_T$ is smooth everywhere.
\label{pro_prime}
\end{proposition}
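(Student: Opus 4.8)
The plan is to realise $\hat X_T$ concretely as the graph of a polynomial map in the edge variables, and then to show that $\hat I_T$ is \emph{already} the (prime) ideal of that graph, so that — in contrast to the general situation of Theorem~\ref{thm:thm1} — no saturation is needed. In a tree the undirected path $\mathrm{path}(a,b)$ between any two nodes is unique; call it \emph{collider-free} if no interior node of the path has both of its incident path-edges oriented into it. For each non-edge pair $(a,b)$ put $P_{ab}:=\prod_{e\in\mathrm{path}(a,b)}\hat\sigma_e$ if $\mathrm{path}(a,b)$ is collider-free and $P_{ab}:=0$ otherwise, where $\hat\sigma_e$ is the coordinate of $\hat\Sigma_{\edge}$ attached to the edge $e$, and set $\hat L_T:=\langle\,\hat\sigma_{ab}-P_{ab}:(a,b)\text{ non-edge}\,\rangle\subset\CC[\hat\Sigma]$. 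Eliminating the non-edge variables gives $\CC[\hat\Sigma]/\hat L_T\cong\CC[\hat\Sigma_{\edge}]$, so $\hat L_T$ is prime, $V(\hat L_T)$ is the graph of the polynomial map $\hat\Sigma_{\edge}\to\CC^{|E^c|}$, $(\hat\sigma_e)\mapsto(P_{ab})$, and therefore $V(\hat L_T)$ is smooth, irreducible, and isomorphic via the projection to $\hat\Sigma_{\edge}$ to affine space of dimension $|E|$.

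Next I would check $\hat X_T=V(\hat L_T)$. For $\hat X_T\subseteq V(\hat L_T)$: every $\hat\sigma\in\mloc(T)\cap\hat\Sigma$ satisfies $\hat\sigma_{ab}=P_{ab}$, because a collider on $\mathrm{path}(a,b)$ $d$-separates $a$ from $b$ in $T$ (forcing $\hat\sigma_{ab}=0$), while along a collider-free path $v_0,\ldots,v_\ell$ every interior $v_k$ is a non-collider, so the implied relation $v_0\ci v_k\mid v_{k-1}$ gives $\hat\sigma_{v_0v_k}=\hat\sigma_{v_0v_{k-1}}\hat\sigma_{v_{k-1}v_k}$ and the product formula follows by induction on $\ell$; hence $\mloc(T)\cap\hat\Sigma\subseteq V(\hat L_T)$ and so $\hat X_T=[\mloc(T)\cap\hat\Sigma]_Z\subseteq V(\hat L_T)$. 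Since $\hat X_T$ is irreducible of dimension $|E|$ by Theorem~\ref{thm:xhat} and $V(\hat L_T)$ is irreducible of the same dimension, this inclusion is an equality. In particular $\hat X_T$ is smooth everywhere, and $\hat{\mpf}_T=I(\hat X_T)=\hat L_T$ (as $\hat L_T$ is prime, hence radical).

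It remains to show $\hat I_T=\hat L_T$. Because the minors generating $\hat I_T$ vanish on $\mloc(T)\cap\hat\Sigma^{++}$ by Proposition~\ref{pro_minors}, hence on $\hat X_T$, we already have $\hat I_T\subseteq\hat{\mpf}_T=\hat L_T$, so it suffices to prove $\hat\sigma_{ab}-P_{ab}\in\hat I_T$ for every non-edge $(a,b)$. I would do this by induction on the topological index $i$ of the larger endpoint, refining the reduction in the proof of Theorem~\ref{thm:thm1} and observing that for trees it never requires a division. Let $(i,j)$ be a non-edge with $j<i$ and $K:=\p(i)$; the generator of $\hat I_T$ it produces is the minor $|\hat\sigma_{iK,jK}|$. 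The fact special to trees is that any two parents $k,k'\in K$ lie on the collider path $k\to i\leftarrow k'$, so by the induction applied to the pair $(k,k')$ (whose endpoints are both $<i$) we get $P_{kk'}=0$ and $\hat\sigma_{kk'}\in\hat I_T$; substituting these, the $K\times K$ block of the minor matrix becomes the identity modulo $\hat I_T$, and a cofactor expansion reduces the minor itself to $\hat\sigma_{ij}-\sum_{k\in K}\hat\sigma_{ik}\hat\sigma_{kj}$. Substituting the inductively known values $\hat\sigma_{kj}\equiv P_{kj}$: if $\mathrm{path}(i,j)$ leaves $i$ through a child then, $j$ being a non-descendant of $i$, that path must contain a collider, so $P_{ij}=0$ and $P_{kj}=0$ for every parent $k$; and if it leaves $i$ through a parent $k_0\in K$, then $\mathrm{path}(k',j)$ runs through the collider at $i$ for every \emph{other} parent $k'$ (so $P_{k'j}=0$), whence $\hat\sigma_{ij}-\sum_k\hat\sigma_{ik}P_{kj}=\hat\sigma_{ij}-\hat\sigma_{ik_0}P_{k_0j}=\hat\sigma_{ij}-P_{ij}$, using $\mathrm{path}(k_0,j)=\mathrm{path}(i,j)\setminus\{(i,k_0)\}$. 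In either case $\hat\sigma_{ij}-P_{ij}\in\hat I_T$ (the degenerate case $K=\emptyset$, where the generator is $\hat\sigma_{ij}$ and $P_{ij}=0$, is immediate), closing the induction. Hence $\hat I_T=\hat L_T=\hat{\mpf}_T$ is prime, and since $V(I_T)\cap\hat\Sigma=V(\hat I_T)$ (intersecting with $\hat\Sigma$ amounts to passing from $I_T$ to $\hat I_T$), we get $V(I_T)\cap\hat\Sigma=V(\hat{\mpf}_T)=\hat X_T=[\mloc(T)\cap\hat\Sigma]_Z$; together with the smoothness above, all three assertions follow.

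The main difficulty is precisely the last step: showing that the conditioning block $|\hat\sigma_{KK}|$ of each defining minor trivialises to $1$ modulo $\hat I_T$ and that the resulting Schur-type sum collapses to the single monomial $P_{ij}$. This is exactly where the tree hypothesis enters — for a general DAG the parents of a node need not be independent, $|\hat\sigma_{KK}|$ remains a genuine unit, and the localisation of Theorem~\ref{thm:thm1} cannot be avoided — and the accompanying case analysis (whether $\mathrm{path}(i,j)$ leaves $i$ through a parent or a child, and whether it carries a collider) is elementary but must be arranged so that the induction on the larger endpoint stays self-consistent. Everything else is formal.
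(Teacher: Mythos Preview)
Your proof is correct and follows essentially the same strategy as the paper: for trees the rational parametrization of Theorem~\ref{thm:thm1} becomes a \emph{polynomial} map (because the conditioning blocks $|\hat\sigma_{KK}|$ trivialise), so $\hat I_T$ is already saturated and hence prime. The paper packages this slightly differently: it first shows, in the discussion preceding the proof, that $\hat I_T$ coincides with an ideal $\hat I'_T$ generated by one-step relations $\hat\sigma_{ij}-\hat\sigma_{ik}\hat\sigma_{kj}$ and linear forms $\hat\sigma_{ij}$ (the same cofactor-expansion-plus-parents-are-independent argument you give), and then simply observes that with such generators the map $\varphi$ of Theorem~\ref{thm:thm1} has trivial denominators, so $\hat S^{-1}\hat I_T\cap\CC[\hat\Sigma]=\hat I_T$ and smoothness follows from Theorem~\ref{thm:xhat}. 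You instead iterate those one-step relations to the closed-form path products $P_{ab}$ and identify $\hat I_T$ directly with the graph ideal $\hat L_T$; this is the same content, just made explicit rather than delegated to the machinery of Theorems~\ref{thm:thm1}--\ref{thm:xhat}.
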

This is analogous to primality of $J_T$ shown in \cite{sul08} (see Corollary 2.4 and Theorem 5.8). A direct consequence is that two tree models are isomorphic if and only if $ \hat{I}_T\cap \CC[\hat{\Sigma}]^{\Pi}=\hat{I}_{T'}\cap \CC[\hat{\Sigma}]^\Pi.$
It also follows from the proposition that $\hat{I}_T=\hat{J}_T$. 
\noindent
In II.E, we give a procedure to generate $\hat{I}_T$. To prove primality of $\hat{I}_T$, we introduce a second procedure that uses lower degree generators by modifying the first step:
\begin{itemize}
\item Set ${L}_0=\emptyset$.
\item Traverse the graph in order of the sort and set $K_i:=\mathbf{pa}(i)$ for all $i$. 
\item For all $j<i$, let $K'\in K_i$ be the smallest subset that $d$-separates $j$ and $i$. Set $L_i=L_{i-1}\cup \{f| f= \langle |\sigma_{iK',jK'}|\rangle\}$.
\item Output $\hat{I}'_G:=\sum_{f\in {L}_n} \langle f \rangle$.
\end{itemize}

Let us verify $\hat{I}_T=\hat{I}'_T$ for a tree model $T$. 
Suppose the claim holds on $n-1$ nodes 
and let the first procedure reach node $n$. Set $K=\mathrm{pa}(i)$ and let $k$ be the separator of nodes $i$ and $n$ for some $i<n$. The generated polynomial is 
\[
|\sigma_{ iK,nK}|=\left|\begin{array}{cc} \sigma_{in}& \sigma_{iK}  \\ \sigma_{Kn} & \sigma_{KK} \end{array} \right|.
\]
This can be expanded as
\begin{align*}
|\sigma_{ iK,nK}|&=\left|\begin{array}{cc} \sigma_{in}& \sigma_{ik}  \\ \sigma_{kn} & 1 \end{array} \right|+\sum_{j\in \{K-k\}} \sigma_{ij}f_j+\sum_{j,j'\in K} \sigma_{jj'}f_{jj'},
\end{align*}
which is in the ideal of the second procedure since $i$ is not connected to $K-k$ and the parents of $n$ are not connected either. Conversely, take the minor 
\[
\left|\begin{array}{cc} \sigma_{in}& \sigma_{ik}  \\ \sigma_{kn} & 1 \end{array} \right|
\]
produced by the second procedure and note that \mbox{$i\ci K-k$} and $k'\ci k''$ for distinct $k',k''\in K$. By the inductive hypothesis, the corresponding linear forms are in $I_T$.  Using the above expansion for $|\sigma_{ iK,nK}|$ one can see that this reduced minor is in the ideal of the first procedure as well. Thus the two ideals are equal when $T$ is a tree.  
The first procedure is easy to implement and is what we use to generate the ideals. The second procedure is easy to analyze and has useful properties that we exploit later. For instance, it shows that the ideal of imposed relations of a tree model is generated by quadratic polynomials of type $\sigma_{ij}-\sigma_{ik}\sigma_{kj}$ or linear forms $\sigma_{ij}$. We note that the equality of $\hat{I}_T$ and $\hat{I}'_T$ does not generalize to all DAGs. For instance, in the case of Example~\ref{ex:thm1}, $\hat{I}_T$ is not prime whereas $\hat{I}'_T$ is prime for all DAGs with $n\le 4$ nodes.

We are now ready to prove that the ideal of the imposed relations is prime for tree models:\\
\begin{proof}[Proof of Proposition~\ref{pro_prime}]
The generators of $\hat{I}_T$ are of the form either $\hat{\sigma}_{ik}-\hat{\sigma}_{ij}\hat{\sigma}_{jk}$ or $\hat{\sigma}_{ij}$. Then the rational map (\ref{eq:morp}) in the proof of Theorem~\ref{cor_conj} can be taken to be a polynomial map. In other words, $\hat{S}^{-1}\hat{I}_T\cap\mathbb{C}[\hat{\Sigma}]=\hat{I}_T$. It further follows that $\hat{X}_T$ is smooth.
\end{proof}

One can further show that for a certain lexicographic order the generators of $\hat{I}_G$ form a \gb basis. Set $d_{ij}=\min_{\pi \in \mathcal{D}(i,j|\emptyset)} |\pi| $, that is $d_{ij}$ is the length of the shortest d-path from $i$ to $j$. Note that if $i\ci j|k$ is an imposed relation, then $d_{ij}>1$. Order the variables as follows: $\sigma_{ij}\succ\sigma_{i'j'}$ if $d_{ij}>d_{i'j'}$ or $d_{ij}=d_{i'j'}$ but $(i,j)\succ_{\ZZ^2}(i'j')$, where $\succ_{\ZZ^2}$ is any order on $\ZZ^2$. We also set $\sigma_{ij}\succ 1$ for all variables. \\
Let $\alpha=(\alpha_{ij})$ be a vector. Denote by $\sigma^\alpha$ the monomial $\prod \sigma_{ij}^{\alpha_{ij}}$. Now define the relation $\sigma^\alpha\succ_\dag \sigma^\beta$ if the first non-zero coordinate of $\alpha-\beta$ is positive. Note that $\succ_\dag$ is a lexicographic order. 

The above order has a pleasant property: the leading monomial of quadratic relations $\sigma_{ij}-\sigma_{ik}\sigma_{kj}$ generated by CI relations of $T$ is always the linear form $\sigma_{ij}$. We use this to prove that the quadratic and linear imposed relations used to generate $\hat{I}'_T$ form a \gb basis w.r.t to this order:

\begin{proposition} The generators of $\hat{I}'_T$ form a \gb basis w.r.t the $\succ_\dag$ oder. 
\label{pro_gb}
\end{proposition}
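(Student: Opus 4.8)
The plan is to apply Buchberger's criterion together with the coprime–leading–terms shortcut: I will show that, for a tree, every generator of $\hat{I}'_T$ has leading monomial a single variable, that these variables are pairwise distinct, and hence that every $S$-polynomial among the generators reduces to zero automatically.

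First I would recall the precise shape of the generators produced by the second procedure. Since $T$ is a tree, for each pair $j<i$ with $i$ and $j$ non-adjacent the smallest subset $K'\subseteq\mathbf{pa}(i)$ that $d$-separates $i$ from $j$ is either empty or a single node $k$; accordingly the associated generator $g_{ij}$ is either the linear form $\hat{\sigma}_{ij}$ or the quadratic form $|\hat{\sigma}_{ik,jk}|=\hat{\sigma}_{ij}-\hat{\sigma}_{ik}\hat{\sigma}_{kj}$ (using $\hat{\sigma}_{kk}=1$). Thus the generating set consists of exactly one $g_{ij}$ for every non-adjacent pair $(i,j)$ with $j<i$.

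Next I would compute $\mathrm{lm}_{\succ_\dag}(g_{ij})$. For the linear generators there is nothing to do. For a quadratic generator $\hat{\sigma}_{ij}-\hat{\sigma}_{ik}\hat{\sigma}_{kj}$ I would use the tree hypothesis: in $T$ there is a unique path joining $i$ and $j$ and the separator $k$ lies on it, so the sub-paths $i$--$k$ and $k$--$j$ are strictly shorter than the $i$--$j$ path and, since $\hat{\sigma}_{ik}$ and $\hat{\sigma}_{kj}$ do not vanish identically, both are genuine $d$-paths. Hence $d_{ik}<d_{ij}$ and $d_{kj}<d_{ij}$, so $\hat{\sigma}_{ij}\succ\hat{\sigma}_{ik}$ and $\hat{\sigma}_{ij}\succ\hat{\sigma}_{kj}$ in the variable order. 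Because $\succ_\dag$ is the lexicographic extension of that variable order, the difference of the two exponent vectors of $g_{ij}$ is supported only on the coordinates of $\hat{\sigma}_{ij},\hat{\sigma}_{ik},\hat{\sigma}_{kj}$, and its first nonzero entry — the one indexed by the largest of these three variables, namely $\hat{\sigma}_{ij}$ — equals $+1$. Therefore $\hat{\sigma}_{ij}\succ_\dag\hat{\sigma}_{ik}\hat{\sigma}_{kj}$ and $\mathrm{lm}_{\succ_\dag}(g_{ij})=\hat{\sigma}_{ij}$, which is exactly the leading–term claim quoted just before the proposition.

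Finally, since each $\mathrm{lm}_{\succ_\dag}(g_{ij})$ is a single variable and distinct generators come from distinct non-adjacent pairs, any two generators have coprime leading monomials; by the product criterion (Buchberger's first criterion) every $S$-polynomial $S(g_{ij},g_{i'j'})$ reduces to $0$ modulo the pair $\{g_{ij},g_{i'j'}\}$, hence modulo the entire generating set, and Buchberger's criterion then yields that these generators form a \gb basis with respect to $\succ_\dag$. I expect the only step that is not purely formal to be the middle one, namely verifying $d_{ij}>\max\{d_{ik},d_{kj}\}$ so that $\succ_\dag$ really places $\hat{\sigma}_{ij}$ on top of $\hat{\sigma}_{ik}\hat{\sigma}_{kj}$; this is where the tree structure genuinely enters (uniqueness of the $i$--$j$ path, additivity of $d$-distance along it, and $k$ being one of its internal vertices), while everything around it is standard Gröbner-theoretic boilerplate.
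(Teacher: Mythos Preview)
Your proposal is correct and follows essentially the same route as the paper's proof: both identify that every generator has the single variable $\hat{\sigma}_{ij}$ as its leading monomial under $\succ_\dag$, and then use this to verify Buchberger's criterion. The only cosmetic difference is that you invoke the coprime--leading--terms product criterion, whereas the paper writes the $S$-polynomial of two quadratic generators explicitly and observes it reduces to zero against the pair; your version also spells out more carefully why $d_{ij}>\max\{d_{ik},d_{kj}\}$ using the tree structure, which the paper simply asserts just before the proposition.
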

\begin{proof}
Note that $\hat{I}'_G$ is generated by linear forms and quadratic relations of the form $\sigma_{ij}-\sigma_{ik}\sigma_{kj}$. Given two such polynomials, $f=\sigma_{ij}-\sigma_{ik}\sigma_{kj}$ and $f'=\sigma_{i'j'}-\sigma_{i'k'}\sigma_{k'j'}$, we can check that the leading terms are linear, and hence, the resulting $S$-polynomial
\[
S(f,f')=\sigma_{ik}\sigma_{kj}\sigma_{i'j'}-\sigma_{i'k'}\sigma_{k'j'}\sigma_{ij},
\]
reduces to zero w.r.t $\{\sigma_{ij}-\sigma_{ik}\sigma_{kj},\sigma_{i'j'}-\sigma_{i'k'}\sigma_{k'j'} \}$. 
\end{proof}
It was asked in \cite{sul08} (see Conjecture 5.9) if there exists a \gb basis consisting of square free terms of degree one and two for $J_T$. The above proposition shows that this is the case for $\hat{I}_T$ (or $\hat{J}_T$). 

Finally, using the procedure in section \ref{sec:rand}, we list the isomorphism classes of trees on $4,5,$ and $6$ nodes. See Figure~\ref{fig_classes}. Our computations are done in Magma.

\begin{figure}[t]
        \centering
\subfigure[$n=4$]{
        \centering
\includegraphics[height=0.12\textheight]{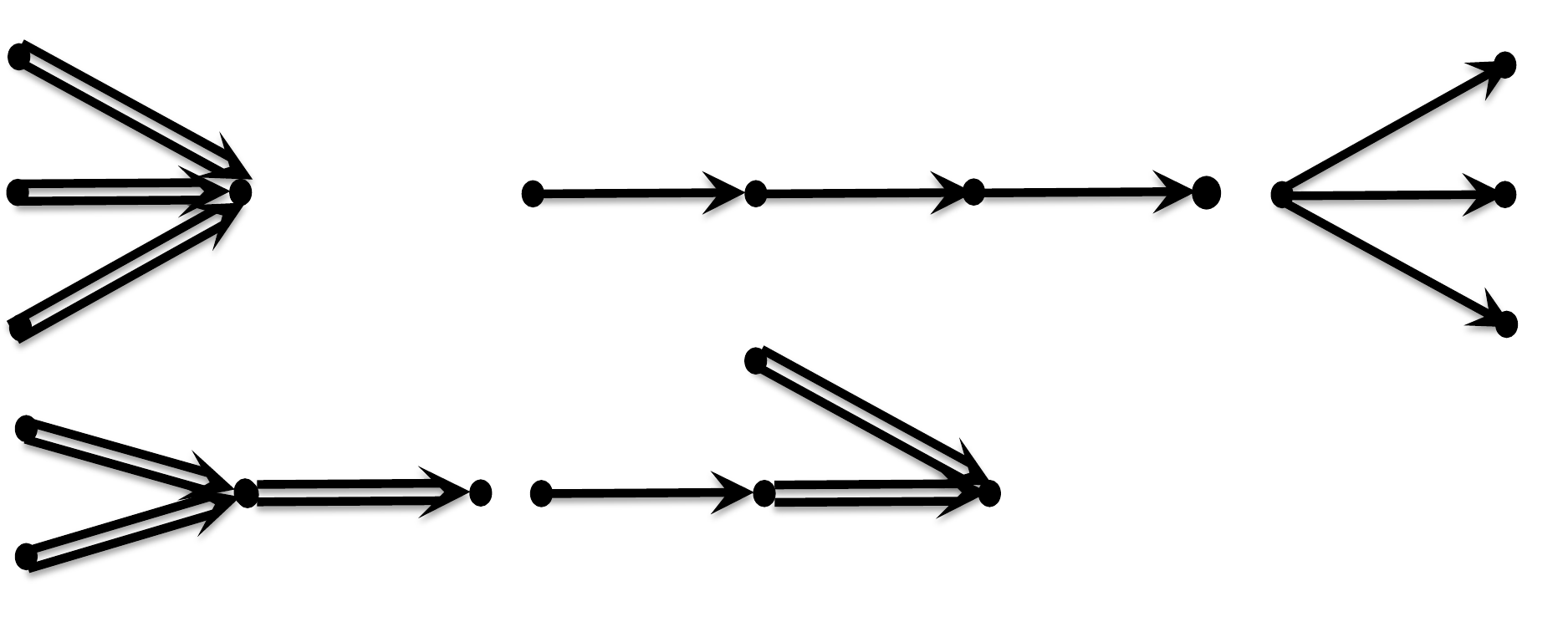} 
                }
\subfigure[$n=5$]{
        \centering
\includegraphics[height=0.28\textheight]{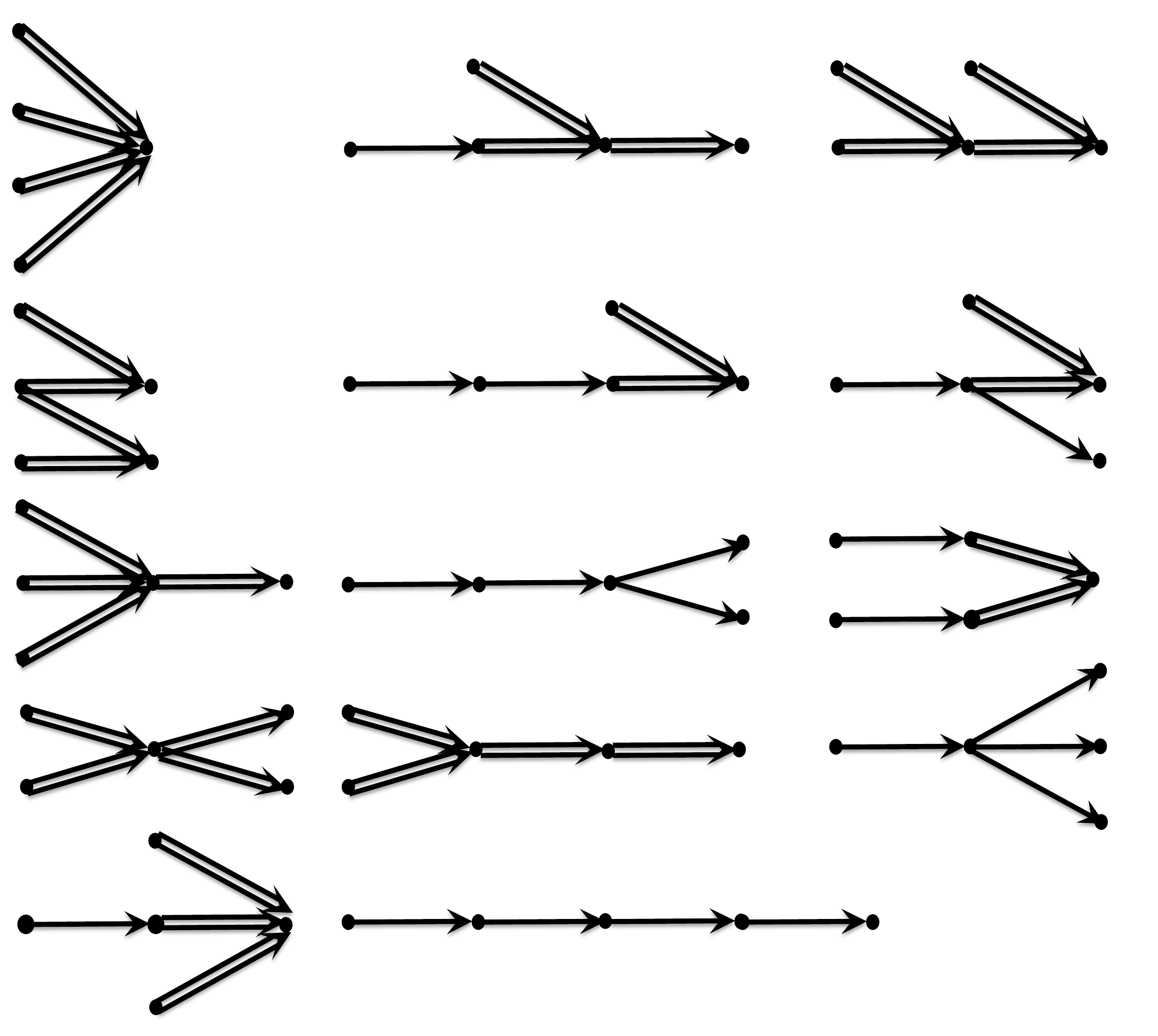} 
                }
                \subfigure[$n=6$]{
\centering
\includegraphics[height=0.45\textheight]{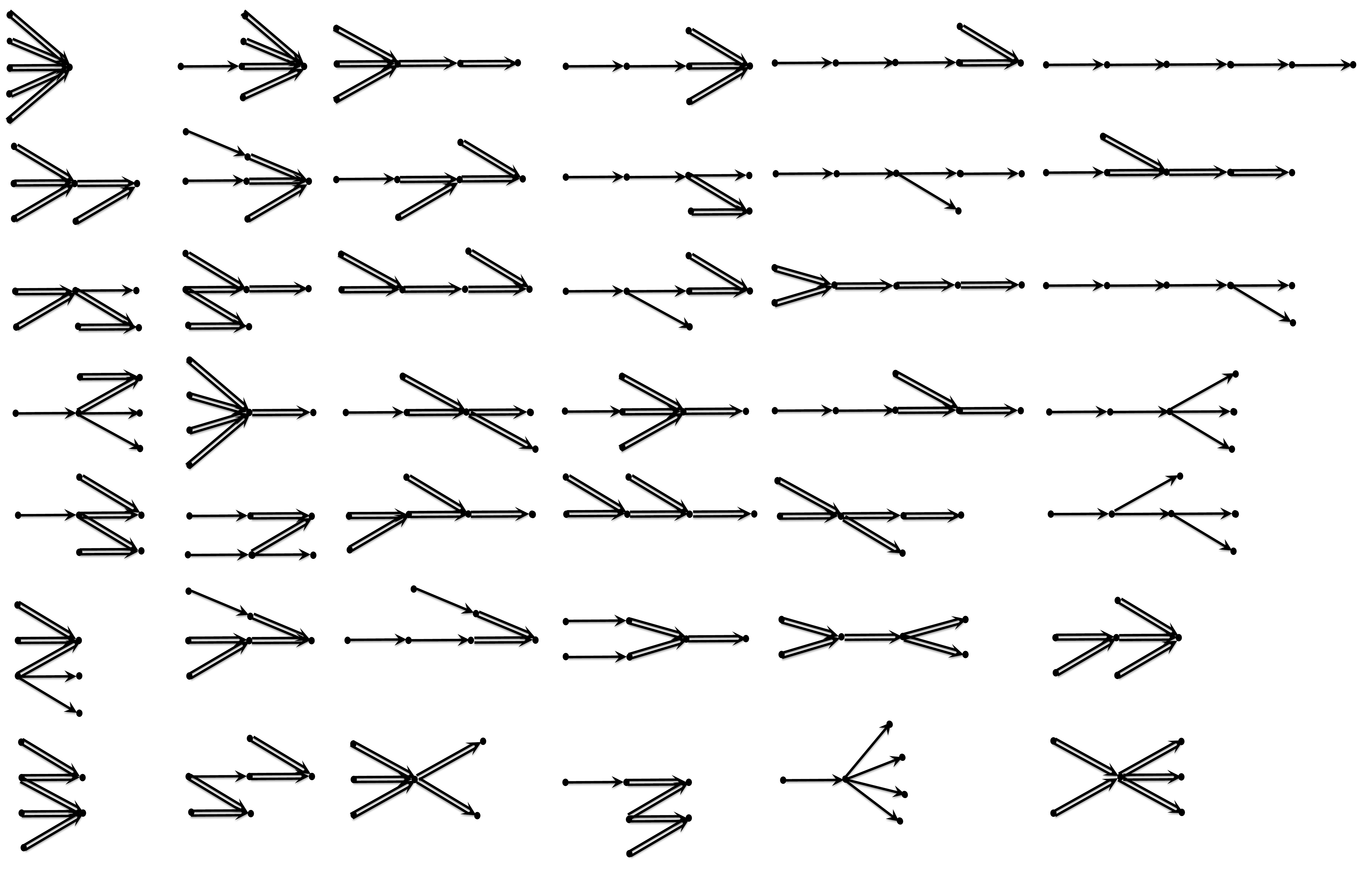} 
}
        \caption{Isomorphism classes of directed tree models on $4,5,$ and $6$ nodes. The double arrows are essential.}                 
                \label{fig_classes}
\end{figure}

\else
We use this theorem to list the isomorphism classes of trees on $4$ and $5$ nodes. See Figure~\ref{fig_classes}.

\begin{figure}[t]
        \centering
\subfigure[$n=4$]{
        \centering
\includegraphics[height=0.07\textheight]{figures/4treesarrows.pdf} 
                }
\subfigure[$n=5$]{
        \centering
\includegraphics[height=0.16\textheight]{figures/5treesarrows.pdf} 
                }
         \caption{Isomorphism classes of directed tree models on $4$ and $5$ nodes.} 
 
                \label{fig_classes}
\end{figure}
\fi


\section{Marginalization}

Exact inference on a DAG $G$ is the problem of extracting marginals of a $G$-compatible distribution. In this section, we establish a connection between the obstructions to embeddings of $\X_G$ and obstructions to efficient inference on $G$. 

\begin{definition} \begin{enumerate} 
\item Given a DAG $G$, we define marginalization of $G$ w.r.t $N$, denoted by $\El_N(G)$, to be the conditional independence model defined by the implied relations in $G$
\[
i\ci j|K
\]
such that $\{i,j,K\}\cap N=\emptyset$. 
\item Let $e=|N|$. Given a DAG $G$, define the extension $G^{e}$ of $G$ w.r.t $N$ to be the DAG on $n+e$ nodes whose implied relations are the same as $G$. 
\end{enumerate}
\end{definition}
Take for instance the DAG
\begin{center}
\tikzset{node distance=1.5cm, auto}
\begin{tikzpicture}[auto]
  \node (A0) {$1$};
  \node (A1) [left of=A0,node distance=0.5cm]{$G:$};
  \node (A3l) [right of=A0,node distance=1.5cm]{$$};
  \node (A3ul) [above of=A3l,node distance=1cm]{$2$};
  \node (A3dl) [below of=A3l,node distance=1cm]{$3$};
  \node (A4) [right of=A0,node distance=3cm]{$4$};
  \node (A5p) [right of=A4,node distance=1.5cm]{$$};
  \path [line] (A0) --node[above]{$$} (A3ul);  
  \path [line] (A3ul) --node[above]{$$} (A4);  
  \path [line] (A3dl) --node[above]{$$} (A4);  
  \path [line] (A0) --node[above]{$$} (A3dl);  
\end{tikzpicture}
\end{center}
To extract the marginal on nodes 1 and 3, one needs to eliminate nodes 2 and 4. A well known problem is that the order of elimination matters. For instance, eliminating node 4 leads to a ``sparse'' graph 

\begin{center}
\tikzset{node distance=1.5cm, auto}
\begin{tikzpicture}[auto]
  \node (A0){$$};
  \node (A1)[right of=A0,node distance=0.5cm]{$2$};
  \node (A2)[right of=A1,node distance=1.5cm] {$$};
  \node (A2u)[above of=A2,node distance=0.75cm] {$1$};
  \node (A3) [right of=A2,node distance=1.5cm]{$3$};
  \path [line] (A2u) --node[above]{$$} (A1);  
  \path [line] (A2u) --node[above]{$$} (A3);  
\end{tikzpicture}
\end{center}
whereas eliminating node 2 gives a ``dense'' graph

\begin{center}
\tikzset{node distance=1.5cm, auto}
\begin{tikzpicture}[auto]
  \node (A0){$$};
  \node (A1)[right of=A0,node distance=0.5cm]{$3$};
  \node (A2)[right of=A1,node distance=1.5cm] {$$};
  \node (A2u)[above of=A2,node distance=0.75cm] {$1$};
  \node (A3) [right of=A2,node distance=1.5cm]{$4$};
  \path [line] (A2u) --node[above]{$$} (A1);  
  \path [line] (A2u) --node[above]{$$} (A3);  
  \path [line] (A1) --node[above]{$$} (A3);  
\end{tikzpicture}
\end{center}
A fact is that the computational effort in sequential eliminations is controlled by the sparsity of the subgraphs that appear in the process. The problem is that there is no good way of finding the right elimination order. What is worse is that, by the recent results in \cite{kwi10}, it is even hard to say if such an order exists or not. The difficulty of inference in a DAG is controlled by the so called tree-widths of its underlying graph. This is hard to compute for large graphs, and it thus makes sense to settle for an easier question: {\em what are some necessary conditions for a good elimination order to exist?}
 
One observation is that having small tree-width implies that, under some suitable ordering, the graphs that appear in the elimination process have small tree-width as well, and one can take this as a measure of efficiency for inference. In the above example, we see that the marginals of interest in $G$ are supported on the (unlabeled) Markov chain:

\begin{center}
\tikzset{node distance=1.5cm, auto}
\begin{tikzpicture}[auto]
  \node (A0){$M:$};
  \node (A1)[right of=A0,node distance=0.5cm]{$\bullet$};
  \node (A2)[right of=A1,node distance=1.5cm] {$\bullet$};
  \node (A3) [right of=A2,node distance=1.5cm]{$\bullet$};
  \draw [line] (A1) --node[above]{$$} (A2);  
  \path [line] (A2) --node[above]{$$} (A3);  
\end{tikzpicture}
\end{center}

However, the same cannot be said about the complete graph on four nodes. It is thus useful to know how simple the space of models that support the marginals of a certain DAG can be.  This gives us a notion of complexity of marginals in DAGs and it is clear that this notion depends on the class of the model, and not on how the model is represented as a graph. We formalize this notion as follows:

\begin{definition} \begin{enumerate}[(a)] \item Suppose that the nodes in $M$ are a subset of the nodes in $G$. We say that $M$ lies below $G$ if 
\[
\forall Q_{YX}\in \mLoc(G) \quad\exists\quad P_{X}\in \mLoc(M)\quad s.t.\quad Q_{YX}(\mathcal{A})=\int_\mathcal{A} K_{Y|X}dP_X
\]
Likewise, we say that $G$ lies above $M$. 
\item $M$ is said to be minimal w.r.t the above property if 
\[
\mLoc(M')\not\subset \mLoc(M)
\]
for all $M'$ that satisfy (a).
\item The class $[M]$ is said to lie below $[G]$ if (a) holds after some relabeling of variables (in $G$ or $M$). 
\end{enumerate}
\end{definition}

In other words, $M$ lies below $G$ if the marginal of any $G$-compatible distribution on the subspace associated to $M$ factorizes w.r.t it. We note that the elimination of a DAG need not be a DAG (see \cite{ric02}), but in cases that it is, the notions of minimal model and elimination model coincide. 

The main question of interest is to decide when $[M]$ can lie below $[G]$. 
It is clear that an enumerative approach to answer this question is problematic, even for a small graph $M$, as the number of possibilities grow exponentially with the size of the eliminated subset (which grows as $M$ gets smaller). The next proposition gives a necessary condition. 

\begin{proposition}
Let $M,G$ be DAGs and suppose $e:=|V_G|-|V_M|\ge 0$. Then if $[M]$ lies below $[G]$, there exists an embedding\footnote{By an embedding we mean a closed immersion.} $\X_G\hookrightarrow \X_{M^e}$. 
\end{proposition}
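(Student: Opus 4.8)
The plan is to push the distributional ``marginalization'' hypothesis down to an inclusion of covariance loci and then propagate that inclusion through Zariski closures and the projective closure; the only inputs needed are Proposition~\ref{pro_minors} and the structural results of Section~II. First I would unwind the hypothesis: using the permutations allowed in the definition of ``$[M]$ lies below $[G]$'', I may assume that the nodes of $M$ are the first $n-e$ nodes of $G$, say $V_M=\{1,\dots,n-e\}$, with eliminated set $N=\{n-e+1,\dots,n\}$, and that $M$ lies below $G$ on the nose.

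Next I would argue on Gaussians. Fix $Q\in\mLoc(G)\cap\SN$ with covariance $\sigma$. Testing the identity $Q_{YX}(\cdot)=\int K_{Y|X}\,dP_X$ on events depending only on the $M$-coordinates forces $P_X$ to be the $M$-marginal of $Q$; hence that marginal is Gaussian and lies in $\mLoc(M)$, so its covariance $\sigma_{V_M V_M}$ lies in $\mloc(M)$. By Proposition~\ref{pro_minors}, every implied relation $I\ci J|K$ of $M$ (with $I,J,K\subseteq V_M$) is a rank condition on a submatrix of $\sigma_{V_M V_M}$, hence is met by $\sigma$; since $M^{e}$ is a DAG on $n$ nodes whose implied relations are exactly those of $M$, this says precisely that $\sigma\in\mloc(M^{e})$. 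As $Q$ ranged over all Gaussians in $\mLoc(G)$, we obtain $\mloc(G)\subseteq\mloc(M^{e})$. Intersecting with $\hat{\Sigma}$ and taking Zariski closures gives $\hat{X}_G=[\mloc(G)\cap\hat{\Sigma}]_Z\subseteq[\mloc(M^{e})\cap\hat{\Sigma}]_Z=\hat{X}_{M^{e}}$, and closing up inside $\mathbb{P}^{n \choose 2}$ yields $\X_G\subseteq\X_{M^{e}}$. Both sides are closed, in fact integral, subvarieties of $\mathbb{P}^{n \choose 2}$ by Section~II, so this inclusion is a closed immersion $\X_G\hookrightarrow\X_{M^{e}}$, which is the asserted embedding. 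I would also remark that $\X_{M^{e}}$ is well defined: any two DAGs on $n$ nodes with the same implied relations as $M$ have the same Gaussian locus, hence the same $\hat{X}$ and the same projective closure; and at least one such $M^{e}$ exists, for instance the extension obtained by appending the $e$ new nodes as a complete cascade above $M$, so that each new node is adjacent to every other node.

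I do not expect a serious obstacle; the substance of the proposition is simply that a fact about marginals of distributions becomes an honest containment of projective DAG varieties, and all the nontrivial algebra (irreducibility, $X_G=[\mloc(G)\cap\Sigma^{++}]_Z$, smoothness on $\Sigma^{\mdot}$, the saturation description) is already available. The two points needing a little care are the bookkeeping of the relabelling --- ensuring $M$ sits inside $M^{e}$ in the same way it sits inside $G$ --- and the routine $d$-separation check that the complete-cascade extension really carries exactly the implied relations of $M$: no new node can appear among the separated sets, since it is adjacent to everything, and conditioning on a new node can only open colliders there, so neither is any conditional independence created nor is one destroyed.
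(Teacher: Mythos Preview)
Your argument is correct and follows the same route as the paper's proof: the hypothesis forces the implied relations of $M$ to hold for the marginals of any $G$-compatible Gaussian, hence $\mloc(G)\subseteq\mloc(M^{e})$, and passing to $\hat\Sigma$, Zariski closure, and projective closure gives the closed immersion $\X_G\hookrightarrow\X_{M^{e}}$. Your additional remarks on the well-definedness and explicit construction of $M^{e}$, together with the $d$-separation verification for the complete-cascade extension, simply fill in details the paper leaves implicit.
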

\begin{proof}
If $M$ lies below $G$, then its implied relations are satisfied by the marginals in $\mLoc(G)$. This gives an inclusion $\mloc(G)\subset \mloc(M^e)$, which induces an embedding $\hat{X}_G\hookrightarrow \hat{X}_M$. This embedding extends to the projective closures.  
\end{proof}

\begin{example}
Let $G$ be the complete DAG on $n$ nodes and $M$ be a Markov chain on $m\le n$ nodes. It is clear, from dimension considerations, that $[M]$ cannot appear below $[G]$.
\end{example}

We use the following proposition later:
\begin{proposition}
If there exists an embedding of $\mathbb{P}^n\times\mathbb{P}^1$ into $\mathbb{P}^{m}$, then $m\ge 2n+1$.
\end{proposition}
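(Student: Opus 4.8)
The plan is to derive a contradiction from the existence of a closed immersion $\iota\colon\mathbb{P}^n\times\mathbb{P}^1\hookrightarrow\mathbb{P}^m$ with $m\le 2n$, using nothing more than the projective dimension theorem: two closed subvarieties of $\mathbb{P}^m$ whose dimensions add up to at least $m$ must intersect. (This is immediate from dimension theory on the affine cones over them in $\mathbb{A}^{m+1}$, whose intersection contains the origin and hence is positive-dimensional, so descends to a nonempty intersection in $\mathbb{P}^m$.)

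First I would fix two distinct points $p_0\neq p_1$ of $\mathbb{P}^1$ and consider the fibers $F_i:=\mathbb{P}^n\times\{p_i\}$ of the second projection. They are disjoint, and each is isomorphic to $\mathbb{P}^n$, so each is an irreducible variety of dimension $n$. Since $\iota$ is a closed immersion it is in particular injective and an isomorphism onto its (closed) image; hence the images $V_i:=\iota(F_i)$ are disjoint closed irreducible subvarieties of $\mathbb{P}^m$, each of dimension exactly $n$. Then I would apply the dimension theorem: as $\dim V_0+\dim V_1=2n\ge m$, the intersection $V_0\cap V_1$ would be nonempty, contradicting $V_0\cap V_1=\emptyset$. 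Therefore $m\ge 2n+1$. (The degenerate case $n=0$ is covered the same way, with $V_0,V_1$ two distinct points forcing $m\ge 1$.)

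I do not expect a genuine obstacle here: the only idea needed is to probe the embedding with the two ``horizontal'' copies of $\mathbb{P}^n$, after which the bound is forced by a one-line dimension count. If a more cohomological route were preferred, one could instead invoke a Lefschetz-type theorem: for a smooth $d$-dimensional subvariety $Y\subset\mathbb{P}^m$ the restriction $H^2(\mathbb{P}^m;\mathbb{Q})\to H^2(Y;\mathbb{Q})$ is an isomorphism as soon as $m\le 2d-2$ (Barth--Lefschetz), which for $Y=\mathbb{P}^n\times\mathbb{P}^1$ (so $d=n+1$) would force the rank-$1$ group $H^2(\mathbb{P}^m;\mathbb{Q})$ to be isomorphic to the rank-$2$ group $H^2(\mathbb{P}^n\times\mathbb{P}^1;\mathbb{Q})$, which is absurd. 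This is heavier machinery than the fiber argument, so I would present the latter.
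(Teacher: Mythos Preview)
Your argument is correct and is essentially the same as the paper's: both exhibit two disjoint fibers $\mathbb{P}^n\times\{p_i\}$ and invoke the fact that closed subvarieties of $\mathbb{P}^m$ whose dimensions sum to at least $m$ must meet (the paper phrases this as ``Bezout's theorem''), forcing $m\ge 2n+1$. The only difference is cosmetic naming of the intersection result; your additional Barth--Lefschetz remark is extra but not needed.
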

\begin{proof}
This is perhaps easiest to see from Bezout's theorem, which states any two closed sub-varieties of $\mathbb{P}^m$ of complementary dimension must have a non-empty intersection. In $\mathbb{P}^n\times\mathbb{P}^1$, there are $n$-dimensional hyperplanes ${\mathbb{P}^n\times \{pt\}}$ that do no intersect, and this remains true after embedding. Thus each hyperplane must have codimension at least $n+1$ in the ambient projective space. 
\end{proof}
This proposition shows that when $m=1$ the Segre embedding $\mathbb{P}^n\times\mathbb{P}^m\hookrightarrow \mathbb{P}^{nm+n+m+1}$
uses the minimal target dimension{\footnote{In the general case, there are embeddings $\mathbb{P}^n\times\mathbb{P}^m\hookrightarrow \mathbb{P}^{2(m+n)-1}$ and these can be shown to have the smallest target dimension.}}. This can be useful for our purposes.

\begin{example}
Let $G_n$ be the following family of DAGs
\begin{center}
\tikzset{node distance=1.5cm, auto}
\begin{tikzpicture}
  \node (A0){$1$};
  \node (A1) [right of=A0,node distance=1.5cm]{$$};
  \node (B0) [below of=A0,node distance=1.5cm] {$2$};
  \node (B1) [below of=A1,node distance=1.5cm]{$n+1$};
  \node (B2) [right of=B1,node distance=1.5cm]{$n+2$};
  \node (C0) [below of=B0,node distance=.5cm]{$\vdots$};
  \node (D0) [below of=C0,node distance=1.25cm]{$n$};

  \path [line] (A0) --node[above]{$$} (B1);  
  \path [line] (B0) --node[above]{$$} (B1);  
  \path [line] (D0) --node[above]{$$} (B1);  
  \path [line] (B1) --node[above]{$$} (B2);  
\end{tikzpicture}
\end{center}
and $M_n$ be a disconnected V-structure on $n+1$ nodes
\begin{center}
\tikzset{node distance=1.5cm, auto}
\begin{tikzpicture}
  \node (A0){$1$};
  \node (A1) [right of=A0,node distance=1.5cm]{$$};
  \node (B0) [below of=A0,node distance=1.5cm] {$2$};
  \node (B1) [below of=A1,node distance=1.5cm]{$n+1$};
  \node (C0) [below of=B0,node distance=.5cm]{$\vdots$};
  \node (D0) [below of=C0,node distance=1.25cm]{$n$};

  \path [line] (B0) --node[above]{$$} (B1);  
  \path [line] (D0) --node[above]{$$} (B1);  
\end{tikzpicture}
\end{center}

One can check that $\X_{G_n}\simeq\mathbb{P}^{n}\times\mathbb{P}^1$ and that $\X_{M_n^e}\simeq\mathbb{P}^{2n}$. Then $[M_n]$ does not lie below $[G_n]$ by the above proposition. This example is tight, in the sense that $[M_{n-1}]$ does lie below $[G_n]$ as eliminating the nodes $n+1$ and $n+2$ makes $G_n$ completely disconnected. In particular, $G_2$ does not lie above $\bullet\quad \bullet\to \bullet$.  
\end{example}

The examples above are special in that they all involve smooth projective varieties, and this is hardly an attribute of projective DAG varieties. Additional effort will be needed to deal with singularities. 
 
\begin{example}
Consider the DAG
\begin{center}
\tikzset{node distance=1.5cm, auto}
\begin{tikzpicture}[auto]
  \node (A0){$$};
  \node (A1)[right of=A0,node distance=1.5cm] {$$};
  \node (A3) [below of=A0,node distance=0.5cm]{$3$};
  \node (A3l) [left of=A3,node distance=1cm]{$$};
  \node (A3ul) [above of=A3l,node distance=0.75cm]{$1$};
  \node (A3dl) [below of=A3l,node distance=0.75cm]{$2$};
  \node (A4) [right of=A3,node distance=1.5cm]{$4$};
  \node (A5p) [right of=A4,node distance=1.5cm]{$$};
  \path [line] (A3) --node[above]{$$} (A4);  
  \path [line] (A3ul) --node[above]{$$} (A3);  
  \path [line] (A3dl) --node[above]{$$} (A3);  
  \path [line] (A3ul) --node[above]{$$} (A3dl);  
\end{tikzpicture}
\end{center}
We want to show that for all $n$, $[G]$ does not lie above $[M]$ where $[M]$ is the class of a V-structure $\bullet \rightarrow \bullet \leftarrow \bullet$. The extension $[M^e]$ of a V-structure to four nodes is the DAG
\begin{center}
\tikzset{node distance=1.5cm, auto}
\begin{tikzpicture}[auto]
  \node (A0){$$};
  \node (A1)[right of=A0,node distance=1.5cm] {$$};
  \node (A3) [below of=A0,node distance=0.5cm]{$\bullet$};
  \node (A3l) [left of=A3,node distance=1cm]{$$};
  \node (A3ul) [above of=A3l,node distance=0.75cm]{$\bullet$};
  \node (A3dl) [below of=A3l,node distance=0.75cm]{$\bullet$};
  \node (A4) [right of=A3,node distance=1.5cm]{$\bullet$};
  \node (A5p) [right of=A4,node distance=1.5cm]{$$};
  \path [line] (A3) --node[above]{$$} (A4);  
  \path [line] (A3ul) --node[above]{$$} (A3);  
  \path [line] (A3ul) --node[above]{$$} (A4);  
  \path [line] (A3dl) --node[above]{$$} (A4);  
  \path [line] (A3dl) --node[above]{$$} (A3);  
\end{tikzpicture}
\end{center}
Suppose that $[M]$ is below $[G]$ to obtain an embedding $\iota:\X_G\hookrightarrow \X_{M^e}$. We note that $\Pi_G$ is smooth everywhere except at the vertex of the affine cone over $\mathbb{P}^1\times\mathbb{P}^1$. Blowing up $\X_{M^e}\simeq\mathbb{P}^5$ at $\iota(p)$ gives a commutative diagram:
\begin{center}
\tikzset{node distance=1.5cm, auto}
\begin{tikzpicture}[auto]
  \node (A0){$\Bl_p(\X_G)$};
  \node (A0r)[right of=A0,node distance=2.5cm] {$\Bl_{\iota(p)}(\X_{M^e})$};
  \node (A1) [below of=A0,node distance=1.5cm]{$\X_G$};
  \node (A1r) [right of=A1,node distance=2.5cm]{$\X_{M^e}$};
  \path [line] (A0) --node[above]{$$} (A0r);  
  \path [line] (A0r) --node[above]{$$} (A1r);  
  \path [line] (A1) --node[above]{$$} (A1r);  
  \path [line] (A0) --node[above]{$$} (A1);  
\end{tikzpicture}
\end{center}
Since the property of being a closed immersion is stable under base change, it follows that the map $\Bl_p(\X_G)\to \Bl_{\iota(p)}(\mathbb{P}^5)$ is an embedding (see corollary II.7.15 in \cite{har77})
We note that $\Bl_p(\Pi_G)$ is smooth, and has the structure of a $\mathbb{P}^1$-bundle over $\mathbb{P}^2\times \mathbb{P}^1$. Likewise, the blow-up $\Bl_{\iota(p)}(\mathbb{P}^5)$ is a $\mathbb{P}^1$-bundle over $\mathbb{P}^4$. In other words, after blow ups, we get projective bundles over DAG varieties
\begin{center}
\tikzset{node distance=1.5cm, auto}
\begin{tikzpicture}[auto]
  \node (A0){$\Bl_p(\X_G)$};
  \node (A0r)[right of=A0,node distance=2.5cm] {$\Bl_{\iota(p)}(\X_{M^e})$};
  \node (A1) [below of=A0,node distance=1.5cm]{$\X_{G'}$};
  \node (A1r) [right of=A1,node distance=2.5cm]{$\X_{M'}$};
  \path [line] (A0) --node[above]{$$} (A0r);  
  \path [line] (A0r) --node[right]{$\mathbb{P}^1$-bundle} (A1r);  
  \path [line] (A1) --node[right]{$$} (A1r);  
  \path [line] (A0) --node[left]{$\mathbb{P}^1$-bundle} (A1);  
\end{tikzpicture}
\end{center}
where $G'$ looks like
\begin{center}
\tikzset{node distance=1.5cm, auto}
\begin{tikzpicture}[auto]
  \node (A0){$$};
  \node (A1)[right of=A0,node distance=1.5cm] {$$};
  \node (A3) [below of=A0,node distance=0.5cm]{$\bullet$};
  \node (A3l) [left of=A3,node distance=1cm]{$$};
  \node (A3ul) [above of=A3l,node distance=0.75cm]{$\bullet$};
  \node (A3dl) [below of=A3l,node distance=0.75cm]{$\bullet$};
  \node (A4) [right of=A3,node distance=1.5cm]{$\bullet$};
  \node (A5p) [right of=A4,node distance=1.5cm]{$$};
  \path [line] (A3) --node[above]{$$} (A4);  
  \path [line] (A3ul) --node[above]{$$} (A3);  
  \path [line] (A3dl) --node[above]{$$} (A3);  
\end{tikzpicture}
\end{center}
and $M'$ is
\begin{center}
\tikzset{node distance=1.5cm, auto}
\begin{tikzpicture}[auto]
  \node (A0){$$};
  \node (A1)[right of=A0,node distance=1.5cm] {$$};
  \node (A3) [below of=A0,node distance=0.5cm]{$\bullet$};
  \node (A3l) [left of=A3,node distance=1cm]{$$};
  \node (A3ul) [above of=A3l,node distance=0.75cm]{$\bullet$};
  \node (A3dl) [below of=A3l,node distance=0.75cm]{$\bullet$};
  \node (A4) [right of=A3,node distance=1.5cm]{$\bullet$};
  \node (A5p) [right of=A4,node distance=1.5cm]{$$};
  \path [line] (A3) --node[above]{$$} (A4);  
  \path [line] (A3ul) --node[above]{$$} (A4);  
  \path [line] (A3dl) --node[above]{$$} (A4);  
  \path [line] (A3dl) --node[above]{$$} (A3);  
\end{tikzpicture}
\end{center}
One can check that $\mathrm{Pic}(\Bl_{\iota(p)}(\X_{G}))=\mathbb{Z}^3$ while $\mathrm{Pic}(\Bl_p(\X_{M^e}))=\mathbb{Z}^2$. If an embedding exists, $\Bl(\X_{G})$ must be isomorphic to a smooth codimension one subvariety in $\Bl_p(\X_{M^e})$, which has Picard group isomorphic to $\mathbb{Z}^2$ by the Lefschetz's hyperplane theorem. This example shows that if two Markov chain relations fit in a DAG, they cannot combine to produce a pure independence relation. The fact that the relations fit in a DAG is essential here. 
\end{example}

It is common to encounter DAG varieties that do not have a simple geometric description as in the above families. One can work with numeric invariants in this case. 

\begin{example}
Let us check if the DAG
\begin{center}
\tikzset{node distance=1.5cm, auto}
\begin{tikzpicture}[auto]
  \node (A0){$$};
  \node (A1)[right of=A0,node distance=1.5cm] {$$};
  \node (A3) [below of=A0,node distance=0.5cm]{$4$};
  \node (A3l) [left of=A3,node distance=1cm]{$$};
  \node (A3ul) [above of=A3l,node distance=0.75cm]{$$};
  \node (A3dl) [below of=A3l,node distance=0.75cm]{$1$};
  \node (S) [left of=A3dl,node distance=0.5cm]{$G:$};
  \node (A3r) [right of=A3,node distance=1.5cm]{$2$};
  \node (A4) [below of=A3r,node distance=.75cm]{$5$};
  \node (A6) [right of=A4,node distance=1.5cm]{$6$};
  \node (A5p) [right of=A4,node distance=1.5cm]{$$};
  \node (B4) [below of=A3,node distance=1.5cm]{$3$};
  \path [line] (A3) --node[above]{$$} (A4);  
  \path [line] (A3r) --node[above]{$$} (A3);  
  \path [line] (A3dl) --node[above]{$$} (A3);  
  \path [line] (A3dl) --node[above]{$$} (B4);  
  \path [line] (B4) --node[above]{$$} (A3);  
  \path [line] (A4) --node[above]{$$} (A6);  
  \path [line] (B4) --node[above]{$$} (A4);  
  \path [line] (A3r) --node[above]{$$} (A6);  

\end{tikzpicture}
\end{center}
lies above $M:\bullet\to\bullet\to\bullet$. There is an embedding $\iota:\X_G\hookrightarrow \mathbb{P}^{15}$ defined by the ideal sheaf $\mathcal{I}_G$. If $[G]$ lies above $[M]$, then there must exist an irreducible quadric hyper-surface in $\mathbb{P}^{15}$ that contains the image of $\iota$. This means that the map $\Gamma(\mathbb{P}^{15},\mathcal{O}_{\mathbb{P}^{15}}(2))\to \Gamma(\mathbb{P}^{15},\iota_*\mathcal{O}_{\X_G}(2))$ has an irreducible quadratic polynomial in its kernel, which corresponds to a global section of $\mathcal{I}_G(2)$. Using Magma, we compute the Hilbert polynomial of $\mathcal{I}_G$ and 
conclude that $h^0\mathcal{I}_G(1)=2$ and  $h^0\mathcal{I}_G(2)=31$. Thus there are 2 linearly independent hyperplanes\footnote{Two hyperplanes are said to be linearly independent if they are defined by linearly independent sections of $O(1)$.} that contain the image of $\iota$. These in turn give 31 reducible (and thus no irreducible) quadrics containing the image. This shows that $[G]$ does not lie above $[M]$.  
\end{example}

\begin{proposition}
Let $M$ be a Markov chain on three nodes and $G$ be any DAG on $n$ nodes. Set $a_i:=h^0(\mathcal{I}_G(i))$.  Then $[G]$ lies above $[M]$ if and only if
\[
a_2-\frac{a_1}{2}(n^2-n-a_1+3)>0.
\]
\end{proposition}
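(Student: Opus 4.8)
The plan is to convert the geometric condition ``$[G]$ lies above $[M]$'' into the stated numerical inequality in two moves: first an elementary dimension count identifying $a_2-\tfrac{a_1}{2}(n^2-n-a_1+3)$ with the number of quadrics through $\X_G$ that are \emph{not} already forced by the linear forms vanishing on $\X_G$, and then an application of the marginalization Proposition above, which recognizes such a ``new'' quadric as the defining equation of a three-node Markov chain.

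\textbf{Step 1 (reduction to an essential quadric).} Put $N=\binom n2$, so $\X_G\subseteq\mathbb P^N$, and let $L_1,\dots,L_{a_1}$ be a basis of the space $H^0(\mathcal I_G(1))$ of linear forms vanishing on $\X_G$. Completing the $L_i$ to a linear coordinate system on $\mathbb P^N$, the degree-two piece of the ideal they generate has dimension $\binom{N+2}2-\binom{N-a_1+2}2=\tfrac{a_1}2(2N-a_1+3)$, which equals $\tfrac{a_1}2(n^2-n-a_1+3)$ since $2N=n(n-1)$. As $(L_1,\dots,L_{a_1})_2\subseteq H^0(\mathcal I_G(2))$, we obtain
\[
a_2-\tfrac{a_1}2(n^2-n-a_1+3)=\dim H^0(\mathcal I_G(2))-\dim\bigl((L_1,\dots,L_{a_1})_2\bigr)\ge 0,
\]
and this is positive exactly when $I(\X_G)$ contains a quadric lying outside $(L_1,\dots,L_{a_1})_2$. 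Any such quadric $q$ is automatically irreducible: if $q=\ell\ell'$, then irreducibility of $\X_G$ forces $\ell$ or $\ell'$ to vanish on $\X_G$, hence to lie in the span of $L_1,\dots,L_{a_1}$, whence $q\in(L_1,\dots,L_{a_1})_2$. So the inequality in the Proposition is equivalent to: \emph{$I(\X_G)$ contains an irreducible quadric not generated by $H^0(\mathcal I_G(1))$.}

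\textbf{Step 2 (the direction $\Leftarrow$).} Assume $[G]$ lies above $[M]$. By the marginalization Proposition there is, after a relabelling, an embedding $\X_G\hookrightarrow\X_{M^e}$ induced by an inclusion $\hat X_G\subseteq\hat X_{M^e}$, and $\hat X_{M^e}$ is exactly the hypersurface cut out by the quadratic minor $\hat\sigma_{ik}-\hat\sigma_{ij}\hat\sigma_{jk}$ expressing $i\ci k\mid j$ for the three relabelled nodes $i,j,k$. Its homogenization $\sigma_{00}\sigma_{ik}-\sigma_{ij}\sigma_{jk}$ is an irreducible quadric (of rank four) lying in $I(\X_G)$. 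One checks it is not generated by $H^0(\mathcal I_G(1))$: if it were, two of the three coordinates $\sigma_{ij},\sigma_{jk},\sigma_{ik}$ would vanish identically on $\X_G$, so that $i\ci k\mid j$ would be a formal consequence of marginal independences and the chain could be replaced by a strictly smaller model below $[G]$; for a three-node chain genuinely below $[G]$ the witnessing quadric is therefore new, and Step 1 gives the inequality.

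\textbf{Step 3 (the direction $\Rightarrow$, the main obstacle).} Conversely, from an irreducible quadric $q\in I(\X_G)$ not generated by $H^0(\mathcal I_G(1))$ I must exhibit a relabelling and a triple $i,j,k$ with $i\ci k\mid j\in\mathcal C_G$; then every $G$-compatible distribution satisfies $i\ci k\mid j$, so its marginal on $\{i,j,k\}$ is $M$-compatible, i.e.\ $[M]$ lies below $[G]$. The route I would take is to pull $q$ back along the rational parametrization $\varphi:\Sigma_{\edge}\to\Sigma$ of $\X_G$ from the proof of Theorem~\ref{thm:thm1}: by the trek rule, on the image each $\hat\sigma_{ij}$ is a sum over treks of products of edge parameters, and one needs to show that the only identities among these trek polynomials that remain genuinely quadratic modulo the linear relations $H^0(\mathcal I_G(1))$ are the single minors $\hat\sigma_{ij}=\hat\sigma_{ik}\hat\sigma_{jk}$ arising from a node $k$ that $d$-separates $i$ and $j$; in that case $i\ci k\mid j$ is implied, by Proposition~\ref{pro_minors} and faithfulness of Gaussians. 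Equivalently, one shows that the degree-two part of the saturated ideal $\hat{\mathfrak p}_G=\hat S^{-1}\hat J_G\cap\CC[\hat\Sigma]$, taken modulo its linear part, is spanned by the quadratic minors of implied relations of the form $i\ci j\mid k$. The hard part is controlling the low-degree behaviour of the saturation, i.e.\ ruling out ``accidental'' quadrics created by linear syzygies among the higher-degree minors; irreducibility of $\X_G$, together with the explicit trek/monomial form of $\varphi$, is what I would use to rule such quadrics out.
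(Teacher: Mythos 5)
The paper never actually proves this proposition: it is stated after the $6$-node example and the authors immediately defer the underlying questions to future work, so there is no argument of theirs to match yours against. Within your proposal, Step 1 is correct and is in fact a useful sharpening of the heuristic in the paper's example: the count $\tfrac{a_1}{2}(n^2-n-a_1+3)=\dim (L_1,\dots,L_{a_1})_2$ is right, and your observation that any quadric of $I(\X_G)$ outside $(L_1,\dots,L_{a_1})_2$ is automatically irreducible (by irreducibility of $\X_G$) repairs the loose ``reducible, hence no irreducible'' step in that example. But the remaining two steps do not constitute a proof.

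The main gap is Step 3: the claim that every quadric in $I(\X_G)_2\setminus(L_1,\dots,L_{a_1})_2$ forces some implied relation $i\ci k\mid j$ is precisely the content of the ``if'' direction, and you leave it as a program (``the hard part is controlling the low-degree behaviour of the saturation''). It is genuinely delicate, because the degree-two part of $\mathfrak p_G$ is not visibly spanned by chain-CI minors and linear multiples: for example, for the path $1\to2\to3\to4$ the tetrad $\sigma_{13}\sigma_{24}-\sigma_{14}\sigma_{23}$ lies in the ideal and is not the homogenization of any CI minor, so ``accidental'' quadrics of this kind must be shown to occur only alongside a chain CI --- exactly the point you defer. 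Step 2 also has a hole. First, your monomial-wise reasoning tacitly assumes $H^0(\mathcal I_G(1))$ is spanned by coordinate forms $\sigma_{ij}$ of implied marginal independencies; this is true but needs an argument (e.g.\ via the trek expansion). Second, the quadric $\sigma_{00}\sigma_{ik}-\sigma_{ij}\sigma_{jk}$ \emph{does} lie in $(L_1,\dots,L_{a_1})_2$ whenever $i\ci k$ and one of $i\ci j$, $j\ci k$ are implied, and in that situation $[M]$ still lies below $[G]$ under the paper's Definition (a)/(c), which imposes no minimality. Concretely, for the totally disconnected DAG on $n=3$ nodes one has $a_1=3$, $a_2=9$, so $a_2-\tfrac{a_1}{2}(n^2-n-a_1+3)=0$, yet mutual independence implies $1\ci3\mid2$, so $[G]$ lies above the chain while the inequality fails: under the literal definition the ``only if'' direction cannot be proved as stated. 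Your phrase ``for a three-node chain genuinely below $[G]$'' silently switches to the minimality notion of Definition (b) without justification; to complete the argument you would have to either add and use such a hypothesis (or otherwise dispose of the degenerate triples where the witnessing quadric is absorbed by the linear part), prove the structure of the linear part, and actually establish the Step 3 classification of quadrics in the saturated ideal.
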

This section justifies the following questions: 1) {\em what are some invariants of DAG varieties that are useful for ruling out embeddings of the above type?} 2) {\em how are such invariants related to the combinatorial data of the DAG?} 3) {\em when can the numbers $a_i$ be read from the Hilbert polynomial of $\mathcal{I}_G$?} We plan to come back to these questions in a future paper. 
\\

\bibliographystyle{IEEEtran}

\bibliography{IEEEabrv,biblio}

\begin{thebibliography}{10}
\providecommand{\url}[1]{#1}
\csname url@samestyle\endcsname
\providecommand{\newblock}{\relax}
\providecommand{\bibinfo}[2]{#2}
\providecommand{\BIBentrySTDinterwordspacing}{\spaceskip=0pt\relax}
\providecommand{\BIBentryALTinterwordstretchfactor}{4}
\providecommand{\BIBentryALTinterwordspacing}{\spaceskip=\fontdimen2\font plus
\BIBentryALTinterwordstretchfactor\fontdimen3\font minus
  \fontdimen4\font\relax}
\providecommand{\BIBforeignlanguage}[2]{{%
\expandafter\ifx\csname l@#1\endcsname\relax
\typeout{** WARNING: IEEEtran.bst: No hyphenation pattern has been}%
\typeout{** loaded for the language `#1'. Using the pattern for}%
\typeout{** the default language instead.}%
\else
\language=\csname l@#1\endcsname
\fi
#2}}
\providecommand{\BIBdecl}{\relax}
\BIBdecl

\bibitem{lau96}
S.~L. Lauritzen, \emph{{G}raphical {M}odels}.\hskip 1em plus 0.5em minus
  0.4em\relax Oxford University Press, 1996.

\bibitem{lev01}
M.~Levitz, M.~D. Perlman, and D.~Madigan, ``Separation and completeness
  properties for amp chain graph {M}arkov models,'' \emph{Ann. Stat.}, pp.
  1751--1784, 2001.

\bibitem{sul08}
S.~Sullivant, ``Algebraic geometry of {G}aussian {B}ayesian networks,''
  \emph{Adv. in Appl. Math.}, vol.~40, no.~4, pp. 482--513, 2008.

\bibitem{and97}
S.~A. Andersson, D.~Madigan, and M.~D. Perlman, ``A characterization of
  {M}arkov equivalence classes for acyclic digraphs,'' \emph{Ann. Stat.},
  vol.~25, no.~2, pp. 505--541, 1997.

\bibitem{chi02}
\BIBentryALTinterwordspacing
D.~M. Chickering, ``Learning equivalence classes of {B}ayesian-network
  structures,'' \emph{J. Mach. Learn. Res.}, vol.~2, pp. 445--498, Mar. 2002.
  [Online]. Available: \url{http://dx.doi.org/10.1162/153244302760200696}
\BIBentrySTDinterwordspacing

\bibitem{ste03}
B.~Steinsky, ``Enumeration of labelled chain graphs and labelled essential
  directed acyclic graphs,'' \emph{Discrete Mathematics}, vol. 270, no.~1, pp.
  267--278, 2003.

\bibitem{wag13}
S.~Wagner, ``Asymptotic enumeration of extensional acyclic digraphs,''
  \emph{Algorithmica}, vol.~66, no.~4, pp. 829--847, 2013.

\bibitem{gil01}
S.~B. Gillispie and M.~D. Perlman, ``Enumerating markov equivalence classes of
  acyclic digraph dels,'' in \emph{Proceedings of the Seventeenth conference on
  Uncertainty in artificial intelligence}.\hskip 1em plus 0.5em minus
  0.4em\relax Morgan Kaufmann Publishers Inc., 2001, pp. 171--177.

\bibitem{pen13}
J.~M. Pe{\~n}a, ``Approximate counting of graphical models via mcmc
  revisited,'' in \emph{Advances in Artificial Intelligence}.\hskip 1em plus
  0.5em minus 0.4em\relax Springer, 2013, pp. 383--392.

\bibitem{ver90}
\BIBentryALTinterwordspacing
T.~Verma and J.~Pearl, ``Equivalence and synthesis of causal models,'' in
  \emph{Proceedings of the Sixth Annual Conference on Uncertainty in Artificial
  Intelligence}, ser. UAI '90.\hskip 1em plus 0.5em minus 0.4em\relax New York,
  NY, USA: Elsevier Science Inc., 1991, pp. 255--270. [Online]. Available:
  \url{http://dl.acm.org/citation.cfm?id=647233.719736}
\BIBentrySTDinterwordspacing

\bibitem{mee95}
C.~Meek, ``Causal inference and causal explanation with background knowledge,''
  in \emph{Proceedings of the Eleventh conference on Uncertainty in artificial
  intelligence}.\hskip 1em plus 0.5em minus 0.4em\relax Morgan Kaufmann
  Publishers Inc., 1995, pp. 403--410.

\bibitem{std10}
M.~Studen{\`y}, R.~Hemmecke, and S.~Lindner, ``Characteristic imset: a simple
  algebraic representative of a {B}ayesian network structure,'' in
  \emph{Proceedings of the 5th European workshop on probabilistic graphical
  models}, 2010, pp. 257--264.

\bibitem{kwi10}
J.~Kwisthout, H.~L. Bodlaender, and L.~C. van~der Gaag, ``The necessity of
  bounded treewidth for efficient inference in bayesian networks.'' in
  \emph{ECAI}, vol. 215, 2010, pp. 237--242.

\bibitem{lnv07}
R.~Ln{\v{e}}ni{\v{c}}ka and F.~Mat{\'u}{\v{s}}, ``On {G}aussian conditional
  independence structures,'' \emph{Kybernetika}, vol.~43, no.~3, pp. 327--342,
  2007.

\bibitem{mat95}
F.~Mat{\'u}s and M.~Studen{\`y}, ``Conditional independences among four random
  variables {I},'' \emph{Combin. Prob. Comput.}, vol.~4, pp. 269--278, 1995.

\bibitem{mat95p}
F.~Mat{\'u}s, ``Conditional independences among four random variables {II},''
  \emph{Combin. Prob. Comput.}, vol.~4, pp. 407--418, 1995.

\bibitem{mat99}
F.~Mat{\'u}{\v{s}}, ``Conditional independences among four random variables
  {III}: final conclusion,'' \emph{Combin. Prob. Comput.}, vol.~8, no.~03, pp.
  269--276, 1999.

\bibitem{sim06p}
P.~{\v{S}}imecek, ``A short note on discrete representability of independence
  models,'' in \emph{Proccedings of the European Workshop on Probabilistic
  Graphical Models}.\hskip 1em plus 0.5em minus 0.4em\relax Citeseer, 2006, pp.
  287--292.

\bibitem{sim06}
------, ``Gaussian representation of independence models over four random
  variables,'' in \emph{COMPSTAT conference}, 2006.

\bibitem{drt08}
M.~Drton, B.~Sturmfels, and S.~Sullivant, \emph{Lectures on Algebraic
  Statistics}, ser. Oberwolfach Seminars.\hskip 1em plus 0.5em minus
  0.4em\relax Birkh{\"a}user Basel, 2008.

\bibitem{drt09}
M.~Drton and H.~Xiao, ``Smoothness of {G}aussian conditional independence
  models,'' \emph{Algebraic Methods in Statistics and Probability II}, vol.
  516, pp. 155--177, 2009.

\bibitem{mat97}
F.~Mat{\'u}{\v{s}}, ``Conditional independence structures examined via
  minors,'' \emph{Ann. Math. Artif. Intell.}, vol.~21, no.~1, 1997.

\bibitem{gar05}
L.~Garcia, M.~Stillman, and B.~Sturmfels, ``Algebraic geometry of {B}ayesian
  networks,'' \emph{J. of Symb. Comput.}, vol.~39, no.~3, pp. 331--355, 2005.

\bibitem{cin11}
E.~{\c{C}}inlar, \emph{Probability and Stochastics}.\hskip 1em plus 0.5em minus
  0.4em\relax Springer Science+ Business Media, 2011, vol. 261.

\bibitem{std05}
M.~Studeny, \emph{Probabilistic Conditional Independence Structures}.\hskip 1em
  plus 0.5em minus 0.4em\relax Springer, 2005.

\bibitem{uhl13}
C.~Uhler, G.~Raskutti, P.~B{\"u}hlmann, and B.~Yu, ``Geometry of the
  faithfulness assumption in causal inference,'' \emph{Ann. Stat.}, vol.~41,
  no.~2, pp. 436--463, 2013.

\bibitem{ati69}
M.~F. Atiyah and I.~G. Macdonald, \emph{Introduction to Commutative
  Algebra}.\hskip 1em plus 0.5em minus 0.4em\relax Springer, 1969, vol.~2.

\bibitem{mat05}
F.~Mat{\'u}{\v{s}}, ``Conditional independences in gaussian vectors and rings
  of polynomials,'' in \emph{Conditionals, Information, and Inference}.\hskip
  1em plus 0.5em minus 0.4em\relax Springer, 2005, pp. 152--161.

\bibitem{boc98}
J.~Bochnak, M.~Coste, and M.-F. Roy, \emph{Real Algebraic Geometry}.\hskip 1em
  plus 0.5em minus 0.4em\relax Springer Berlin, 1998, vol.~95.

\bibitem{ric02}
T.~Richardson and P.~Spirtes, ``Ancestral graph markov models,'' \emph{Annals
  of Statistics}, pp. 962--1030, 2002.

\bibitem{har77}
R.~Hartshorne, \emph{Algebraic Geometry}.\hskip 1em plus 0.5em minus
  0.4em\relax Springer, 1977, vol.~52.

\end{thebibliography}

\end{document}